\DeclareMathOperator*{\argmax}{argmax}
\DeclareMathOperator*{\arginf}{arginf}
\newtheorem{Lemma}{Lemma}
\newtheorem{Theorem}{Theorem}
\newtheorem{Definition}{Definition}
\newtheorem{Statement}{Statement}
\newtheorem{Remark}{Remark}
\newtheorem{Property}{Property}
\newtheorem{Assumption}{Assumption}
\begin{document}

\title{Dynamic Control of Tunable Sub-optimal Algorithms for Scheduling of Time-varying Wireless Networks}

\author{\authorblockN{Mahdi~Lotfinezhad, Ben~Liang, Elvino S. Sousa
}
\thanks{A short version of this submission is published in \cite{mahdi:iwqos08:m}.}
\thanks{The authors are with the Department of
Electrical and Computer Engineering, University of Toronto. E-mail:
\{mlotfinezhad, liang\}@comm.utoronto.ca,
es.sousa@utoronto.ca.}
\vspace{-.2in}}

\maketitle

\begin{abstract}
It is well known that for ergodic channel processes the Generalized Max-Weight Matching (GMWM) scheduling policy stabilizes the network for any supportable arrival rate vector within the network capacity region. This policy, however, often requires the solution of an NP-hard optimization problem. This has motivated many researchers to develop sub-optimal algorithms that approximate the GMWM policy in selecting schedule vectors.  One implicit assumption commonly shared in this context is that during the algorithm runtime, the channel states remain effectively unchanged.  This assumption may not hold as the time needed to select near-optimal schedule vectors usually increases quickly with the network size.  In this paper, we incorporate channel variations and the time-efficiency of sub-optimal algorithms into the scheduler design, to dynamically tune the algorithm runtime considering the tradeoff between algorithm efficiency and its robustness to changing channel states. Specifically, we propose a Dynamic Control Policy (DCP) that operates on top of a given sub-optimal algorithm, and dynamically but in a large time-scale adjusts the time given to the algorithm according to queue backlog and channel correlations. This policy does not require knowledge of the structure of the given sub-optimal algorithm, and with low overhead can be implemented in a distributed manner. Using a novel Lyapunov analysis, we characterize the throughput stability region induced by DCP and show that our characterization can be tight. We also show that the throughput stability region of DCP is at least as large as that of any other static policy. Finally, we provide two case studies to gain further intuition into the performance of DCP.
\end{abstract}

\thispagestyle{empty}

\begin{keywords}
Throughput stability region, dynamic tuning, channel variation, approximate GMWM time-efficiency 
\end{keywords}

\vspace{-.1in}
\section{Introduction}\label{sec:intro}

The problem of scheduling of wireless networks has been extensively investigated in the literature. A milestone in this context is
the seminal work by Tassiulas and Ephremides \cite{tassiulas:ephremides:jac:92}, where the authors characterized the \emph{network-layer capacity region} of constrained queueing systems, including wireless networks, and designed a \emph{throughput-optimal} scheduling policy, commonly referred to as the GMWM scheduling. In this context, capacity region by definition is the largest region that can be stably supported using any policy, including those with the knowledge of future arrivals and channel states. A throughout-optimal policy is a policy that stabilizes the network for any input rate that is within the capacity region and, thus, has the largest stable throughput region. In general \cite{neely:jsac05}\cite{eryilmaz:sirkant:ton:05}, the GMWM scheduling should maximize the sum of backlog-rate products at each timeslot given channel states, which can be considered as a GMWM problem. This problem has been shown to be, in general, complex and NP-hard \cite{lin:shroff:ton06}\cite{eryilmaz:sirkant:ton:05}\cite{mazumdar:shroff:06}. Even in those cases where the optimization problem can be solved polynomially, distributed implementation becomes a major obstacle. These issues, naturally, motivated researchers to study and develop suboptimal centralized or distributed algorithms that can stabilize a fraction of the network-layer capacity region \cite{wu:srikant:05}\cite{lin:shroff:ton06}\cite{mazumdar:shroff:06}\cite{mahdi:ben:ToN:09}\cite{shroff:mazummdar:info:07}.

One implicit but major assumption in this context is that the time required to find an appropriate scheduling vector, \emph{search-time}, is negligible compared to the length of a timeslot, or otherwise, during this search-time, channel states remain effectively unchanged. Since many algorithms take polynomial time with the number of users to output a solution \cite{lin:shroff:ton06}\cite{mazumdar:shroff:06}\cite{shroff:mazummdar:info:07}, we see that this assumption may not hold in practice for networks with large number of users. In particular, it is possible that once an optimal solution corresponding to a particular channel state is found, due to channel variations, it becomes outdated to the point of being intolerably far away from optimality.

 Intuitively, for many suboptimal algorithms, the solution found becomes a better and more \emph{efficient} estimate of the optimal solution as the number of iterations increases or more \emph{time} is given to the algorithm, e.g., see PTAS in \cite{mazumdar:shroff:06}. This inspires us to consider this \emph{time-efficiency} correspondence as a classifying tool for sub-optimal algorithms. %Examples where the sub-optimal algorithm is portrayed through time-efficiency concept include the cases where the max-weight problem admits  Polynomial-time Approximation Scheme (PTAS) \cite{mazumdar:shroff:06}, randomized algorithms \cite{tassiulas:infocom98}\cite{mahdi:ben:sousa:iw07} are used iteratively, or simply where the problem is continuous and has convexity properties.
 As mentioned earlier, however, the solution found might become outdated due to channel variations.
 This poses a challenging problem as how the search-time given to sub-optimal algorithms should be adjusted to ensure an efficient scheduling with a large stable throughput region when channels states are time-varying.

 %We therefore expect that for a given suboptimal algorithm, there should be an optimal point, in terms of the amount of time given to the algorithm, that maximizes the associated stability region. In general, we may assume that the optimal point varies according to the queue lengths in the network.

 Our work in this paper addresses the above challenge by joint consideration of channel correlation and  time-efficiency of sub-optimal algorithms. In particular, we propose a dynamic control policy (DCP) that operates on top of a given sub-optimal algorithm $A$, where the algorithm is assumed to provide an approximate solution to the GMWM problem. Our proposed policy dynamically tunes the length of scheduling frames as the search-time given to the algorithm $A$ so as to maximize the time average of backlog-rate product, improving the throughput stability region. This policy does not require the knowledge of input rates or the structure of the algorithm $A$, works with a general class of sub-optimal algorithms, and with low-overhead can be implemented in a distributed manner. We analyze the performance of DCP in terms of its associated throughput stability region, and prove that this policy enables the network to support all input rates that are within $\theta_{\infty}$-scaled version of the capacity region. The scaling factor $\theta_{\infty}$ is a function of the interference model, algorithm $A$, and channel correlation, and we prove that in general this factor can be tight. We also show that the throughput stability region of DCP is at least as large as the one for any other static scheme that uses a fixed frame-length, or search-time, for scheduling.

As far as we are aware, our study is the first that jointly incorporates the time-efficiency of sub-optimal algorithms and channel variations into the scheduler design and stability region analysis.
One distinguishing feature of our work, apart form its practical implications, is the use of a Lyapunov drift analysis that is based on a \emph{random } number of steps. Therefore, to establish stability results, we use a method recently developed for Markov chains \cite{fralix}, and modify it such that it is also applicable to our network model.

The rest of this paper is organized as follows. We review the related work in the next section. Network model including details of arrival and channel processes is presented in Section~\ref{sec:net}. Structures of the sub-optimal algorithms and DCP policy are discussed in Section~\ref{sec:dcp}. We then provide performance analysis and the related discussion in Section~\ref{sec:performance}, followed by two case studies in Section~\ref{sec:casestudy}. Finally, we conclude the paper in Section~\ref{sec:conclusion}.

\vspace{-.1in}
\section{Related Work}\label{sec:relatedw}
Previous work on throughput-optimal scheduling includes the studies in \cite{tassiulas:ephremides:jac:92}\cite{Shakkottai:Stolyar:00}\cite{neely:jsac05}\cite{yeh:berry:it:07}. In particular, in \cite{tassiulas:ephremides:jac:92}, Tassiulas and Ephremides characterized the throughput capacity region for multi-hop wireless networks, and developed the GMWM scheduling as a throughput-optimal scheduling policy. This result has been further extended to general network models with ergodic channel and arrival processes \cite{neely:jsac05}. Due to its applicability to general multi-hop networks, the GMWM scheduling has been employed, either directly or in a modified form, as a key component in different setups and many cross-layer designs. Examples include control of cooperative relay networks \cite{yeh:berry:it:07}, rate control \cite{neely:modiano:fair:ton:08}, energy
efficiency \cite{neely:opt:ene:delay:JIT07}\cite{wanshi:neely:it:08}, and congestion control
\cite{eryilmaz:sirkant:fair:ton:07}\cite{chen:doyle:infocom:06}. This scheduling policy has also inspired pricing strategies
maximizing social welfare \cite{neely:infocom07}, and fair resource allocation \cite{eryilmaz:sirkant:fair:ton:07}.

Another example of the
throughput optimal control is the exponential rule proposed in
\cite{Shakkottai:Stolyar:00}. In addition to the exponential rule scheduling, there are other approaches that use queue
backlog, either explicitly or implicitly, for scheduling  \cite{peter:activequeue} \cite{Jiang:EECS-2008-38}\cite{goyal:tran:it:08}. For instance, in \cite{peter:activequeue},
active queue management is used that implements CSMA protocol with backlog dependent transmission
probabilities. It is shown that such an approach can implement a distributed fair buffer. In one other work \cite{Jiang:EECS-2008-38}, an adaptive CSMA algorithm is proposed that iteratively adjust nodes' aggressiveness based on nodes' (simulated) queue backlog.

The GMWM scheduling despite its optimality, in every timeslot, requires the solution of the GMWM problem, which can be,
in general, NP-hard and Non-Approximable \cite{mazumdar:shroff:06}. Thus, many studies has focused on developing sub-optimal constant factor approximations to the GMWM scheduling.
One interesting study addressing the complexity issue is the work in
\cite{tassiulas:infocom98}, where sub-optimal algorithms are modeled as randomized algorithms, and it is shown that throughput-optimality can be achieved with linear complexity. In a more recent work \cite{modiano:sig06}, the authors propose distributed schemes
to implement a randomized policy similar to the one in \cite{tassiulas:infocom98} that can stabilize the entire capacity region. These results, however, assume non-time-varying channels. Other recent studies in \cite{eryilmaz:sirkant:ton:05}\cite{chaporkar:tran:control:08} generalize the approach in \cite{tassiulas:infocom98} to time-varying networks, and prove its throughput-optimality. This optimality, as expected, comes at the price of requiring excessively large amount of other valuable resources in the network, which in this case is memory storage. Specifically, the memory requirement in \cite{eryilmaz:sirkant:ton:05}\cite{chaporkar:tran:control:08} increases exponentially with the
number of users, making the generalized approach hardly amenable to
practical implementation in large networks.

Another example of sub-optimal approximation is the work in \cite{lin:shroff:ton06}, where the authors
assume that the controller can use
only an \emph{imperfect} scheduling component, and as an example they use maximal matching to design a distributed scheduling that is
within a constant factor of optimality. This scheduling algorithm under the name of \emph{Maximal Matching} (MM) scheduling and its variants have been widely studied in the literature \cite{wu:srikant:05}\cite{mazumdar:shroff:06}\cite{wu:sriknat:tmc:07}\cite{shroff:mazummdar:info:07}\cite{neely:info08}\cite{chaporkar:tran:it:08}.
In \cite{wu:srikant:05}\cite{lin:shroff:ton06}, it is shown that under simple interference models, MM scheduling can achieve a throughput (or stability region) that is at least
half of the throughput achievable by a throughput-optimal algorithm (or the capacity region).
Extended versions of these results for more general interference models are presented in \cite{mazumdar:shroff:06}\cite{shroff:mazummdar:info:07}, where in \cite{shroff:mazummdar:info:07} randomized distributed algorithms are proposed for implementing MM scheduling, being a constant factor away from the optimality.
This result has been further strengthened recently \cite{joo:shroff:infocom:08} stating that the worst-case efficiency ratio of Greedy Maximal Matching scheduling in geometric network graphs under the $\kappa$-hop interference model is between $1/6$ and $1/3$. All of the mentioned proposals so far either do not consider channel variations, or assume the search-time is relatively small compared to the length of a timeslot.

The closest work to ours in this paper is \cite{mahdi:ben:ToN:09}, where based on the linear-complexity algorithm in \cite{tassiulas:infocom98}, the impact of channel memory on the stability region of a general class of sub-optimal algorithms is studied. Despite its consideration for channel variations, this work still does not model the search-time, and implicitly assumes it is negligible.

In this paper, we consider the problem of scheduling from a new perspective. We assume a sub-optimal algorithm $A$ is given that can approximate the solution of the GMWM problem, and whose efficiency
naturally improves as the search-time increases. We then devise a dynamic control policy which tunes the search-time,
as the length of scheduling frames, according to queue backlog levels in the network, and also based on channel correlations. As far as we are aware, our study is the
first that explicitly models the time-efficiency of sub-optimal approaches, and uses this concept along with channel correlation in the scheduler design.
\vspace{-.05in}
\section{Network Model}\label{sec:net}
We consider a wireless network with $N$ one-hop source-destination pairs, where each pair represents a data flow\footnote{ Extension to multi-hop flows is possible using the methods in \cite{tassiulas:ephremides:jac:92}\cite{neely:jsac05}.}.
Associated with each data flow, we consider a separate queue, maintained at the source of the flow, that holds packets to be
transmitted over a wireless link. Examples of this type of network
include downlink or uplink of a cellular or a mesh network.

\vspace{-.15in}
\subsection{Queueing}\label{sec:net:queue}

 We assume the system is time-slotted, and
channels hold their state during a timeslot but may change from one
timeslot to another. Let $\mathbf{s}(t)$ be the matrix of all channels states from any given node $i$ to any other node $j$ in the network at time $t$. For instance, when the network is the downlink or uplink of a cellular network, $\mathbf{s}(t)$ will reduce to the vector
of user-base-station channel states, i.e., $\mathbf{s}(t)=(s_{1}(t), \dots, s_{N}(t))$, where $s_{i}(t)$ is the state of the $i_{\text{th}}$ link (corresponding to the $i_{\text{th}}$ data flow) at time $t$. Throughout the chapter, we use bold face to denote vectors or matrices. Let
$\mathcal{S}$ represent the set of all possible channel state matrices with
finite cardinality $|\mathcal{S}|$. Let $D_{i}(t)$ denote the rate over the
$i_{\text{th}}$ link corresponding to the $i_{\text{th}}$ data flow at time $t$, and $\mathbf{D}(t)$ be the corresponding vector of rates, i.e., $\mathbf{D}(t)=(D_{1}(t), \dots, D_{N}(t))$. In addition, let $I_{i}(t)$ represent the amount of resource used by the $i_{\text{th}}$ link at time $t$, and $\mathbf{I}(t)$ be the corresponding vector, i.e., $\mathbf{I}(t)=(I_{1}(t), \cdots, I_{N}(t))$. The vector $\mathbf{I}(t)$ contains both scheduling and resource usage information, and hereafter, we refer to it simply as the schedule vector. Let $\mathcal{I}$ denote the set containing all possible schedule vectors, with finite cardinality $|\mathcal{I}|$.

%  We assume the system is time-slotted, and
% channels hold their state during a timeslot but may change from one
% timeslot to another. Let the state of the $i_{\text{th}}$ link (channel) at time $t$ be $s_{i}(t)$, and $\mathbf{s}(t)$ be vector
% of channel states, i.e., $\mathbf{s}(t)=(s_{1}(t), \dots, s_{N}(t))$. Throughout the paper, we use
% bold face to denote vectors. Let
% $\mathcal{S}$ represent the set of all possible channel state vectors with
% finite cardinality $|\mathcal{S}|$. Let $D_{i}(t)$ denote the rate over the
% $i_{\text{th}}$ link at time $t$, and $\mathbf{D}(t)$ be the corresponding vector of rates, i.e., $\mathbf{D}(t)=(D_{1}(t), \dots, D_{N}(t))$. In addition, let $I_{i}(t)$ represent the amount of resource used by the $i_{\text{th}}$ link at time $t$, and $\mathbf{I}(t)$ be the corresponding vector, i.e., $\mathbf{I}(t)=(I_{1}(t), \cdots, I_{N}(t))$. The vector $\mathbf{I}(t)$ contains both scheduling and resource usage information, and hereafter, we refer to it simply as the schedule vector. Let $\mathcal{I}$ denote the set containing all possible schedule vectors, with finite cardinality $|\mathcal{I}|$.

Note that the exact specification of the scheduling vector $\mathbf{I}(t)$ is system dependent. For instance, in CDMA systems, it may represent the vector of power levels associated with wireless links; in OFDMA systems, it may represent the number of sub-channels allocated to each physical link; and when interference is modeled as the K-hop interference model \cite{mazumdar:shroff:06}, the vector can be a link activation vector representing a sub-graph in the network. Assuming that transmission rates are completely characterized given channel states, the schedule vector, and the interference model, we have
\begin{align*}
\mathbf{D}(t)=\mathbf{D}(\mathbf{s}(t),\mathbf{I}(t)).
\end{align*}
We assume that transmission rates are bounded, i.e., for all $\mathbf{s}\in\mathcal{S}$ and $\mathbf{I}\in\mathcal{I}$,
\begin{align*}
D_{i}(\mathbf{s},\mathbf{I})<D_{max},  \ 1\leq i\leq N,
\end{align*}
for some large $D_{max}>0$.

Let $A_{i}(t)$ be the number of packets arriving in timeslot $t$ associated with the $i_{\text{th}}$ link (or data flow), and $\mathbf{A}(t)$ be the vector of arrivals, i.e.,
$\mathbf{A}(t)=(A_{1}(t), \cdots, A_{N}(t))$. We assume arrivals are i.i.d.\footnote{This assumption is made to simplify the analysis, and our results can be extended to non i.i.d arrivals.} with mean vector
\begin{align*}
\mathds{E}[\mathbf{A}(t)]=\mathbf{a}=(a_{1}, \dots, a_{N}),
 \end{align*}
and bounded above:
\begin{align*}
A_{i}(t)<A_{max}, \ 1\leq i\leq N,
\end{align*}
for some large $A_{max}$.

Finally, let $\mathbf{X}(t)=(X_{1}(t), \dots, X_{N}(t))$ be the vector of queue lengths, where $X_{i}(t)$ is the queue length associated with the $i_{\text{th}}$ link (or data flow). Using the preceding definitions, we see that $\mathbf{X}(t)$ evolves according to the following equation
\begin{eqnarray*}
\mathbf{X}(t+1)=\mathbf{X}(t)+\mathbf{A}(t)-\mathbf{D}(t)+\mathbf{U}(t),
\label{X(t+1)}%\label{backlog:eq}
\end{eqnarray*}
 where $\mathbf{U}(t)$ represents the
wasted service vector with non-negative elements; the service is
wasted when in a queue the number of packets waiting for
transmission is less than the number that can be transmitted, i.e.,
when $X_{i}(t)<D_{i}(t)$.
\vspace{-.15in}
\subsection{Channel State Process}\label{sec:channelp}
We assume the channel state process is stationary and ergodic. In particular, for all $\mathbf{s}\in \mathcal{S}$, as $k \to \infty$, we have
\begin{align*}
\frac{1}{k}\sum_{i=0}^{k-1}\mathbf{1}_{\mathbf{s}(t+i)=\mathbf{s}} \to \pi(\mathbf{s}), \ \ a.s.,
\end{align*}
where $\mathbf{1}_{(\cdot)}$ denotes the indicator function associated with a given event, and $\pi(\mathbf{s})$ is the steady-state probability of state $\mathbf{s}$. Let $\mathcal{P}_{t}$ represent the past history of the channel process and be defined by $\mathcal{P}_{t}=\{\mathbf{s}(i); 0\leq i \leq t\}$. The above almost surely convergence implies that for
 any $\epsilon>0$ and $\zeta>0$, we can find a sufficiently large $K_{\epsilon,\zeta,t}>0$ such that \cite{borovkov}
\begin{align}
P\Big(\sup_{k>K_{\epsilon,\zeta,t}} \big|\frac{1}{k}\sum_{i=0}^{k-1}\mathbf{1}_{\mathbf{s}(t+i)=\mathbf{s}}-\pi(\mathbf{s})\big|>\epsilon\ \ \big|
 \mathcal{P}_{t}\Big)<\zeta. \label{channel}
\end{align}
We assume that the almost surely convergence is \emph{unform} in the past history and $t$ in the sense that regardless of $\mathcal{P}_{t}$ and $t$, there exists a $K_{\epsilon,\zeta}$ such that (\ref{channel}) holds with $K_{\epsilon,\zeta,t}=K_{\epsilon,\zeta}$\footnote{Examples of this channel model include but are not limited to Markov chains.}.
\vspace{-.15in}
\subsection{Capacity Region}
In our context, capacity region, denoted by $\Gamma$, is defined as the closure of the set of all input rates that can be
\emph{stably} supported by the network using any scheduling
policy including those that use the knowledge of future arrivals and channel states.
In \cite{tassiulas:ephremides:jac:92}\cite{tassiulas:infoth:97} and
recently under general conditions in \cite{neely:jsac05}, it has
been shown that the capacity region $\Gamma$ is given by
 \begin{eqnarray}
\Gamma=\sum_{\mathbf{s}\in \mathcal{S}} \pi(\mathbf{s})\
\text{Convex-Hull}\{\mathbf{D}(\mathbf{s},\mathbf{I})|\mathbf{I}\in\mathcal{I}\}. \nonumber
 \end{eqnarray}
\vspace{-.1in}
\section{Dynamic Control Policy}\label{sec:dcp}
As mentioned in the introduction, DCP controls and tunes the search-time given to a sub-optimal algorithm to improve the stability region. The considered sub-optimal algorithms are assumed to provide a sub-optimal solution to the GMWM problem. In the following, we first elaborate on the structure of the sub-optimal algorithms, and then, describe the operation of DCP.
\vspace{-.15in}
\subsection{Sub-optimal Algorithms Approximating GMWM Problem}\label{sec:algorithmA}
It is well known that the GMWM scheduling is throughput-optimal in that it stabilizes the network for all input rates interior to capacity region $\Gamma$. This policy in each timeslot uses the schedule vector $\mathbf{I}^{*}(t)$ that is argmax to the following GMWM problem:
\begin{align}
&\max\sum_{l=1}^{N}X_{l}(t)D_{l}(\mathbf{s}(t), \mathbf{I}) ,
\ \quad \text{subject to } \mathbf{I}\in \mathcal{I}.
\label{Dstar}
\end{align}
However, as mentioned in Section~\ref{sec:intro}, this optimization problem can be in general NP-hard. We therefore assume that there exists an algorithm $A$ that can provide suboptimal solutions to the max-weight problem given in (\ref{Dstar}).
To characterize the structure of algorithm $A$, let $\mathbf{I}^{*}(\mathbf{X},\mathbf{s})$ be the argmax to (\ref{Dstar}) by setting $\mathbf{X}(t)=\mathbf{X}$ and $\mathbf{s}(t)=\mathbf{s}$. Thus,
\begin{align*}
\mathbf{I}^{*}(\mathbf{X},\mathbf{s})=\argmax_{\mathbf{I}\in \mathcal{I}} \mathbf{X}\mathbf{D}(\mathbf{s}, \mathbf{I}),
\end{align*}
where $\mathbf{X}\mathbf{D}(\mathbf{s}, \mathbf{I})$ is the scalar product of the two vectors, and for ease of notation, we have dropped the transpose symbol required for $\mathbf{D}(\mathbf{s}, \mathbf{I})$. In the rest of this paper, we use the same method to show the scalar products. Associated with $\mathbf{I}^{*}(\mathbf{X},\mathbf{s})$, let $\mathbf{D}^{*}(\mathbf{X},\mathbf{s})$ be defined as
\begin{align}
\mathbf{D}^{*}(\mathbf{X},\mathbf{s})=\mathbf{D}(\mathbf{s},\mathbf{I}^{*}(\mathbf{X},\mathbf{s})).\label{dstarv}
\end{align}
Thus, $\mathbf{D}^{*}(\mathbf{X},\mathbf{s})$ is the optimal rate, in the sense of (\ref{Dstar}), when the backlog vector is $\mathbf{X}$ and the channel state is $\mathbf{s}$.

Let $\mathbf{I}^{(n)}$ be the \emph{output} schedule vector of algorithm $A$ when it is given an amount of time equal to $n$ timeslots, $\mathbf{X}(t)=\mathbf{X}$, and $\mathbf{s}(t)=\mathbf{s}$. We therefore assume that the time given to algorithm $A$ can be programmed or tuned as desired, or simply, the algorithm can continue or iterate towards finding better solutions over time. We assume that $\mathbf{I}^{(n)}$ is in general a random vector with distribution $\mu_{\mathbf{X},\mathbf{s}}^{(n)}$.
Since the objective function in (\ref{Dstar}) is a continuous function of $\mathbf{X}(t)$, we naturally assume that algorithm $A$ characterized by the distribution of $\mathbf{I}^{(n)}$, %$\{\mu_{\mathbf{X},\mathbf{s}}^{(n)}, \ n=1,2,3,\cdots, \}$,
for all $n\geq 1$, and all values of $\mathbf{X}$ and $\mathbf{s}$, has the following property:
 \begin{Assumption}\label{p1}
 For all $\mathbf{I}\in \mathcal{I}$, $\mathbf{s} \in \mathcal{S}$, and $n$, we have that
 \begin{eqnarray}
|\mu_{\mathbf{X}_{1},\mathbf{s}}^{(n)}(\mathbf{I}^{(n)}=\mathbf{I})-\mu_{\mathbf{X}_{2},\mathbf{s}}^{(n)}(\mathbf{I}^{(n)}=\mathbf{I})|\to 0, \nonumber
\end{eqnarray}
as $\mathbf{X}_{1}\to \mathbf{X}_{2}$. In addition, assuming and keeping $\|\mathbf{X}_{1}-\mathbf{X}_{2}\|<C$ for a given $C>0$, the above convergence also holds when $\|\mathbf{X}_{1}\| \to \infty$. Moreover, the convergence becomes equality if $\mathbf{X}_{1}= \beta \mathbf{X}_{2}$, for some $\beta>0$.
 \end{Assumption}

In the following, we discuss concrete models that provide further details on the structure of algorithm $A$. Note that these models serve only as examples, and our results do not depend on any of these models; what required is only Assumption~\ref{p1}.

The first model arises from the intuition that the distribution $\mu^{(n)}_{\mathbf{X},\mathbf{s}}$ should improve as $n$ increases. More precisely, we can define the sequence
$\{\mu_{\mathbf{X},\mathbf{s}}^{(n)}, \ n=1,2,3,\cdots\}$ to be an \emph{improving} sequence if for all $n>1$,
\begin{align*}
\mathds{E}[\mathbf{X}\mathbf{D}(\mathbf{s},\mathbf{I}^{(n)})]\geq \mathds{E}[\mathbf{X}\mathbf{D}(\mathbf{s},\mathbf{I}^{(n-1)})]\geq \cdots \geq \mathds{E}[\mathbf{X}\mathbf{D}(\mathbf{s},\mathbf{I}^{(1)})].
\end{align*}
The first model uses the above and defines a \emph{natural} algorithm to be the one for which the above inequalities hold for all values of $\mathbf{X}$ and $\mathbf{s}$.

As for the second model, we may have that $\mathbf{I}^{(n)}$ is such that
\begin{align}
\mathbf{X}\mathbf{D}(\mathbf{s},\mathbf{I}^{(n)})\geq g(n)\mathbf{X}\mathbf{D}(\mathbf{s},\mathbf{I}^{*}(\mathbf{X},\mathbf{s})),\label{algorithma}
\end{align}
where the function $g(n)$ is a non-decreasing function of $n$, and less than or equal to one. For instance, if the optimization problem can be approximated to a convex problem \cite{boyd}, then $g(n)=\xi(1-\zeta^{n})$, where $0<\xi\leq 1$ and $0\leq \zeta <1$. Another possible form for $g(n)$ is
\begin{align*}
\Big(1-\beta\frac{\ln N}{\ln n}\Big),
\end{align*}
where $\beta$ is a positive constant. This form of $g(n)$ may stem from cases where the optimization problem associated with (\ref{Dstar}) admits Polynomial-Time Approximation Scheme (PTAS) \cite{mazumdar:shroff:06}.

The last model that we consider is a generalization of the previous model, where we assume that (\ref{algorithma}) holds with probability $h(n)$ as a non-decreasing function of $n$. This specification can model algorithms that use randomized methods to solve (\ref{Dstar}), and without its consideration for the improvement over $n$, is similar to the ones developed in \cite{tassiulas:infocom98}\cite{mahdi:ben:ToN:09}.
\vspace{-.15in}
\subsection{Dynamic Control Policy and Scheduling}\label{sec:dcp:sch}
The dynamic control policy in this paper interacts with scheduling component, and through some measures, which will be defined later, dynamically tunes the time spent by the scheduler, or more precisely algorithm $A$, to find a schedule vector. In what follows, we describe the joint operation of DCP and the scheduler.

\begin{figure}[t]
\centering
\includegraphics[width=2.8in]{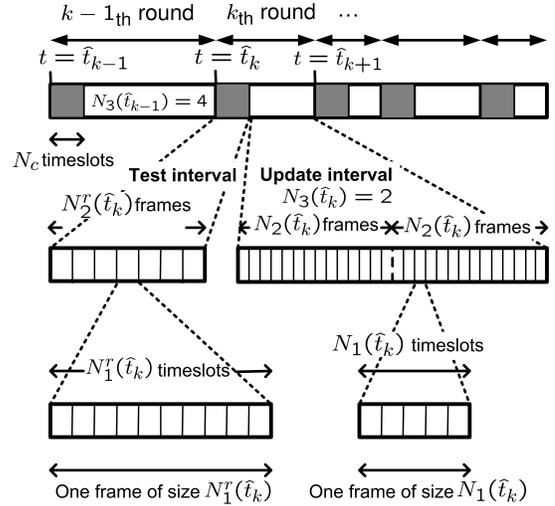}
\caption{Illustration of scheduling rounds, test intervals, update intervals, and frames.}\label{fig:rounds}
 \vspace{-.2in}
\end{figure}

  As DCP operates, the time axis becomes partitioned to a sequence of \emph{scheduling rounds}, where each round might consist of a different number of timeslots. An illustrative example is provided in Fig.~\ref{fig:rounds}. Let $\hat{t}_{k}$ denote the start time of the $k_{\text{th}}$ round. Each round begins with a \emph{test interval} followed by an \emph{update interval}. In the beginning of the test interval of each round, a \emph{candidate} value for the number of timeslots given to the algorithm $A$ to solve (\ref{Dstar}) is selected by DCP. Let $N^{r}_{1}(\hat{t}_{k})$ denote this candidate value for the $k_{\text{th}}$ round, and assume $N^{r}_{1}(\hat{t}_{k}) \in \mathcal{N}_{1}$, where $\mathcal{N}_{1}$ has a finite cardinality. In the rest, we use $N^{r}_{1}$ instead of $N^{r}_{1}(\hat{t}_{k})$ where appropriate. The algorithm that chooses the candidate value might be in general a randomized algorithm. Thus, we use the superscript $r$ to make this point clear. We assume $N^{r}_{1}$ takes an optimal value with probability $\delta>0$, where optimality will be defined later by (\ref{tildeN1}) and its following discussion.

   We set the length of the test interval to be
 \begin{align*}
N^{r}_{1}N^{r}_{2}=N_{c}=\text{const.},
\end{align*}
 a multiple of $N^{r}_{1}$, where $N^{r}_{2}$ is adjusted accordingly so that the test interval has a fixed length $N_{c}$. Therefore, given $N^{r}_{1}$, the test interval becomes partitioned into $N^{r}_{2}$ consecutive frames of $N^{r}_{1}$ timeslots. In the beginning of each frame, e.g., at time $t$, the current backlog vector $\mathbf{X}(t)$ and channel state $\mathbf{s}(t)$ are provided to the algorithm $A$. The algorithm then spends $N^{r}_{1}$ timeslots to find a schedule vector. Depending on the properties of a particular instance of algorithm $A$, this vector is used by the scheduler to update scheduling decisions in the next frame in a variety of methods.

  In the first method, the schedule vector found after $N^{r}_{1}$ timeslots in the frame starting at time $t$ is used throughout the next frame of $N^{r}_{1}$ timeslots starting at time $t+N^{r}_{1}$. Thus, the schedule vector used in any frame is obtained by using backlog and channel state information at the beginning of its previous frame. This method is general and can be applied to all types of algorithm $A$.

  We can apply a second method where algorithm $A$ is capable of outputting schedule vectors in intermediate steps, and not only after the planned $N^{r}_{1}$ timeslots. Consider the $i_{\text{th}}$ timeslot of a given frame of $N^{r}_{1}$ timeslots started at time $t$, where $i\leq N^{r}_{1}$. Suppose $\mathbf{I}^{c}_{t+i}$ is the intermediate solution found by the algorithm $A$ after $i$ timeslots in the considered frame, and $\mathbf{I}_{p}$ is the vector found at the end of its previous frame. Then, we may assume that with some probability, $\mathbf{I}^{c}_{t+i}$ is used if
 \begin{align*}
\mathbf{X}(t+i)\mathbf{D}(\mathbf{s}(t+i),\mathbf{I}^{c}_{t+i})>\mathbf{X}(t+i)\mathbf{D}(\mathbf{s}(t+i),\mathbf{I}_{p}),
 \end{align*}
 otherwise $\mathbf{I}_{p}$ is used in the timeslot following the $i_{\text{th}}$ timeslot. The update rule in \cite{mahdi:ben:ToN:09} provides an example where two schedule vectors are compared, and the best is selected with a well-defined probability.

  As for the third method, we may assume algorithm $A$ can accept an initial schedule vector. In this case, we can assume that the algorithm $A$ at a given frame accepts the schedule vector found in the previous frame as the initial point to the optimization problem of (\ref{Dstar}). Note that many graph-inspired algorithms do not start from a given initial vector (as a sub-graph), but instead, gradually progress towards a particular solution. These algorithms\footnote{Adaption of these algorithm to time-varying networks is an interesting problem, and is left for the future research.}, therefore, do not belong to the class of algorithms considered for this method. A forth method can also be considered by mixing the second and the third method if algorithm $A$ has the corresponding required properties. Our results in this paper extend to these methods as long as Property~\ref{p2} and Property~\ref{p3} in Section~\ref{sec:func} hold.

  Given $N^{r}_{1}$, and a method to use the output of algorithm $A$, DCP evaluates scheduling performance resulting from the value for $N^{r}_{1}$. The performance criterion is the normalized time-average of the backlog-rate (scalar) product. To define the criterion precisely, let $\varphi(\cdot,\cdot,\cdot)$ be defined as
\begin{eqnarray}
\varphi(t,n_{1},n_{2})=\sum_{j=0}^{n_{2}-1}\sum_{i=0}^{n_{1}-1}
\frac{\mathbf{X}_{t+jn_{1}+i}\mathbf{D}_{t+jn_{1}+i}}{n_{1}n_{2}\|\mathbf{X}_{t}\|}.\nonumber
\end{eqnarray}
If $\|\mathbf{X}_{t}\|=0$, we set $\varphi(t,n_{1},n_{2})=0$.
Based on the above definition, the criterion associated with the test interval of the $k_{\text{th}}$ scheduling round, which is computed by DCP, is denoted by $\varphi^{r}(\hat{t}_{k})$, where
\begin{eqnarray*}
\varphi^{r}(\hat{t}_{k})=\varphi(\hat{t}_{k},N^{r}_{1}(\hat{t}_{k}),
N^{r}_{2}(\hat{t}_{k})).
\end{eqnarray*}
This quantity is then used to determine the length of frames in the update interval of the $k_{\text{th}}$ round.

Update intervals are similar to the test intervals in that they are consisted of a multiple number of fixed-length frames. More precisely, we assume that the update interval in the $k_{\text{th}}$ round becomes partitioned into $N_{2}(\hat{t}_{k})N_{3}(\hat{t}_{k})$ consecutive frames of $N_{1}(\hat{t}_{k})$ timeslots. Integers $N_{1}(\hat{t}_{k})$ and $N_{2}(\hat{t}_{k})$ are such that
\begin{align}
N_{1}(\hat{t}_{k})N_{2}(\hat{t}_{k})=N_{c}. \label{nc}
\end{align}
Therefore, the length of the $k_{\text{th}}$ update interval is $N_{3}(\hat{t}_{k})$ times the length of a test interval.
Moreover, we see that $N_{1}(\hat{t}_{k})$ in the $k_{\text{th}}$ update interval takes the role of $N^{r}_{1}(\hat{t}_{k})$ in the $k_{\text{th}}$ test interval. Assuming the same method is applied to all test and update intervals to use the output of algorithm $A$, we can properly define
$\varphi(\hat{t}_{k})$ as
\begin{eqnarray*}
\varphi(\hat{t}_{k})=\varphi(\hat{t}_{k}+N_{c},N_{1}(\hat{t}_{k}),
N_{2}(\hat{t}_{k})N_{3}(\hat{t}_{k})).
\end{eqnarray*}
The quantity $\varphi(\hat{t}_{k})$ is similar to $\varphi^{r}(\hat{t}_{k})$, and measures the normalized time-average of backlog-rate product in the $k_{\text{th}}$ update interval.

DCP , on top of algorithm $A$, uses $\varphi(\hat{t}_{k-1})$ and $\varphi^{r}(\hat{t}_{k})$ to dynamically control the value of $N_{1}(\hat{t}_{k})$ and $N_{3}(\hat{t}_{k})$ over time. Specifically, in the $k_{\text{th}}$ round, at the \emph{end} of the test interval, the policy chooses either the $N_{1}$ used in the previous update interval, $N_{1}(\hat{t}_{k-1})$, or
the newly chosen value of $N_{1}$ in the current test interval, $N^{r}_{1}(\hat{t}_{k})$, according to the following update rule:
\begin{eqnarray*}
N_{1}(\hat{t}_{k})=\left \{
\begin{array}{ll}
N_{1}^{r}(\hat{t}_{k})& \text{if} \ \varphi^{r}(\hat{t}_{k})>\varphi(\hat{t}_{k-1})+\alpha \\
N_{1}(\hat{t}_{k-1})  & \text{otherwise}
 \end{array} \right.,
\end{eqnarray*}
where $\alpha$ is a suitably small but otherwise an \emph{arbitrary} positive constant. At the same time, the value of $N_{3}(\hat{t}_{k})$, is updated according to the following:
\begin{eqnarray*}
N_{3}(\hat{t}_{k})=\left \{
\begin{array}{ll}
\max(1,\frac{N_{3}(\hat{t}_{k-1})}{2}) & \text{if} \ \varphi^{r}(\hat{t}_{k})>\varphi(\hat{t}_{k-1})+\alpha \\
\min(L_{1},2N_{3}(\hat{t}_{k-1}))  & \text{otherwise,}
 \end{array} \right.
\end{eqnarray*}
where $L_{1}$ is a suitably large but otherwise an \emph{arbitrary} positive constant. Note that $N_{2}(\hat{t}_{k})$ becomes updated such that (\ref{nc}) holds.
 Once the values of $N_{1}$, $N_{2}$, and $N_{3}$ are updated, in the rest of the scheduling round, which by definition is the update interval, the policy proceeds with computing the time average $\varphi(\hat{t}_{k})$. When the $k_{\text{th}}$ round finishes, the $k+1_{\text{th}}$ round starts with a test interval, and DCP proceeds with selecting $N^{r}(\hat{t}_{k+1})$, and applying the update rule at the end of the $k+1_{\text{th}}$ test interval. This completes the description of joint operation of DCP and the scheduling component.

 Considering the above description, we see that DCP keeps trying new values for $N_{1}$. Once a good candidate is found for $N_{1}$, the update rule with high probability uses this value for longer periods of time by doubling the length of update intervals. In case the performance in terms of the backlog-rate product degrades, the length of update intervals are halved to expedite trying new values for $N_{1}$. Note that $\alpha$ can be arbitrarily small, but should be a positive number. This avoids fluctuations between different values of $N_{1}$ performing closely, thus preventing short update intervals. In addition, it limits incorrect favoring towards new values of $N_{1}$ in the test intervals, where due to atypical channel conditions, the normalized backlog-rate product deviates from and goes beyond its expected value. Finally, note that $L_{1}$ can be arbitrarily large, but should be a finite integer. This assumption is mainly analysis-inspired but is also motivated by the fact that a larger $L_{1}$ can lead to a larger delay.
\vspace{-.15in}
\section{Performance Analysis}\label{sec:performance}
In this section, we evaluate the performance of DCP in terms of its associated stability region. We first introduce several key definitions and functions, and then state the main theorem of the paper.
\vspace{-.15in}
\subsection{Definitions}
Since the backlog vector is non-Markovian, we consider the following definition for the stability of a process.
\subsubsection{Stability}\label{sec:per:stab}
Suppose there are a bounded closed region $\mathcal{C}$ around the origin, and a real-valued function $F(\cdot)\geq 0$ such the following holds: For any $t$, and $\sigma_{\mathcal{C}}$ defined by
\begin{align*}
\sigma_{\mathcal{C}}=\inf\{i\geq 0:\mathbf{X}_{t+i}\in
\mathcal{C}\},
\end{align*}
we have
\begin{align*}
\mathds{E}[\sigma_{\mathcal{C}}]\leq F(\mathbf{X}(t)) \mathbf{1}_{\mathbf{X}(t)\notin \mathcal{C}}.
\end{align*}
Then, the system is said to be stable.

This definition implies that when $\mathbf{X}(t)\notin \mathcal{C}$, e.g., when $\|\mathbf{X}(t)\|$ is larger than a threshold, the conditional expectation of the time required to return to $\mathcal{C}$, e.g., so that $\|\mathbf{X}(t)\|$ becomes less than or equal to the threshold, is bounded by a function of only $\mathbf{X}(t)$, uniformly in the past history and $t$. This definition further implies that if the sequence $\mathbf{X}(t)$ is stable, then \cite{buche:kushber:04}
\begin{align*}
\lim_{k\to \infty} \sup_{t} P(|\mathbf{X}(t)|>k)=0.
\end{align*}
%Note that our definition of stability implies the one considered in ?.
\vspace{-.15in}
\subsubsection{$\theta$-scaled Region and Maximal Stability}
Suppose $0\leq\theta \leq1$. A region is called $\theta$-scaled of the region $\Gamma$, and denoted by $\theta \Gamma$, if it contains
all rates that are $\theta$-scaled of the rates in $\Gamma$, i.e.,
\begin{align*}
\theta \Gamma =\{\mathbf{a}_{1}: \ \mathbf{a}_{1}=\theta \mathbf{a}_{2},\ \ \text{for some}\ \mathbf{a}_{2}\in \Gamma\}.
\end{align*}
Further, the $\theta$-scaled region is called \emph{maximally} stable if for all arrival rate vectors interior to $\theta \Gamma$, the system can be stabilized, and for all $\epsilon>0$ there exists at least one rate vector interior to $(\theta+\epsilon) \Gamma$ that makes the system unstable, both under the same given policy. Thus, maximal stability determines the largest \emph{scaled} version of $\Gamma$ that can be stably supported under a given policy.
\vspace{-.15in}
\subsection{Auxiliary Functions and Their Properties}\label{sec:func}
To define the first function, hypothetically suppose for all $t$, $\mathbf{X}(t)=\mathbf{X}$ for a given $\mathbf{X}$, $\mathbf{X}\neq \mathbf{0}$, and thus, $\mathbf{X}(t)$ does not get updated. In addition, assume that $N_{1}$ has a fixed value over time. Considering these assumptions and an update interval of infinite number of frames\footnote{Here, we assume the channel evolves, and that the algorithm $A$ is used in the same manner as it is used in an ordinary update interval with a finite $N_{c}$, as discussed in Section~\ref{sec:dcp:sch}. }, each consisting of $N_{1}$ timeslots, we can see that in the \emph{steady state}, the expected normalized backlog-rate product, averaged over one frame, is equal to
\begin{eqnarray}
\phi(\mathbf{X},N_{1})=\mathds{E}_{\mathbf{s},A}\frac{(\sum_{i=1}^{N_{1}}\mathbf{X}
\mathbf{D}_{i})}{N_{1} \|\mathbf{X}\|},
\end{eqnarray}
where $\mathbf{D}_{i}$ is the rate vector in the $i_{\text{th}}$ timeslot of a given frame in the steady state. This expectation is over the steady-state
distribution of channel process, and possibly over the randomness introduced by the algorithm $A$.

Intuitively, $\phi(\mathbf{X},N_{1})$ states how well a particular choice for $N_{1}$ performs, in terms of backlog-rate product, when queue-length changes are ignored. This is exactly what we need to study since the stability region often depends on the behavior of scheduling at large queue-lengths, where in a finite window of time the queue-lengths do not change significantly.

To simplify notation, where appropriate, we use
$t$ as the first argument of $\phi(\cdot,\cdot)$; by that we mean\footnote{By definition of $\phi(\cdot, \cdot)$, here we hypothetically assume the backlog vector $\mathbf{X}(t_{1})$ for all times $t_{1}$ is equal to $\mathbf{X}(t)$. }
\begin{align*}
\phi(t,N_{1})=\phi(\mathbf{X}(t),N_{1}).
\end{align*}
Having defined $\phi(\mathbf{X},N_{1})$, we define $\tilde{N}_{1}(\mathbf{X})$
and $\tilde{\phi}(t)$ by
\begin{align}
\tilde{N_{1}}(\mathbf{X})&=\argmax_{N_{1}\in \mathcal{N}_{1}}\phi(\mathbf{X},N_{1}), \label{tildeN1}
\end{align}
and
\begin{align}
 \tilde{\phi}(t)&=\tilde{\phi}(\mathbf{X}(t)) =\phi\Big(\mathbf{X}(t),\tilde{N}_{1}\big(\mathbf{X}(t)\big)\Big). \nonumber
\end{align}
Finally, for a given $\mathbf{X}$ with $\|\mathbf{X}\| \neq 0$, we define
\begin{eqnarray}
\chi(\mathbf{X})=\mathds{E}_{\mathbf{s}}\Big[\frac{\mathbf{X}\mathbf{D}^{*}(\mathbf{X},\mathbf{s})}{\|\mathbf{X}\|}\Big],\nonumber
\end{eqnarray}
where $\mathbf{D}^{*}(\mathbf{X},\mathbf{s})$ is defined in (\ref{dstarv}), and
the expectation is over the \emph{steady-state} distribution of the channel process.

According to the above definitions, we see that when variations in the backlog vector are ignored after time $t$, and $N_{1}$ is confined to have a fixed value, $\tilde{N}_{1}(\mathbf{X}(t))$ becomes the optimal value for $N_{1}$ in terms of the normalized backlog-rate product, and $ \tilde{\phi}(t)$ represents the corresponding expected value. In particular, note that $\tilde{N}_{1}(\mathbf{X})$ is a function of $\mathbf{X}$ and may take different values for different $\mathbf{X}$'s. The quantity $\chi(\mathbf{X})$, on the other hand, is the expected normalized backlog-rate product if for all states we could find the optimal schedule vector. This quantity, therefore, can serve as a benchmark to measure performance of sub-optimal approaches.

Note that $\chi(\mathbf{X})$ is continuous function of $\mathbf{X}$ and does not depend on $\|\mathbf{X}\|$. Similarly, by Assumption~\ref{p1}, $\phi(\mathbf{X},N_{1})$ does not depend on $\|\mathbf{X}\|$, and is expected to have the following property.
\begin{Property}\label{p2}
 Suppose $\|\mathbf{X}_{1}-\mathbf{X}_{2}\|<C$ for a given $C>0$. For any given $\epsilon>0$, there exists a sufficiently large $M>0$ such that if $\|\mathbf{X}_{1}\|>M$, then for all $N_{1}\in \mathcal{N}_{1}$
\begin{eqnarray}
|\phi(\mathbf{X}_{1},N_{1})-\phi(\mathbf{X}_{2},N_{1})|< \epsilon. \nonumber
\end{eqnarray}
\end{Property}
If the first or the second method in Section~\ref{sec:dcp:sch} is used, this property holds since by Assumption~\ref{p1}, algorithm $A$ statistically finds similar schedule vectors when two backlog vectors are close and large. In case the third or the forth method is used, it is possible to consider explicit restrictions for algorithm $A$ such that $\phi(\mathbf{X},N_{1})$ is well-defined and Property~\ref{p2} holds. However, in this paper, we simply assume that algorithm $A$ is well-structured, in terms of the distribution of $\mathbf{I}^{(n)}$, so that by the ergodicity of the channel process this property also holds for these methods.

Recall that $\varphi^{r}(\hat{t}_{k})$ is the normalized time-average of backlog-rate product over the $k_{\text{th}}$
test interval. If we assume that the backlog vector is kept fixed at $\mathbf{X}(\hat{t}_{k})$, by ergodicity of the channel process as explained in Section~\ref{sec:channelp}, we expect $\varphi^{r}(\hat{t}_{k})$ to converge to $\phi(\hat{t}_{k},N^{r}_{1}(\hat{t}_{k}))$. Hence, when the number of frames is large, which is the case when $N_{c}$ is large, $\varphi^{r}(\hat{t}_{k})$ should be close to $\phi(\hat{t}_{k},N^{r}_{1}(\hat{t}_{k}))$ with high probability. However, the backlog vector is not fixed and changes over time. But by Assumption~\ref{p1}, algorithm $A$ statistically responds similarly to different backlog vectors if they are close and sufficiently large. This can be exactly our case since arrivals and departures are limited, and thus, for a fixed $N_{c}$, the changes in the norm of backlog vector are bounded over one test interval. Therefore, by Assumption~\ref{p1}, if $\|\mathbf{X}(\hat{t}_{k})\|$ is sufficiently large, the changes in the backlog have little impact on the distribution of $\varphi^{r}(\hat{t}_{k})$. Applying a similar discussion to $\varphi(\hat{t}_{k})$ while noting that the length of update intervals is bounded by $L_{1}N_{c}$, we expect the following property.
\begin{Property}\label{p3}
There exist $\varrho_{\varphi}>0$ and $\theta_{\varphi}>0$ such that
for any given $\epsilon>0$, there exists $M>0$ such
that if $\|\mathbf{X}_{\hat{t}_{k}}\|>M$, then regardless of $k$ and the past
history, up to and including time $\hat{t}_{k}$, with probability at least $(1-\varrho_{\varphi})$
\begin{align}
\big|\varphi^{r}(\hat{t}_{k})-\phi(\hat{t}_{k},N^{r}_{1}(\hat{t}_{k}))\big|
<\theta_{\varphi}+\epsilon.\nonumber
\end{align}
Similarly, regardless of $k$ and the past
history, up to and including time $\hat{t}_{k}+N_{c}$, with probability at least $(1-\varrho_{\varphi})$
\begin{align}
\big|\varphi(\hat{t}_{k})-\phi(\hat{t}_{k}+N_{c},N_{1}(\hat{t}_{k}))\big|<\theta_{\varphi}+\epsilon.\nonumber
\end{align}
 Moreover,
\begin{align*}
\lim_{N_{c}\to \infty} \varrho_{\varphi} =\lim_{N_{c}\to \infty} \theta_{\varphi}=0.
\end{align*}
\end{Property}

 According to the preceding discussion, we can see that $\theta_{\varphi}$ and $\varrho_{\varphi}$ mainly measure how fast the time-averages
converge to their expected value, and $\epsilon$ models the error due
to variations in the backlog vector $\mathbf{X}_{\hat{t}_{k}+i}$. Thus, as stated above,
$\varrho_{\varphi}$ and $\theta_{\varphi}$ can be made arbitrarily
small by assuming a sufficiently large value for $N_{c}$. In a
practical implementation, however, $N_{c}$ is a limited integer, and
therefore, $\theta_{\varphi}>0$ and $\varrho_{\varphi}>0$. Note that when the first or the second method in Section~\ref{sec:dcp:sch} is used, Property~\ref{p3} holds as a result of its preceding discussion, uniform convergence of the channel process, and finiteness of $|\mathcal{I}|$. Similar to Property~\ref{p2}, in the case of the third or the forth method, we assume this property results from the well-structuredness of algorithm $A$.

As the final step towards the main theorem, we define several random variables that are indirectly used in the theorem statement. Specifically, let $i_{\delta}$ be a geometric random variable with success probability $\delta^{'}$, where
\begin{align*}
\delta^{'}=(1-\varrho_{\varphi})^{2}\delta,
\end{align*}
where $\delta$ is defined in Section~\ref{sec:dcp:sch}.
In addition, let $i_{\varphi}$ be a r.v. with the following distribution.
\begin{align*}
P(i_{\varphi}=0)=\varrho_{\varphi},
\end{align*}
and
\begin{align*}
P(i_{\varphi}=k)=(1-\varrho_{\varphi})^{2k-1}(1-(1-\varrho_{\varphi})^{2}), \ k\geq 1.
\end{align*}
We also define the random sequence $\{N^{'}_{3}(i), \ i\geq 1\}$ as\footnote{Here, $ \wedge$ and $\vee$ are the \emph{and} and \emph{or} operators, respectively.}
\begin{align}
&N^{'}_{3}(i)
=\left\{\begin{array}{l l}
L_{1} & (1 \leq i \leq i_{\delta}) \ \vee
\\ & \qquad \quad  (i=i_{\delta}+i_{\varphi}+1) \\
1 & (i=i_{\delta}+1) \wedge (i_{\varphi}=1) \\
2 & (i=i_{\delta}+1) \wedge (i_{\varphi}>1) \\
\min(\frac{2^{i}}{2^{i_{\delta}+2}},L_{1})  & (i_{\delta}+2 \leq i \leq i_{\delta}+i_{\varphi})  \wedge
 \nonumber \\ & \qquad \qquad \qquad \qquad
     (i_{\varphi}>1) \\
     0 & i>i_{\delta}+i_{\varphi}+1
\end{array}
\right.\nonumber.
\end{align}
Using the above sequence, we define $R_{\infty}$ as
%\begin{align*}
%\lambda_{t}&=   \sum_{i=1}^{i_{\delta}+i_{\varphi}+1}\ (1+N^{'}_{3}(i)),
%\end{align*}
%and $\lambda_{r}$ as
%\begin{align*}
%\lambda_{r}=\sum_{i=i_{\delta}+1}^{i_{\delta}+i_{\varphi}}\ N^{'}_{3}(i).
%\end{align*}
%We then define $R_{\infty}$ as
\begin{align}
R_{\infty}=\frac{\mathds{E}\big[\sum_{i=i_{\delta}+1}^{i_{\delta}+i_{\varphi}}\ N^{'}_{3}(i)\big]}{\mathds{E}\big[\sum_{i=1}^{i_{\delta}+i_{\varphi}+1}\ (1+N^{'}_{3}(i))]},\label{r:infty}
\end{align}
which plays a key role in theorem statement and its proof. Note that for a fixed $\delta>0$, we have
\begin{align*}
\lim_{\varrho_{\varphi}\to 0} R_{\infty}=\frac{L_{1}}{1+L_{1}}.
\end{align*}
As mentioned earlier, we can make $\varrho_{\varphi}$ and $\theta_{\varphi}$ arbitrarily small by choosing a sufficiently large value for $N_{c}$. We are now ready to state the theorem.

\subsection{Main Theorem on Stability of DCP}
We have the following theorem:
\begin{Theorem}\label{theorem1}
Consider a network as described in Section~\ref{sec:net}. For this network,
let $\theta$ be a constant defined by
\begin{align*}
\theta=R_{\infty} \inf_{\|\mathbf{X}\|=1}\ \frac{(\tilde{\phi}(\mathbf{X})-\alpha-3\theta_{\varphi})}{\chi(\mathbf{X})}.
\end{align*}
In addition, let $\theta_{\infty}$ be
\begin{align*}
\theta_{\infty}=\inf_{\|\mathbf{X}\|=1}\ \ \frac{\tilde{\phi}(\mathbf{X})}{\chi(\mathbf{X})}.
\end{align*}
\begin{itemize}
\item[{(}a{)}] If $6 \theta_{\varphi}<\alpha$ and $2\alpha \leq \inf_{\|\mathbf{X}\|=1} \tilde{\phi}(\mathbf{X})$, then the network is stable under DCP if the mean arrival rate vector, $\mathbf{a}$, lies strictly inside
the region $\theta \Gamma$.
\item[{(}b{)}] For any input rate strictly inside $\theta_{\infty} \Gamma$, there exist a sufficiently small value for $\alpha$, and sufficiently large values for $L_{1}$ and $N_{c}$ such that the network becomes stabilized under DCP. In other words, we can expand the sufficient stability region $\theta \Gamma$ arbitrarily close to $\theta_{\infty} \Gamma$ by choosing appropriate values for
for $\alpha$, $L_{1}$, and $N_{c}$.
\item[{(}c{)}] There exist instances of networks, as described in Section~\ref{sec:net}, for which their associated region $\theta_{\infty} \Gamma$ is maximally stable under DCP.
\end{itemize}
\end{Theorem}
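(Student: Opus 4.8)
The theorem has three parts, and I would prove them in order (a), (b), (c), since each builds on the previous. Part (a) is the heart of the paper: a Lyapunov-drift argument with a \emph{random} number of steps, using the Lyapunov function $V(\mathbf{X}) = \|\mathbf{X}\|^2$ (or $\sum_i X_i^2$). The key difficulty — flagged by the authors themselves in the introduction — is that the natural "block" over which to compute drift is a scheduling round, whose length is random (it depends on $N_3(\hat t_k)$, which itself evolves stochastically and is correlated with channel behavior). So I would set up the analysis at the epochs $\hat t_k$, and bound $\mathds{E}[V(\mathbf{X}(\hat t_{k+1})) - V(\mathbf{X}(\hat t_k)) \mid \mathcal{P}_{\hat t_k}]$. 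Expanding the queue recursion over the round, the cross term produces $-2\sum_{\tau} \mathds{E}[\mathbf{X}(\tau)\mathbf{D}(\tau)] + (\text{arrival term})$, and the arrival term contributes roughly $2\,(\text{round length})\cdot \mathbf{X}(\hat t_k)\mathbf{a}$ plus lower-order corrections controlled by $A_{max}, D_{max}$ and the bounded round length $L_1 N_c(1+o(1))$.

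**The core estimate.** The crux is lower-bounding $\sum_{\tau \in \text{round}} \mathds{E}[\mathbf{X}(\tau)\mathbf{D}(\tau) \mid \mathcal{P}_{\hat t_k}]$ by something like $(\text{round length})\cdot \|\mathbf{X}(\hat t_k)\| \cdot \big(\tilde\phi(\mathbf{X}(\hat t_k)) - \alpha - 3\theta_\varphi\big)\cdot R_\infty / R_\infty$ — i.e., I need to show that, in expectation over a round, the normalized backlog-rate product is at least $R_\infty\big(\tilde\phi(\mathbf{X}) - \alpha - 3\theta_\varphi\big)$ times $\|\mathbf{X}\|$. This is where Property~\ref{p3} and the random variables $i_\delta, i_\varphi, N'_3(i)$ enter. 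The idea: within a sequence of rounds, the test intervals keep sampling candidate $N_1^r$ values; with probability $\delta$ a candidate is (near-)optimal, i.e. achieves $\phi$ close to $\tilde\phi$; Property~\ref{p3} says the measured $\varphi^r$ is within $\theta_\varphi + \epsilon$ of $\phi$ with probability $1-\varrho_\varphi$. So $\delta' = (1-\varrho_\varphi)^2\delta$ is the per-round probability that a good $N_1$ is both proposed and correctly recognized by the update rule (the condition $\varphi^r > \varphi(\hat t_{k-1}) + \alpha$ together with $6\theta_\varphi < \alpha$ ensures the comparison doesn't misfire — this is exactly why the hypothesis $6\theta_\varphi<\alpha$ is needed). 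The geometric $i_\delta$ counts rounds until this happens; then $i_\varphi$ (with its specific distribution) counts how long the good value survives before an atypical measurement knocks $N_3$ back down; and $N'_3(i)$ tracks the doubling/halving of update-interval lengths. Averaging the backlog-rate product over a full "super-cycle" of such rounds, weighting each round by its length $1 + N'_3(i)$ (test interval of length $1$ in units of $N_c$, update interval of length $N'_3(i)$), and counting only the update intervals where the good $N_1$ is in force, gives precisely the ratio $R_\infty$ in~(\ref{r:infty}). Combining: the effective drift coefficient is $\tilde\phi(\mathbf{X}) - \alpha - 3\theta_\varphi$ degraded by the duty-cycle factor $R_\infty$, which is exactly $\theta/\theta_\infty \cdot \inf \chi/\tilde\phi \cdot (\ldots)$ — matching the definition of $\theta$. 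Then for $\mathbf{a}$ strictly inside $\theta\Gamma$, we get $\mathbf{X}(\hat t_k)\mathbf{a} < (\theta - \epsilon')\,\mathbf{X}(\hat t_k)\mathbf{D}^*(\mathbf{X}(\hat t_k),\cdot)$-type bound via $\chi$, forcing negative drift $\le -\eta\,\|\mathbf{X}(\hat t_k)\|\cdot(\text{round length})$ once $\|\mathbf{X}(\hat t_k)\| > M$ (Property~\ref{p2} handles the $o(\|\mathbf{X}\|)$ slack, the condition $2\alpha \le \inf\tilde\phi$ keeps the bracket positive). Finally I would invoke the Foster–Lyapunov-type criterion for a random number of steps from~\cite{fralix}, adapted as the authors promise, to convert the per-round negative drift into the bound $\mathds{E}[\sigma_{\mathcal{C}}] \le F(\mathbf{X}(t))$ defining stability — being careful that the random round lengths are bounded above (by $L_1 N_c$ up to lower order) and below (by $N_c$), so the epoch process and the slot process are stable together.

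**Parts (b) and (c).** Part (b) is a limiting statement: as $N_c \to \infty$, Property~\ref{p3} gives $\varrho_\varphi, \theta_\varphi \to 0$, hence $R_\infty \to L_1/(1+L_1)$; then as $L_1 \to \infty$, $R_\infty \to 1$; and as $\alpha \to 0$ the bracket $\tilde\phi - \alpha - 3\theta_\varphi \to \tilde\phi$. So $\theta \to \inf_{\|\mathbf{X}\|=1} \tilde\phi(\mathbf{X})/\chi(\mathbf{X}) = \theta_\infty$. The only subtlety is the order of limits and the fact that the hypotheses of (a) — $6\theta_\varphi < \alpha$ and $2\alpha \le \inf\tilde\phi$ — must remain satisfiable: given a target rate strictly inside $\theta_\infty\Gamma$, pick $\alpha$ small enough that $2\alpha \le \inf\tilde\phi$ and that $\theta_\infty$-with-$\alpha$-slack still contains the rate, then pick $N_c$ large enough that $6\theta_\varphi < \alpha$ and $R_\infty$ close enough to $L_1/(1+L_1)$, then $L_1$ large — a standard $\epsilon$-chasing argument. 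Part (c) requires \emph{constructing} a network instance where $\theta_\infty\Gamma$ is not just sufficient but maximal, i.e. exhibiting an arrival rate just outside $\theta_\infty\Gamma$ that destabilizes DCP. The natural construction: a channel with enough memory (e.g. a two-state Markov chain with slow mixing) and an algorithm $A$ and interference model chosen so that $\tilde\phi(\mathbf{X})/\chi(\mathbf{X})$ is a genuine bottleneck — the algorithm simply cannot, with any fixed $N_1$, track the channel well enough to exceed $\theta_\infty\chi$ on average, and a converse (instability) argument shows any rate with $\mathbf{a} = (\theta_\infty + \epsilon)\mathbf{a}_2$ for appropriate $\mathbf{a}_2 \in \Gamma$ drives some queue to infinity because the long-run service rate DCP can extract is capped by $\theta_\infty$ times capacity. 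I expect the main obstacle to be the core estimate in part (a): rigorously justifying the identification of the round-averaged, length-weighted backlog-rate product with the ratio $R_\infty$, since this couples three sources of randomness (candidate selection, measurement noise in $\varphi^r/\varphi$, and the deterministic-but-state-dependent $N_3$ dynamics) and requires the \cite{fralix}-style machinery to handle the random horizon cleanly; the limiting arguments in (b) and the instance construction in (c) are comparatively routine once (a) is in hand.
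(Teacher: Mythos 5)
Your plan for parts (a) and (b) tracks the paper's own route closely: the quadratic Lyapunov function, a drift over a random horizon of scheduling rounds, Properties~\ref{p2}--\ref{p3} together with the $(i_{\delta},i_{\varphi},N'_{3})$ bookkeeping to extract the duty-cycle factor $R_{\infty}$, the role of $6\theta_{\varphi}<\alpha$ in preventing false switches of $N_{1}$, the Fralix-type supermartingale criterion to turn a random-step negative drift into a bound on $\mathds{E}[\sigma_{\mathcal{C}}]$, and for (b) the same limiting argument ($N_{c}\to\infty$ makes $\theta_{\varphi},\varrho_{\varphi}$ small, then $\alpha$ small and $L_{1}$ large). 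However, your core estimate for (a) is stated as a \emph{single-round} conditional bound: that over one round $[\hat t_{k},\hat t_{k+1})$ the expected normalized backlog-rate product is at least $R_{\infty}(\tilde\phi-\alpha-3\theta_{\varphi})$. Conditioned on an adverse history (the retained $N_{1}$ is poor, the fresh candidate is not optimal, and $N_{3}=L_{1}$), the service in that one round can be far below this level; $R_{\infty}$ is only a long-run fraction of time, not a per-round guarantee. This is precisely why the paper computes the drift over $\tau_{K}+1$ slots spanning $K$ rounds with $K$ large, and proves in the key lemma lower-bounding $\sum_{i}\mathbf{X}_{t+i}\mathbf{D}_{t+i}$ (via the Case~1/Case~2 contradiction argument, the decomposition into periods of $i_{\delta,j}+i_{\varphi,j}+1$ rounds truncated at $C$ so that Properties~\ref{p2}--\ref{p3} stay valid while the backlog stays large, and a renewal--reward law of large numbers) that the fraction of time a near-optimal $N_{1}$ is in force exceeds $R_{\infty}-\epsilon_{R}$ with probability at least $1-\varrho_{R}$. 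Your ``super-cycle'' remark points in this direction, but the per-round bound you propose to prove is false as stated, and the multi-round horizon with its truncation and conditioning machinery is exactly the nontrivial content of the argument.

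The second, larger gap is part (c), which you call routine. The inequality $\tilde\phi(\mathbf{X})/\chi(\mathbf{X})\ge\theta_{\infty}$ caps the scaled service only in the direction attaining the infimum; along a generic trajectory the backlog direction can drift toward directions where this ratio exceeds $\theta_{\infty}$, so the assertion that ``the long-run service rate DCP can extract is capped by $\theta_{\infty}$ times capacity'' does not follow --- indeed this is why the theorem claims only the \emph{existence} of maximally stable instances. The paper's proof constructs symmetric instances satisfying conditions C1--C4 (alignment of $\mathbf{X}_{min}$ with both $\bar{\mathbf{D}}(\mathbf{X}_{min})$ and $\bar{\mathbf{D}}^{*}(\mathbf{X}_{min})$, injectivity of $\phi(\mathbf{X}_{min},\cdot)$ in $N_{1}$, directional continuity of $\bar{\mathbf{D}}$ on the normalizing region, and reachability), takes $\mathbf{a}=(\theta_{\infty}+\varsigma)\bar{\mathbf{D}}^{*}(\mathbf{X}_{min})$, and then shows that, started from a large backlog along $\mathbf{X}_{min}$, the process stays with positive probability inside a cone around $\mathbf{X}_{min}$ with linearly growing norm: this needs the drift-alignment estimate around $\mathbf{X}_{min}$, the angular-deviation bound, a staged Bernoulli/large-deviation control of the failure ratio, and the positivity of the infinite product in Lemma~\ref{lemma:l5}, after which $\mathds{E}[\tau_{\mathcal{C}}]=\infty$ follows. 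Without some substitute for this directional-confinement argument, your converse for (c) does not go through.
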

\begin{proof}
The proof is provided in the Appendix.
\end{proof}
\vspace{-.15in}
\subsection{Discussion}
\subsubsection{Intuitive Explanation of $\theta$}\label{sec:dis:theta}
Theorem~\ref{theorem1} states that all input rates interior to $\theta \Gamma$ can be stably supported under DCP. In particular, it implicitly quantifies $\theta$ as a function of the sub-optimality of algorithm $A$ and channel state correlation. Clearly, the value of $\theta$ is not fixed, and can vary from a particular network setup to another. As expected, for a fixed $\mathbf{X}$, as algorithm $A$ finds better schedule vectors in shorter times, and as the channel states become more correlated, $\tilde{\phi}(\mathbf{X})$ gets closer to ${\chi(\mathbf{X})}$, and $\theta$ gets closer to one, expanding the region $\theta \Gamma$ to the capacity region $\mathcal{C}$.

In addition, Theorem~\ref{theorem1} shows how the stability region is directly affected by the choices for $\alpha$ and $L_{1}$, and the values for $\theta_{\varphi}$ and $\varrho_{\varphi}$. The impact of $\alpha$ on $\theta$ could be predicted by noting that the update rule uses $N^{r}_{1}$ in an update interval only when the normalized average backlog-rate product increases at least by $\alpha$. Thus, we expect to see a decrease of the type $\frac{\alpha}{\chi(\mathbf{X})}$ in the stability region scaling. The effect of $\theta_{\varphi}$ and $\varrho_{\varphi}$ is less obvious, but can be roughly explained as follows. Suppose at the $k_{\text{th}}$ round the optimal $N_{1}$ is selected, i.e., $N^{r}_{1}(\hat{t}_{k})=\tilde{N}_{1}(\hat{t}_{k})$. In this case, to have a proper comparison, $\varphi^{r}(\hat{t}_{k})$ and $\varphi(\hat{t}_{k-1})$ should satisfy their corresponding inequalities in Property~\ref{p3}. Moreover, to make sure that $N^{r}_{1}(\hat{t}_{k})$ or a near optimal $N_{1}$ is used in the $l_{\text{th}}$ round after the $k_{\text{th}}$, we at least require $\varphi^{r}(\hat{t}_{l})$ satisfy its corresponding inequality in Property~\ref{p3}. Therefore, there are at least three inequalities of the form in Property~\ref{p3} that should be satisfied, which results in the term $3 \theta_{\varphi}$ in the expression for $\theta$.

The factor $R_{\infty}$ in a sense measures the least fraction of time in update intervals where \emph{near} optimal values for $N_{1}$ is used. To better understand $R_{\infty}$, suppose $\varrho_{\varphi}$ is small, and the backlog vector is large. Once the optimal value for $N_{1}$ is found in a round, as long as the inequalities in Property~\ref{p3} hold for the subsequent rounds, $N_{1}$ gets updated for only a few times. By the update rule, this means that $N_{3}$ gets doubled in most of the rounds, and is likely equal to $L_{1}$. Thus, the update intervals constitute $\frac{L_{1}}{1+L_{1}}$ fraction of time. At the same time, in these intervals, near optimal values for $N_{1}$ are being used. Thus, we expect to see $\frac{L_{1}}{1+L_{1}}$ as a multiplicative factor in $\theta$.

The above discussion and Theorem~\ref{theorem1} also state that DCP successfully adapts $N_{1}$ in order to keep $\varphi(\hat{t}_{k}+N_{c},N_{1}(\hat{t}_{k}))$ close to $\tilde{\phi}(\mathbf{X}(\hat{t}_{k}+N_{c}))$\footnote{This statement is in fact a direct result of Lemma~4.}. Note that for a given $\mathbf{X}$ finding $\tilde{N}_{1}(\mathbf{X})$, or equivalently, $\tilde{\phi}(\mathbf{X})$, in general, is a difficult problem. Specifically, it requires the exact knowledge of the channel state and arrival process statistics, and the structure of algorithm $A$. Even when this knowledge is available, as the number of users increases, finding $\tilde{N_{1}}(\mathbf{X})$ demands computation over a larger number of dimensions, which becomes exponentially complex. Hence, we see that DCP dynamically solves a difficult optimization problem, without requiring the knowledge of input rates or the structure of algorithm $A$\footnote{DCP also does not require the exact knowledge of channel state statistics. However, a practical implementation of DCP requires $N_{c}$ to be related to the convergence-rate of channel process to its steady state.}.

\subsubsection{Comparison with Static Policies, Minmax v.s. Maxmin}\label{sec:dis:maxmin}
Part (b) of the theorem gives the region $\theta_{\infty} \Gamma$ as the fundamental lower-bound on the limiting performance of DCP. It also implicitly states that this lower-bound depends on the solution to a minmax problem. To see this, recall that by definition $\tilde{\phi}(\mathbf{X})$ is the maximum of $\phi(\mathbf{X}, N_{1})$ over all choices for $N_{1}$. Thus, we have that
\begin{align*}
\theta_{\infty}=\inf_{\|\mathbf{X}\|=1}\ \max_{N_{1}\in \mathcal{N}_{1}}\ \frac{\phi(\mathbf{X},N_{1})  }{\chi(\mathbf{X})}.
\end{align*}

Now, consider a \emph{static} policy that assumes a fixed value for $N_{1}$. This policy partitions the time axis into a set of frames each consisting of $N_{1}$ timeslots, with the $i_{\text{th}}$ frame starting at time $(i-1)N_{1}$. The static policy, in the beginning of each frame, e.g., the $i_{\text{th}}$ frame, provides algorithm $A$ with vectors $\mathbf{X}((i-1)N_{1})$ and $\mathbf{s}((i-1)N_{1})$. Algorithm $A$ uses these vectors as input, and after spending $N_{1}$ timeslots, returns a schedule vector as the output. This output vector is then used to schedule users in the next following frame.

It is not difficult to show that the above static policy stabilizes the network for all rates interior to $\theta^{s}_{N_{1}}\Gamma$, where
\begin{align*}
\theta^{s}_{N_{1}}=\inf_{\|\mathbf{X}\|=1}\frac{\phi(\mathbf{X},N_{1})}{\chi(\mathbf{X})}.
\end{align*}
Thus, the best static policy, in terms of the region $\theta^{s}_{N_{1}}\Gamma$, is the one that maximizes $\theta^{s}_{N_{1}}$. Let $\theta^{s}_{o}$ be the maximum value. We have that
\begin{align*}
\theta^{s}_{o}=\max_{N_{1}\in \mathcal{N}_{1}}\inf_{\|\mathbf{X}\|=1}\frac{\phi(\mathbf{X},N_{1})}{\chi(\mathbf{X})}.
\end{align*}
Therefore, the best static policy corresponds to a maxmin problem.
 Considering the definition of $\theta_{\infty}$ and $\theta^{s}_{o}$, and that the minmax of a function is always larger than or equal to the maxmin, we have that $\theta^{s}_{o}\Gamma \subseteq \theta_{\infty} \Gamma$. More generally, using the above definitions and a simple drift analysis, we can show that the stability region of static policies is not larger than the limiting stability
 region of DCP.

 \subsubsection{Tightness of $\theta_{\infty}$ and $\theta_{o}^{s}$}Note that parts (a) and (b) of the theorem do not exclude the possibility of networks being stable under DCP for rates outside of $\theta \Gamma$ or $\theta_{\infty} \Gamma$. Part (c) of the theorem, on the other hand, compliments parts (a) and (b), and shows that for some networks the region $\theta_{\infty}\Gamma$ is indeed the largest \emph{scaled} version of $\Gamma$ that can be stably supported under DCP. This for instance may happen when the channel state is statistically symmetric with respect to users as the ones in Section~\ref{sec:casestudy}. Proof of part~(c) of the theorem provides conditions for cases that lead to the maximal stability of the region $\theta_{\infty} \Gamma$, and in particular, shows that the symmetric examples in Section~\ref{sec:casestudy} meet such conditions. Note that the same discussion also applies to $\theta^{s}_{N_{1}}$ and the stability region of static policies. We therefore have $\theta_{\infty}$ and $\theta^{s}_{o}$ both as tight measures, stating that for some networks, including the ones in the next section, DCP can increase throughput efficiency of static policies by a factor of $\frac{\theta_{\infty}-\theta^{s}_{o}}{\theta^{s}_{o}}$.

\subsubsection{Delay}
Note that getting close to the boundary of $\theta_{\infty}\Gamma$ increases delay. This follows from part (b) of the theorem stating that for input rates close to the boundary, $L_{1}$ and $N_{c}$ should be large. These choices, as expected, increase the length of test and update intervals, which can potentially be large intervals of sub-optimal transmissions in terms of the value used for $N_{1}$. This in turn makes data wait in queues before transmission, thus increasing the delay.

\subsubsection{Distributed Implementation}
Assuming algorithm $A$ is decentralized \cite{wu:srikant:05}\cite{lin:shroff:ton06}\cite{mazumdar:shroff:06}\cite{shroff:mazummdar:info:07}, DCP can be implemented in a distributed manner with low overhead. This is possible since consistent implementation of DCP in all nodes requires updates of only queue backlog and nodes' time-average of backlog-rate product, and such updates are needed only over \emph{long} time intervals.

More specifically, two conditions are required to be met for distributed implementation. First, nodes should generate the same sequence of random candidates for $N_{1}$ over time, which can be met by assuming the same number generator is employed by all nodes. Second, nodes should have the knowledge of backlog-rate product in the test and its preceding update interval in order to individually and consistently apply the update rule.

 The second condition can also be met, for instance, by requiring each source node perform the following. Every node, e.g., the $i_{\text{th}}$ node, records its own backlog, $X_{i}(t)$, only at the beginning of the test and update intervals. During these intervals, the $i_{\text{th}}$ node also computes its own \emph{individual} time-average of backlog-rate product $X_{i}D_{i}$. Here, we assume the time-averages in the test intervals are computed up to the last $N_{d}$ timeslots, where $N_{d}\ll N_{c}$. Then, once an update interval ends, the $i_{\text{th}}$ node has all the duration of a test interval, consisting of $N_{c}$ timeslots, to send all the other nodes $X_{i}$ and time-average of $X_{i}D_{i}$ for that update interval. Similarly, when the last $N_{d}$ timeslots in a test interval are reached, the $i_{\text{th}}$ node starts sending all the other nodes $X_{i}$ and time-average of $X_{i}D_{i}$ of that test interval, hence, having $N_{d}$ timeslots for communication. Since for each interval, data of each node, backlog in the beginning of the associated interval and the time-average, consists of at most a few bytes, we see that the overhead can be made arbitrarily small by choosing $N_{c}$ and $N_{d}$ large. At the same time, we can make the ratio $\frac{N_{d}}{N_{c}}$ sufficiently small, by choosing $N_{c}$ large, to ensure that not consideration of the last $N_{d}$ timeslots in the test intervals has little impact on the stability region.

 %Consider a stationary policy $\pi_{H}$ associated with a integer valued function $H$, which operates as follows. At start, when $\mathbf{X}(0)=0$, it picks a random $N_{1}$. At time $t_{1}=N_{1}$, it observes the backlog vector $\mathbf{X}(t_{1})$, gives the algorithm $A$ $H(\mathbf{X}(t_{1}))$ number of timeslots to solve (\ref{Dstar}) according to $\mathbf{X}(t_{1})$ $\mathbf{s}(t_{1})$. Then, at time $t_{2}=t_{1}+N_{1}$, gives the algorithm $A$ $H(\mathbf{X}(t_{2}))$ number of timeslots to solve (\ref{Dstar}) according to $\mathbf{X}(t_{2})$ $\mathbf{s}(t_{2})$. The stationary policy repeats the same steps for times $t_{k}=t_{k-1}+H(\mathbf{X}(t_{k-1}))$. It is easy to show that the stability region for this stationary policy is $\theta_{H}\Gamma$

%But by definition of $\tilde{\phi}(\mathbf{X})$, $\tilde{\phi}(\mathbf{X}) \geq \phi(\mathbf{X},H(\mathbf{X}))$, for all $H$. Hence
%\begin{align*}
%\theta_{H}\Gamma \subseteq \theta \Gamma.
%\end{align*}
%Thus, we have the following corollary.
\vspace{-.15in}
\section{Case Studies}\label{sec:casestudy}
In this section, we present two examples that provide further insight into our analytical results and the performance of DCP. To be able to compare the simulation results with analytical ones, we consider a small network consisting of two data flows in the downlink of a wireless LAN or a cellular network. In this case, $\mathbf{s}(t)$ is the vector of channel gains, and we assume the schedule vector is the power allocation vector, i.e., $\mathbf{I}=\mathbf{P}=(p_{1},p_{2})$, with constraint
\begin{align*}
p_{1}+p_{2}=P_{t},
\end{align*}
where $P_{t}$ is total power budget.
Assuming super-position coding is used in the downlink, if $s_{1}(t)<s_{2}(t)$, then \cite{tse}
\begin{align*}
D_{1}(\mathbf{s}(t),\mathbf{P})=\log\Big(1+\frac{p_{1}|s_{1}|^{2}}{p_{2}|s_{1}|^{2}+n_{0}}\Big),
%log(1+p2.*h12^2./n0);
\end{align*}
and
\begin{align*}
D_{2}(\mathbf{s}(t),\mathbf{P})=\log\Big(1+\frac{p_{2}|s_{2}|^{2}}{n_{0}}\Big).
%log(1+p2.*h12^2./n0);
\end{align*}
If $s_{1}(t)\geq s_{2}(t)$, we obtain similar expressions for user rates by swapping the role of one user for another.

For illustration purposes, we assume that algorithm $A$ in every step, i.e., during each timeslot, reduces the gap to the optimal backlog-rate product. Specifically, if the initial gap corresponding to the initial power vector $\mathbf{I}^{(0)}$, assumed to be chosen randomly, is $\Delta_{0}$, then after $i$ steps the gap is decreased to $\Delta_{i}$, where
\begin{align*}
\Delta_{i}&=\mathbf{X}\mathbf{D}^{*}(\mathbf{X},\mathbf{s})-\mathbf{X}\mathbf{D}(\mathbf{s},\mathbf{I}^{(n)})\\
          &=\frac{1}{\beta^{i}}\big(\mathbf{X}\mathbf{D}^{*}(\mathbf{X},\mathbf{s})-\mathbf{X}\mathbf{D}(\mathbf{s},\mathbf{I}^{(0)})\big)
          =\frac{\Delta_{0}}{\beta^{i}},
\end{align*}
where $\beta>1$. This case corresponds to $g(n)=(1-\zeta^{i})$ with $\zeta=\frac{1}{\beta}$, where $g(n)$ is introduced in Section~\ref{sec:algorithmA}.

Having specified rates and algorithm $A$, as the first example, we assume that the channel state is Markovian with two possible state vectors, namely, $\mathbf{s}_{1}=(1,5)$ and $\mathbf{s}_{2}=(5,1)$, where the channel vector in each transition takes a different state with probability $p_{t}=0.3$. For this case, we set $\alpha=0.06$, $N_{c}=12000$, $L_{1}=32$, $\beta=1.7$, $\mathcal{N}_{1}=\{N_{1}: 1\leq N_{1}\leq 6\}$, $n_{0}=10$, and $p_{t}=50$. To study the stability region, we consider the rate vector $\mathbf{a}=(2.4181,2.4181)$ which belongs to the boundary of $\Gamma$ corresponding to this example. We then assume the arrival vector is $\gamma\mathbf{a}$, where $\gamma$ is the load factor, and varies from $0.84$ to $0.92$. Fig.~\ref{fig:2state} depicts the resulting average queue sizes. For loads larger than $0.93$, the queue sizes increase with time implying network instability. The range selected for $\gamma$ is motivated by noting that $\theta_{\infty}=0.9447$, which is computed numerically. Considering the growth of average queue sizes in Fig.~\ref{fig:2state}, we therefore see that for this example $\theta_{\infty}$ is indeed an upper bound for capacity region scaling.
In fact, part~(c) of Theorem~\ref{theorem1} applies to this example, and any rate of the form $(\theta_{\infty}+\epsilon) \mathbf{a}$, $\epsilon>0$, makes the network unstable.

As for the second example, we increase the number of states to six corresponding to the following state vectors:
\begin{align*}
\mathbf{s}_{1}=(1,5),&\quad \mathbf{s}_{2}=(5,1),\\
\mathbf{s}_{3}=(1,2),&\quad \mathbf{s}_{4}=(2,1),\\
\mathbf{s}_{5}=(2,5), &\quad \mathbf{s}_{6}=(5,2),
\end{align*}
and having the following symmetric transition matrix:
\begin{eqnarray}
 \mathbf{T_{m}}= \left( \begin{array}{cccccc}
0.3 &0.1 &0.2 &0.1 &0.2 &0.1 \\
 0.1 &0.3 &0.1 &0.2 &0.1 &0.2\\
\vdots & \vdots & \vdots & \vdots & \ddots \\
 0.1 &0.2 &0.1 &0.2 &0.1 &0.3
\end{array} \right).
\end{eqnarray}
For this case, we keep the same $N_{c}$, $L_{1}$, and $\mathcal{N}_{1}$, but assume $\alpha=0.02$, $\beta=1.5$, $n_{0}=50$, and $p_{t}=10$. Similar to the previous example, to vary arrival rate vector, we consider the rate vector $\mathbf{a}=(0.6952,0.6952)$ which belongs to the boundary of $\Gamma$ associated with this example. Then, the arrival vector is assumed to be $\gamma\mathbf{a}$, where the load factor $\gamma$ varies from $0.67$ to $0.76$. The resulting average queue sizes are also shown in Fig.~\ref{fig:2state}. In this case, for load factors larger than $0.76$, the queue sizes increase with time, suggesting network instability. This result is consistent with our analytical results since the numerically computed value of $\theta_{\infty}$ is $0.7762$. Note that part~(c) of Theorem~\ref{theorem1} also applies to this example, and any rate of the form $(\theta_{\infty}+\epsilon) \mathbf{a}$, $\epsilon>0$, makes the network unstable.

Finally, in Fig.~\ref{fig:theta}, for the two examples, we have shown $\theta^{s}_{N_{1}}$ as a function of $N_{1}$, and also shown the value of $\theta_{\infty}$ for DCP. As expected and the figure suggests, since DCP adapts $N_{1}$ according to queue backlog, it outperforms the best static policy. We also see that the optimal stationary policy for the first example is the one with $N_{1}=3$ and $\theta^{s}_{o}=0.9122$, and for the second example is the one with $N_{1}=2$ and $\theta^{s}_{o}=0.7511$. Note that characterization of the best static policy requires computation of $\tilde{\phi}(\mathbf{X})$, which, as briefly discussed in Section~\ref{sec:dis:theta}, can be computationally intensive. From the figure, we also observe that the performance of a suboptimal static policy can be substantially less than DCP if the static policy does not assume a proper value for $N_{1}$.

% \footnote{Note that for both examples, the region $\theta^{s}_{N_{1}}\Gamma$ is either maximally stable or very close to the maximal stable region of the form $(\theta^{s}_{N_{1}}+\epsilon)\Gamma$, where $\epsilon < 10^{-5}$. This can be easily verified through numerical values for $\tilde{\phi}(\mathbf{X})$, and the proof of part~(c) of Theorem~\ref{theorem1}.}

%Hence, we see that the DCP policy expands the stability region of the optimal stationary policy by $\frac{\theta_{\infty}-\theta^{s}_{o}}{\theta^{s}_{o}}$ percentage, which is $3.6\%$ and $3.4\%$ for the first and the second example, respectively.

\begin{figure}[t]
\centering
\includegraphics[width=2.3in]{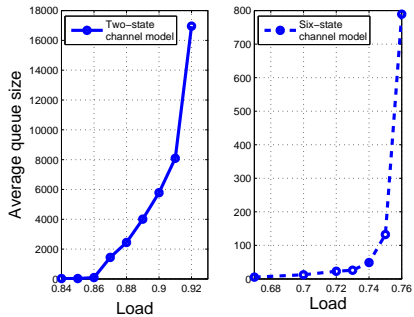}
\caption{Average queue size as a function of load factor.}\label{fig:2state}
 \vspace{-.1in}
\end{figure}
\begin{figure}[t]
\centering
\includegraphics[width=2.3in]{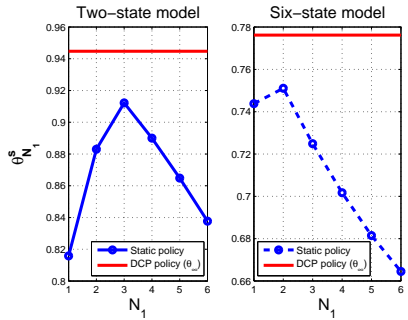}
\caption{Comparison of capacity region scaling for DCP and static policies.}\label{fig:theta}
 \vspace{-.1in}
\end{figure}

\vspace{-.1 in}
\section{Conclusion}\label{sec:conclusion}
In this paper, to improve the stable throughput region in practical network setups, we have considered the problem of scheduling in time-varying networks from a new perspective. Specifically, in contrast to previous research which assumes the search-time to find schedule vectors is negligible, we have considered this time, based on which we
modeled the time-efficiency of sub-optimal algorithms. Inspired by this modeling, we have proposed a dynamic control policy that dynamically but in a large time-scale tunes the time given to an available sub-optimal algorithm according to queue backlog and channel correlation. Remarkably, this policy does not require knowledge of input rates or the structure of available sub-optimal algorithms, nor it requires exact statistics of the channel process. We have shown that this policy can be implemented in a distributed manner with low overhead. In addition, we have analyzed the throughput stability region of the proposed policy and shown that its throughput region is at least as large as the one for any other, including the optimal, static policy. We believe that study and design of similar policies opens a new dimension in the design of scheduling policies, and in parallel to the efforts to improve the performance of sub-optimal algorithms, can help boost the throughput performance to the capacity limit.

\appendix[Proof of Theorem~\ref{theorem1}]
\begin{proof}[Proof of part (a)]

The proof of part (a) consists of two main parts. First, using several lemmas, we obtain a negative drift with a random number of steps. In the second part, we use the negative drift analysis to show that the return time to a bounded region has a finite expected value, and conforms to the properties required for network stability, according to the definition given in Section~\ref{sec:per:stab}.

We start by noting that $\theta\leq 1$, and since $\mathbf{a}$ is strictly inside $\theta\Gamma$, there must be some non-negative constants $\beta_{\mathbf{s},\mathbf{I}}$ with the property that for all $\mathbf{s}\in
\mathcal{S}$
\begin{align}
\sum_{\mathbf{I}\in\mathcal{I}}\beta_{\mathbf{s},\mathbf{I}} <\theta\leq 1,\label{beta}
\end{align}
such that
\begin{align}
 \mathbf{a}=\sum_{\mathbf{s}\in
\mathcal{S}}\pi(\mathbf{s})\sum_{\mathbf{I}\in\mathcal{I}}
\beta_{\mathbf{s},\mathbf{I}}\mathbf{D}_{\mathbf{s},\mathbf{I}}.\label{a:vector}
\end{align}
Considering (\ref{beta}), we can define positive $\xi'$ as
\begin{align*}
\xi'=\theta-\max_{\mathbf{s} \in \mathcal{S}}\sum_{\mathbf{I} \in
\mathcal{I}}\beta_{\mathbf{s},\mathbf{I}}.
\end{align*}
Since $\xi'>0$, by the definition of $\theta$, for $\|\mathbf{X}_{t}\|\neq 0$, we have that
\begin{align}
\frac{R_{\infty}(\tilde{\phi}(t)-\alpha-3\theta_{\varphi})}{\chi(\mathbf{X}_{t})}
-\max_{\mathbf{s}\in \mathcal{S}}\sum_{\mathbf{I} \in
\mathcal{I}}\beta_{\mathbf{s},\mathbf{I}} \  >\xi'>0.\label{xi:1}
\end{align}

To proceed with the proof, associated with a given time $t$,
 we define a sequence of random variables $\{\tau_{i}\}_{i=-1}^{\infty}$, where $\tau_{-1}$ and $\tau_{0}$
 denote the number of timeslots to the last timeslot of the previous and the current scheduling round, respectively, and $\tau_{i}$, $i\geq 1$, is
 the number of timeslots to the last timeslot of the $i_{\text{th}}$ subsequent scheduling round. Let $\mathcal{H}_{t}$ denote the past history of the system up to and including time $t$. Thus, given $\mathcal{H}_{t}$, the value of $\mathbf{X}_{t}$ is known. Let $f(\cdot)$ be defined as
 \begin{align*}
 f(\mathbf{X})=\|\mathbf{X}\|^{2},
 \end{align*}
 Considering a $\tau_{K}+1$-step drift with function $f(\cdot)$, we can write
 \begin{align}
\Delta(\tau_{K}+1)&=\mathds{E}[f(\mathbf{X}_{t+\tau_{K}+1})-f(\mathbf{X}_{t})|\mathcal{H}_{t}]
\nonumber \\
&=\mathds{E}[\sum_{k=0}^{\tau_{K}}f(\mathbf{X}_{t+k+1})-f(\mathbf{X}_{t+k})|\mathcal{H}_{t}] \nonumber \\
&=\mathds{E}[\sum_{k=0}^{\tau_{K}}(\mathbf{X}_{t+k+1}+\mathbf{X}_{t+k})(\mathbf{X}_{t+k+1}-\mathbf{X}_{t+k})|\mathcal{H}_{t}]. \nonumber
 \end{align}

  Using the fact that arrivals and departures are bounded, after performing some
 preliminary steps, we can show that
\begin{align*}
&\Delta(\tau_{K}+1)
\nonumber \\&\leq\mathds{E}\bigg[(\tau_{K}+1)C_{1}+(\tau_{K}+1)^{2}C_{2}
\nonumber \\ &\quad \qquad \qquad \qquad+2
\sum_{k=0}^{\tau_{K}}(\mathbf{X}_{t}\mathbf{A}_{t+k}-\mathbf{X}_{t+k}\mathbf{D}_{t+k})\ \big|\mathcal{H}_{t}\bigg],
\end{align*}
for appropriate constants $C_{1}$ and $C_{2}$. Since $\mathbf{X}_{t+k}\mathbf{D}_{t+k}\geq 0$, we have
\begin{align*}
\Delta(\tau_{K}+1)&\leq\mathds{E}\bigg[(\tau_{K}+1)C_{1}+(\tau_{K}+1)^{2}C_{2}
 \nonumber\\&\quad \qquad \qquad
 +2\sum_{k=0}^{\tau_{K}}(\mathbf{X}_{t}\mathbf{A}_{t+k}-\mathbf{X}_{t}\mathbf{a})
  \nonumber \\&\quad \qquad \qquad +2\sum_{k=0}^{\tau_{K}}(\mathbf{X}_{t}\mathbf{a}- \mathbf{X}_{t+k}\mathbf{D}^{*}_{t+k})
  \\\nonumber&
\qquad \qquad\quad  + 2\sum_{k=0}^{\tau_{K}}\mathbf{X}_{t+k}\mathbf{D}^{*}_{t+k}
 \\\nonumber&
\qquad \qquad\quad   - 2\sum_{k=\tau_{0}+1}^{\tau_{K}}\mathbf{X}_{t+k}\mathbf{D}_{t+k}
\ \big|\mathcal{H}_{t}\bigg],
\end{align*}
where $\mathbf{D}^{*}_{t+k}=\mathbf{D}^{*}(\mathbf{X}(t+k),\mathbf{s}(t+k))$. In the following, we derive an upper bound for $\Delta(\tau_{K}+1)$.

As mentioned in Section~\ref{sec:net:queue}, arrivals are i.i.d with mean vector $\mathbf{a}$. We can therefore
apply the same method used to prove Lemma~\ref{prelemma} to obtain
\begin{align*}
\mathds{E}\big[\|\sum_{k=0}^{\tau_{K}}\mathbf{A}_{t+k}-(\tau_{K}+1)\mathbf{a}\| \ \big|\mathcal{H}_{t}\big]\leq \epsilon \mathds{E}\big [(\tau_{K}+1)|\mathcal{H}_{t}\big],
\end{align*}
where $\epsilon>0$, and can be made arbitrarily small by choosing a sufficiently large $K$.

Using the above inequality, Lemma~\ref{lemma:D*}, Lemma~\ref{lemma:2}, and Lemma~\ref{lemma:3}, all with the same choice for $\epsilon$,
we can show that
\begin{align}
&\Delta(\tau_{K}+1)\leq
\mathds{E}\Big[(\tau_{K}+1)\|\mathbf{X}_{t}\|\chi(\mathbf{X}_{t})
\nonumber \\& \Big(  \epsilon_{1}+2\big(\max_{\mathbf{s}\in \mathcal{S}}\sum_{\mathbf{I} \in
\mathcal{I}}\beta_{\mathbf{s},\mathbf{I}}-\frac{R_{\infty}(\tilde{\phi}(t)-\alpha-3\theta_{\varphi})}{\chi(\mathbf{X}_{t})}\ \big) \Big) \big|\mathcal{H}_{t}\Big]\label{dd:final},
\end{align}
where
\begin{align}
\epsilon_{1}=\frac{1}{\chi(\mathbf{X}_{t})}\Big(\frac{C_{1}}{\|\mathbf{X}_{t}\|}
+\frac{C_{2}(\tau_{K}+1)}{\|\mathbf{X}_{t}\|}+8\epsilon\Big).
\end{align}
Note that according to the lemmas, $\epsilon$ can take any given positive real number if $K$ and $\|\mathbf{X}_{t}\|$ are sufficiently large.

Similarly, $\epsilon_{1}$ can assume any given positive value. To see this, first note that since $\mathbf{a} \in
\theta \Gamma $, we have $\mathbf{a} \in
\Gamma $. Thus, for any user, e.g. the $i_{\text{th}}$ user, for which $\mathbf{a}_{i}>0$, there has to be a
state $\mathbf{s}$ and a schedule $\mathbf{I}$ satisfying
\begin{align*}
\pi(\mathbf{s})\mathbf{D}(\mathbf{s},\mathbf{I})_{i}>0,
\end{align*}
where $\mathbf{D}(\mathbf{s},\mathbf{I})_{i}$ is the $i_{\text{th}}$ element of vector $\mathbf{D}(\mathbf{s},\mathbf{I})$.
Otherwise, $\mathbf{a}_{i}$ should be zero, contradicting the assumption. Therefore, assuming $\mathbf{a}\neq \mathbf{0}$, we can define positive
$\upsilon$ as
\begin{eqnarray*}
\upsilon=\min_{i\in N} \max_{\mathbf{s},\mathbf{I}} \pi(\mathbf{s})\mathbf{D}(\mathbf{s},\mathbf{I})_{i}>0. \nonumber
\end{eqnarray*}
Thus,
\begin{eqnarray}
\mathds{E}[\mathbf{X}_{t}\mathbf{D}^{*}(\mathbf{X}_{t},\mathbf{s})]\geq \upsilon \max_{i\in
N}\mathbf{X}(t)_{i}\geq \frac{\upsilon}{\sqrt{N}}
\|\mathbf{X}_{t}\|.\label{EXt}
\end{eqnarray}
This implies that for all nonzero $\mathbf{X} \in \mathds{R}^{N}$
\begin{eqnarray}
\chi(\mathbf{X})\geq \frac{\upsilon}{\sqrt{N}}.\label{chi}
\end{eqnarray}
On the other hand, since departure rates are bounded above by $D_{max}$, we have
\begin{align}
\chi(\mathbf{X})\leq \sqrt{N}D_{max}.
\end{align}
Now consider any positive $\epsilon_{2}$, and suppose $K$ is sufficiently large such that for large $\|\mathbf{X}_{t}\|$ we have
\begin{align*}
\frac{8 \epsilon}{\chi(\mathbf{X}_{t})} \leq \frac{8\sqrt{N}\epsilon}{\upsilon}<\frac{\epsilon_{2}}{3},
\end{align*}
where the first inequality follows from (\ref{chi}).
This upper-bounds the third term in $\epsilon_{1}$.
 Since for any $K$, and in particular, the chosen one, we have $\tau_{K}+1\leq (K+1)(1+L_{1})N_{c}$, we see that if $\|\mathbf{X}_{t}\|$ is appropriately large, the first and second terms in $\epsilon_{1}$ can also be less than $\frac{\epsilon_{2}}{3}$. Thus, for any given positive $\epsilon_{2}$, we can find
an appropriately large $K$ such that for sufficiently large $\|\mathbf{X}_{t}\|$, (\ref{dd:final}) holds with $\epsilon_{1}<\epsilon_{2}$.

Suppose $K$ is sufficiently large, and $\|\mathbf{X}_{t}\|>M_{K}$ for
appropriately large $M_{K}$ such that $\epsilon_{1}<\xi'$. We can use (\ref{dd:final}) and (\ref{xi:1})
to show that
\begin{align}
&\Delta(\tau_{K}+1) <-\mathds{E}\Big[\xi'\|\mathbf{X}_{t}\|(\tau_{K}+1)\chi(\mathbf{X}_{t}) \ \big|\mathcal{H}_{t}\big]
.\nonumber
\end{align}
This inequality and (\ref{chi}) further imply that
\begin{align}
&\Delta(\tau_{K}+1)<
-\mathds{E}\Big[\xi(\tau_{K}+1)
\|\mathbf{X}_{t}\|\big|\mathcal{H}_{t}\Big],\label{last:ineq}
\end{align}
where $\xi=\frac{\upsilon}{\sqrt{N}}\xi'>0$. We, therefore, have obtained the negative drift expression, completing the first part of the proof.

Note that in above $\tau_{K}$ is a random variable, and in fact, is a stopping time with respect to the filtration
$\mathcal{H}=\{\mathcal{H}_{t}\}_{t=0}^{\infty}$. This means that we have obtained a drift expression that is based on a random number of steps. Proofs of stability in the literature, however, are often based on a negative drift with a fixed number of steps. This contrast has motivated us to adopt an interesting method recently developed in \cite{fralix}. This method is general since it can be applied in both cases, and also leads to an intuitive notation of stability. However,
it has been originally developed for Markov chains. Therefore, as well
as using less technical notations, in what follows, we apply minor modifications to the method so
that it is appropriate in our context.

We now, in the second part of the proof, use the negative drift, and prove that the expected value of the return time to some bounded region
is finite in a manner that renders network stable. Let $\mathcal{C}$ denote the bounded region, and be defined as
\begin{align*}
 \mathcal{C}=\{X\in
\mathds{R}^{N}, \|X\|\leq M_{K}\}.
\end{align*}
Associated with $\mathcal{C}$, we define
$\sigma_{\mathcal{C}}$ to be the
number timeslots after which the process $\{\mathbf{X}_{t+i}\}_{i=0}^{\infty}$
\emph{enters} $\mathcal{C}$, i.e.,
\begin{align*}
\sigma_{\mathcal{C}}=\inf\{i\geq 0:\mathbf{X}_{t+i}\in
\mathcal{C}\}.
\end{align*}
Similarly, we let $\tau_{\mathcal{C}}$ be
\begin{align*}
\tau_{\mathcal{C}}=\inf\{i\geq 1:\mathbf{X}_{t+i}\in
\mathcal{C}\}.
\end{align*}
 Therefore, $\tau_{\mathcal{C}}$, in contrast $\sigma_{\mathcal{C}}$, characterizes the first time that the process $\{\mathbf{X}_{t+i}\}_{i=1}^{\infty}$ \emph{returns} to $\mathcal{C}$.

 Back to the drift expression in (\ref{last:ineq}), let $\eta$ be a random variable defined by
\begin{align}
\eta=\xi (\tau_{K}+1)\|\mathbf{X}_{t}\|.\nonumber
\end{align}
We obtain, for $K$ sufficiently large,
\begin{eqnarray}
\mathds{E}[f(\mathbf{X}_{t+\tau_{K}+1})+\eta |\mathcal{H}_{t}]\leq
f(\mathbf{X}_{t}), \label{main:ineq}
\end{eqnarray}
provided that $\|\mathbf{X}_{t}\|>M_{K}$.
 Let $\eta_{0}=\eta$, and $\tau_{K,0}=\tau_{K}$, where $\eta$ and $\tau_{K}$ are random variables defined by considering time $t$.
 We now consider time $t_{K}^{(1)}=t+\tau_{K,0}+1$. For this particular time, we can define another pair $\tau_{K,1}$ and $\eta_{1}$ and  such that if $\|\mathbf{X}_{t_{K}^{(1)}}\|>M_{K}$, then
\begin{eqnarray}
\mathds{E}[f(\mathbf{X}_{t_{K}^{(1)}+\tau_{K,1}+1})+\eta_{1} |\mathcal{H}_{t_{K}^{(1)}}]\leq
f(\mathbf{X}_{t_{K}^{(1)}}),\nonumber
\end{eqnarray}
where $\tau_{K,1}$ is the number of timeslots from time $t_{K}^{(1)}$ to the last timeslot of the $K_{\text{th}}$ subsequent scheduling round, and
\begin{align*}
\eta_{1}=\xi (\tau_{K,1}+1)\|\mathbf{X}_{t_{K}^{(1)}}\|.
\end{align*}
Note that the definition of $\tau_{K,1}$ and $\eta_{1}$ is independent of whether the previous inequality holds.

 We can continue this process by considering the drift criteria for time $t_{K}^{(i)}=t_{K}^{(i-1)}+\tau_{K,i-1}+1$, and defining random variables $\tau_{K,i}$ and $\eta_{i}$. The random variables $\tau_{K,i}$ and $\eta_{i}$ have a similar definition as $\tau_{K,1}$ and $\eta_{1}$, respectively, except that they are associated with time $t_{K}^{(i)}$. Using these definitions, we can define $t_{K}^{(i)}$ more precisely by
\begin{align*}
&t_{K}^{(0)}=t,\nonumber \\
&t_{K}^{(i)}=t_{K}^{(i-1)}+(\tau_{K,i-1}+1)=t+\sum_{j=0}^{i-1}(\tau_{K,j}+1).
\end{align*}
Note that $t_{K}^{(i)}$ is a stopping time with respect
to $\mathcal{H}$.
Using $t_{K}^{(i)}$, we set
\begin{align}
\mathbf{\bar{X}}_{i}=\mathbf{X}_{t_{K}^{(i)}}, \ i\geq
0,
\end{align}
and define $\mathcal{H}^{\tau}$ as the filtration
given by $\mathcal{H}^{\tau}=\{\mathcal{H}_{
t_{K}^{(i)}}\}_{i=0}^{\infty}$.
 In addition, associated with $\eta_{i}$, which is given by
 \begin{align*}
\eta_{i}=\xi (\tau_{K,i}+1)\|\mathbf{X}_{t_{K}^{(i)}}\|,
 \end{align*}
we define $\eta^{(i)}$ as
\begin{eqnarray}
\eta^{(0)}=0,\ \ \eta^{(i)}=\sum_{j=0}^{i-1}\eta_{j}.\label{eta}
\end{eqnarray}
We also define $\nu$ as
\begin{align}
\nu=\inf\{i\geq 0:
t_{K}^{(i)}\geq t+\sigma_{C}\},\label{nu}
\end{align} which is a stopping
time with respect to $\mathcal{H}^{\tau}$. Intuitively, $\nu$ marks the first time $t_{K}^{(i)}$ at or before which the process $\{\mathbf{X}_{t+i}\}_{i=0}^{\infty}$ enters $\mathcal{C}$. We finish the chain of definitions by introducing
the sequence $\{Z_{i}\}_{i=0}^{\infty}$,
where
\begin{align}
Z_{i}=f(\mathbf{\bar{X}}_{i})+\eta^{(i)}.%\mathds{E}[\eta^{(i)}|\mathcal{H}_{\tau_{K}^{(i)}}].
\end{align}
For $i < \nu$, using (\ref{eta}), we have
\begin{align}
\mathds{E}[Z_{i+1}|\mathcal{H}_{t_{K}^{(i)}}]
%&=
%\mathds{E}\Big[f(\mathbf{\bar{X}}_{i+1})+\mathds{E}\big[\eta^{(i+1)}|
%\mathcal{H}_{\tau_{K}^{(i+1)}}\big]\big|\mathcal{H}_{\tau_{K}^{(i)}}\Big]
%\nonumber\\
&=
\mathds{E}\big[f(\mathbf{\bar{X}}_{i+1})+\eta_{i}|\mathcal{H}_{t_{K}^{(i)}}\big]
%+\mathds{E}\big[\eta^{(i)}|\mathcal{F}_{\tau_{K}^{(i)}}\big]
+\eta^{(i)} \nonumber
\\& \leq
f(\mathbf{\bar{X}}_{i})+\eta^{(i)}
%+\mathds{E}[\eta^{(i)}|\mathcal{H}_{\tau_{K}^{(i)}}]
=Z_{i},
\label{sup:martingle}
\end{align}
where the first equality follows from the fact that $\eta^{(i)}$ is completely determined given
$\mathcal{H}_{t_{K}^{(i)}}$, and the inequality is simply an immediate result of
(\ref{main:ineq}) and the assumption $i<\nu$. To simplify the notation, let $\nu \wedge i$ denote
\begin{align}
\nu \wedge i= \min(\nu,i).\nonumber
\end{align}
It now follows directly from (\ref{sup:martingle}) that the sequence
$\{Z_{\nu\wedge i}\}_{i=0}^{\infty}$ is an
$\mathcal{H}^{\tau}$-supermartingale. Since $f(\cdot)$ is non-negative, we have
\begin{align*}
\mathds{E}[\eta^{(\nu \wedge
i)}|\mathcal{H}_{t}]\leq \mathds{E}[Z_{\nu \wedge
i}|\mathcal{H}_{t}]
%&=\mathds{E}\Big[\mathds{E}[|\eta^{(\nu \wedge
%i)}|\mathcal{H}_{\tau_{K}^{(\nu \wedge
%i)}}]\big|\mathcal{H}_{t}\Big] \nonumber\\&\leq
%\mathds{E}[Z_{\nu \wedge
%i}|\mathcal{H}_{t}]
.
\end{align*}
But $\mathcal{H}_{t}=\mathcal{H}_{t_{K}^{(0)}}$, and $\{Z_{\nu\wedge i}\}_{i=0}^{\infty}$ is a supermartingale. Hence,
\begin{align*}
\mathds{E}[Z_{\nu \wedge
i}|\mathcal{H}_{t}]& =\mathds{E}[Z_{\nu \wedge
i}|\mathcal{H}_{t_{K}^{(0)}}]
\\ &\leq Z_{0}=f(\mathbf{X}_{t}).
\end{align*}
Considering the last two inequalities, we obtain
\begin{align}
\mathds{E}[\eta^{(\nu \wedge
i)}|\mathcal{H}_{t}]\leq f(\mathbf{X}_{t}).\label{tau:pre}
\end{align}
In addition, using
the definition of $\eta^{(i)}$ and $\eta_{j}$ while assuming $M_{K}>1$, it is easy to see that
\begin{align}
\eta^{(\nu \wedge i)}=\sum_{j=0}^{
i-1}\eta_{j}\mathbf{1}_{(j<\nu)}&\geq \xi\sum_{j=0}^{
i-1}(\tau_{K,j}+1)\mathbf{1}_{(j<\nu)} \nonumber
\\&=\xi(t_{K}^{(\nu \wedge i)}-t). \label{tau:pre2}
\end{align}
Applying the monotone convergence theorem \cite{borovkov}, we can take the limit in (\ref{tau:pre}) and (\ref{tau:pre2}) as $i\rightarrow \infty$ yielding
\begin{eqnarray*}
\mathds{E}[t_{K}^{(\nu)}-t|\mathcal{H}_{t}]\leq
\xi^{-1} f(\mathbf{X}_{t}).
\end{eqnarray*}
But by definition in (\ref{nu}), $\sigma_{\mathcal{C}} \leq t_{K}^{(\nu)}-t$. Thus, for $\mathbf{X}_{t}\notin \mathcal{C}$
\begin{eqnarray}
\mathds{E}[\sigma_{\mathcal{C}}|\mathcal{H}_{t}]\leq \xi^{-1}
f(\mathbf{X}_{t})\label{sigma_C}.\nonumber
\end{eqnarray}
If $\mathbf{X}_{t}\in \mathcal{C}$, we have $\sigma_{\mathcal{C}}=0$. Hence, we have that
\begin{eqnarray}
\mathds{E}[\sigma_{\mathcal{C}}|\mathcal{H}_{t}]\leq \xi^{-1}
f(\mathbf{X}_{t})\mathbf{1}_{\mathbf{X}_{t}\notin \mathcal{C}},\nonumber
\end{eqnarray}
showing that the expected $\sigma_{\mathcal{C}}$ is bounded by a function of $\mathbf{X}_{t}$ uniformly in the past history and $t$, as required. This completes the proof of part (a) of the theorem.
\end{proof}

 \begin{proof}[Proof of part (b)]
 Part~(b) follows directly from part~(a) of the theorem as a corollary by noting that $\theta_{\varphi}$ and $\varrho_{\varphi}$ can be made arbitrarily small by assuming a sufficiently large $N_{c}$, as stated in Property~\ref{p3}. This allows us to select arbitrarily small values for $\alpha$. In addition, we can chose a sufficiently large value for $L_{1}$ such that for sufficiently small values for $\theta_{\varphi}$ and $\varrho_{\varphi}$, $R_{\infty}$ is arbitrarily close to one. Considering these choices, we see that we can make $\theta$ arbitrarily close to $\theta_{\infty}$, as required.
\end{proof}
 \begin{proof}[Proof of part (c)]
 Since part (c) of the theorem only concerns existence of such networks for which the region $\theta_{\infty}\Gamma$ is maximally stable under DCP, for simplicity of exposition, we consider a network consisting of two users, i.e., two data flows. Note that
our approach can be extended to more general networks with $N$ data flows. Here, we adopt a direct method and show that with positive
probability norm of the backlog vector approaches infinity. Therefore, the expected value of the return time to any bounded region becomes infinity, implying network instability. We start by introducing several definitions followed by four conditions sufficient for network instability.

Let $\bar{\mathbf{D}}$ and $\bar{\mathbf{D}}^{*}$ be defined by
\begin{align*}
\bar{\mathbf{D}}(\mathbf{X})&=\mathds{E}\Big[\lim_{n \to \infty}\frac{1}{n}\sum_{i=0}^{n-1}(\mathbf{D}_{t+i}-\mathbf{U}_{t+i})\ \big| \mathbf{X}_{t+i}=\mathbf{X}, i\geq 0\Big],
\end{align*}
and
\begin{align*}
\bar{\mathbf{D}}^{*}(\mathbf{X})&=\mathds{E}_{\mathbf{s}}\big[\mathbf{D^{*}}(\mathbf{X},\mathbf{s})\big],
\end{align*}
where $\mathbf{D^{*}}(\mathbf{X},\mathbf{s})$ is defined in (\ref{dstarv}).
In addition, let $\mathbf{X}_{min}$ be\footnote{Note that here infimum can be achieved since the functions $\tilde{\phi}(\mathbf{X})$ and $\chi(\mathbf{X})$ are continuous functions of $\mathbf{X}$, and the infimum is taken over a closed interval.}
\begin{align*}
\mathbf{X}_{min}=\arginf_{\|\mathbf{X}\|=1} \ \frac{\tilde{\phi}(\mathbf{X})}{\chi(\mathbf{X})}.
\end{align*}
Note that in the definition of $\bar{\mathbf{D}}$, we hypothetically assume that the backlog vector after time $t$ is fixed and does not change. This is similar to the method used to define $\phi(\mathbf{X},N_{1})$ except that here we do not assume a fixed value for $N_{1}$, and instead, assume DCP adapts $N_{1}$ as if the backlog vector was changing. In addition, note that by the ergodicity of the channel process $\bar{\mathbf{D}}$ does not depend on $t$, and moreover, by Property~\ref{p2}, $\bar{\mathbf{D}}$ does not depend on $\|\mathbf{X}\|$. To simplify the subsequent analysis, we also consider the following definitions:
\begin{Definition}
For a given $\mathbf{X}$ and a given $\epsilon>0$, the $\epsilon$-neighborhod of $\mathbf{X}$ is defined by
\begin{align*}
\mathcal{N}(\mathbf{X},\epsilon)=\{\mathbf{X}_{1}:
\|\mathbf{X}_{1}-\mathbf{X}\|<\epsilon \}.
\end{align*}
\end{Definition}
\begin{Definition}
For a given $\mathbf{X}$ with $\|\mathbf{X}\|=1$, and a given $\epsilon>0$, the \emph{normalizing} region $\mathcal{R}(\mathbf{X},\epsilon)$ is defined by
\begin{align*}
\mathcal{R}(\mathbf{X},\epsilon)=\{\mathbf{X}_{1}:\|\mathbf{X}_{1}\|\neq 0,
\Big\|\frac{\mathbf{X}_{1}}{\|\mathbf{X}_{1}\|}-\mathbf{X}\Big\|<\epsilon \}\cup \{\mathbf{X}_{1}=\mathbf{0}\}.
\end{align*}
\end{Definition}
\begin{Definition}
Consider a region $\mathcal{R}$ and a vector $\mathbf{X}$ inside $\mathcal{R}$. We define $\xi(\mathbf{X},\mathcal{R})$ as the \emph{supremum} of the angular deviation of the vectors in $\mathcal{R}$ from $\mathbf{X}$, i.e.,
\begin{align*}
\xi(\mathbf{X},\mathcal{R})=\sup_{\mathbf{Y}\in \mathcal{R}} \arccos\Big(\frac{\mathbf{X}\mathbf{Y}}{\|\mathbf{X}\| \ \|\mathbf{Y}\|}\Big).
\end{align*}
\end{Definition}

Now suppose the following conditions hold:
\begin{enumerate}
\item[C1)] $\mathbf{X}_{min}=\gamma_{1}\bar{\mathbf{D}}(\mathbf{X}_{min})=\gamma_{2}\bar{\mathbf{D}}^{*}(\mathbf{X}_{min})$, for some $\gamma_{1},\gamma_{2}>0$.
\item[C2)] For any $N_{1,1}\in \mathcal{N}_{1}$ and $N_{1,2}\in \mathcal{N}_{1}$ with $N_{1,1}\neq N_{1,2}$, we have $\phi(\mathbf{X}_{min},N_{1,1})\neq \phi(\mathbf{X}_{min},N_{1,2})$.
\item[C3)] For any $\beta_{1}>0$ and $\beta_{2}>0$, there exists a sufficiently small $\epsilon>0$ such that if $\mathbf{X}\in \mathcal{R}(\mathbf{X}_{min},\epsilon)$, then
\begin{align}
\bar{\mathbf{D}}(\mathbf{X})-\bar{\mathbf{D}}(\mathbf{X}_{min})=\lambda_{1}\bar{\mathbf{D}}
(\mathbf{X}_{min})+\lambda_{2}
(\frac{\mathbf{X}}{\|\mathbf{X}\|}-\mathbf{X}_{min}),
\nonumber \end{align}
for some $\lambda_{1}$ and $\lambda_{2}$ satisfying $|\lambda_{1}|<\beta_{1}$ and $0<\lambda_{2}<\beta_{2}$.
\item[C4)] For any $\mathbf{X} \in \mathds{N}^{N}$, for some $t$
\begin{align}
P(\mathbf{X}_{t}=\mathbf{X})>0.\nonumber
\end{align}
\end{enumerate}

 Condition C1 may be met by assuming a statistically symmetric channel states as the ones in Section~\ref{sec:casestudy}. Condition C2 simply requires the function $\phi(\mathbf{X}_{min},N_{1})$ to be a one-to-one function of $N_{1}$ at $\mathbf{X}_{min}$. Condition C3 intuitively states that the average departure rates should be a \emph{continuous} function\footnote{As opposed to traditional definitions which usually use $\mathcal{N}(\mathbf{X},\epsilon)$ to define continuity, here, the region $\mathcal{R}(\mathbf{X},\epsilon)$ is used to characterize continuity. } of $\mathbf{X}$ around $\mathbf{X}_{min}$, and in particular, when $\mathbf{X}$ deviates from $\mathbf{X}_{min}$, these rates should deviate from $\bar{\mathbf{D}}(\mathbf{X}_{min})$ in a similar manner. This is in fact expected as increasing the backlog vector in one dimension should increase the expected departure rate in that dimension, which can be considered as a result of the approximation to the GMWM problem through the use of algorithm $A$. Note that in C3 where appropriate the vector $\mathbf{X}$ is normalized by its norm since $\bar{\mathbf{D}}(\mathbf{X})$ does not depend on $\|\mathbf{X}\|$. Finally, C4 simply requires the process $\{\mathbf{X}_{t}\}$ to be able to reach all vectors in $\mathds{N}^{N}$, although what we need for the proof is a relaxed version of this assumption. Using the numerical results for $\phi(\mathbf{X}_{min},N_{1})$ and $\bar{\mathbf{D}}^{*}(\mathbf{X}_{min})$, and the symmetry of channel states, it is easy to verify that the conditions C2-C4 also hold for the examples in Section~\ref{sec:casestudy}. Therefore, there are examples for which the conditions C1-C4 hold. Next, we show that these conditions are sufficient for network instability, completing the proof of part~(c).

 First, note that $\bar{\mathbf{D}}^{*}(\mathbf{X}_{min})\in \Gamma$, which directly follows from the definition
  of $\bar{\mathbf{D}}^{*}(\mathbf{X}_{min})$ and $\Gamma$. Second, the rate $\bar{\mathbf{D}}^{*}(\mathbf{X}_{min})$ belongs to the boundary of $\Gamma$,
otherwise we could find another vector $\mathbf{D}$ inside $\Gamma$ and within a small neighborhood of $\bar{\mathbf{D}}^{*}(\mathbf{X}_{min})$ with larger backlog-rate product, in contradiction with the definition of $\bar{\mathbf{D}}^{*}(\mathbf{X}_{min})$. Hence, we see that the rate $\theta_{\infty}\bar{\mathbf{D}}^{*}(\mathbf{X}_{min})$ belongs to the boundary of $\theta_{\infty}\Gamma$. Third, we can see that by the definition of $\theta_{\infty}$ and $\mathbf{X}_{min}$
\begin{align*}
\theta_{\infty}=\frac{\tilde{\phi}(\mathbf{X}_{min})}{\chi(\mathbf{X}_{min})}\geq \frac{\mathbf{X}_{min}
\bar{\mathbf{D}}(\mathbf{X}_{min})}{\chi(\mathbf{X}_{min})}.
\end{align*}
This is because DCP may use sub-optimal values for $N_{1}$, which by C2 make $\mathbf{X}_{min}
\bar{\mathbf{D}}$ less than $\tilde{\phi}(\mathbf{X}_{min})$ when $N_{c}$ is large.
Using C1 and the above inequality, we have
\begin{align*}
\theta_{\infty}\geq \frac{\|\bar{\mathbf{D}}(\mathbf{X}_{min})\|}{\|\bar{\mathbf{D}}^{*}(\mathbf{X}_{min})\|},
\end{align*}
which implies that
\begin{align*}
\bar{\mathbf{D}}(\mathbf{X}_{min})\leq \ \theta_{\infty} \ \bar{\mathbf{D}}^{*}(\mathbf{X}),
\end{align*}
where the inequality is component-wise. Without of loss of generality, we assume that
\begin{align}
\bar{\mathbf{D}}(\mathbf{X}_{min})=\ \theta_{\infty} \ \bar{\mathbf{D}}^{*}(\mathbf{X}).\label{dinf}
\end{align}

\begin{figure}[t]
\centering
\includegraphics[width=2.4in]{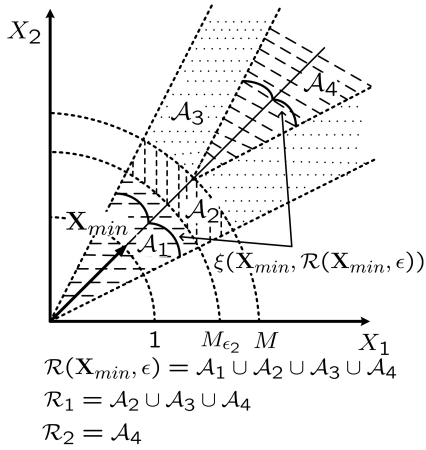}
\caption{Illustration of regions $\mathcal{R}(\mathbf{X}_{min},\epsilon)$, $\mathcal{R}_{1}$, and $\mathcal{R}_{2}$.}\label{fig:m_r}
 \vspace{-.0in}
\end{figure}

Let the input rate be
\begin{align}
\mathbf{a}=(\theta_{\infty}+\varsigma)\bar{\mathbf{D}}^{*}(\mathbf{X})\label{inrate},
\end{align}
for some $\varsigma>0$, which is clearly outside of the region $\theta_{\infty}\Gamma$.
 Let $\Delta_{t,n}$ be the drift vector defined by
\begin{eqnarray}
\Delta_{t,n}=\frac{1}{n}\sum_{i=0}^{n-1}\mathbf{A}_{t+i}-\frac{1}{n}
\sum_{i=0}^{n-1}(\mathbf{D}_{t+i}-\mathbf{U}_{t+i}).\nonumber
\end{eqnarray}
In addition, let
\begin{align*}
 \bar{\Delta}_{\mathbf{X}}=\mathbf{a}-\bar{\mathbf{D}}(\mathbf{X}).
 \end{align*}
Note that $\bar{\Delta}_{\mathbf{X}}$ does not depend on $\|\mathbf{X}\|$ since $\bar{\mathbf{D}}(\mathbf{X})$ has the same property as pointed out earlier.
 Suppose for a given $\epsilon_{1}$, the values for $\beta_{1}$ and $\beta_{2}$ are chosen such that by C3 if $\mathbf{X}_{t}\in \mathcal{R}(\mathbf{X}_{min},\epsilon)$, for appropriately small $\epsilon$, then the following holds
 \begin{align*}
\| \bar{\Delta}_{\mathbf{X}_{t}} -\bar{\Delta}_{\mathbf{X}_{min}}\|<\epsilon_{1}.
 \end{align*}
Using Assumption~\ref{p1}, condition C2, channel ergodicity as stated in Section~\ref{sec:channelp}, and that arrivals are i.i.d, it is not hard to see that\footnote{A similar discussion similar to the one for Property~\ref{p3} applies here.} when $\epsilon$ is sufficiently small, for any positive $\epsilon_{2}$ and $0<\zeta\leq 1$, we can first chose $n$ large and then $M_{\epsilon_{2}}$ sufficiently large, and define the region $\mathcal{R}_{1}$ as
\begin{align}
\mathcal{R}_{1}&=\{\mathbf{X}: \|\mathbf{X}\|\geq M_{\epsilon_{2}}, \mathbf{X}\in \mathcal{R}(\mathbf{X}_{min},\epsilon) \}\nonumber
\end{align}
such that
\begin{align}
&P\big(\|\Delta_{t,n}-\bar{\Delta}_{\mathbf{X}_{t}}\|<\epsilon_{2} | \mathcal{H}_{t},\mathbf{X}_{t}\in \mathcal{R}_{1}\big)
   >(1-\zeta),\label{pdelta2}
\end{align}
where in above
\begin{align}
\bar{\Delta}_{\mathbf{X}_{t}}=\frac{(\varsigma-\lambda_{1}\theta_{\infty})}
{\theta_{\infty}+\varsigma}\mathbf{a}-\lambda_{2}\big(\frac{\mathbf{X}_{t}}{\|\mathbf{X}_{t}\|}-\mathbf{X}_{min}\big).
\label{deltabar}
\end{align}
The above equality is obtained by using condition C3, equality (\ref{dinf}), and considering that the input rate is given by (\ref{inrate}). In particular, we have that
\begin{align*}
\bar{\Delta}_{\mathbf{X}_{min}}=\frac{\varsigma}
{\theta_{\infty}+\varsigma}\mathbf{a}.
\end{align*}
Since $\epsilon_{2}$ can be made arbitrarily small by choosing sufficiently large $n$ and $M_{\epsilon_{2}}$, we assume that for all $\mathbf{X}\in \mathcal{R}_{1}$
\begin{align}
\xi(\bar{\Delta}_{\mathbf{X}},\mathcal{N}(\bar{\Delta}_{\mathbf{X}},\epsilon_{2}))
<\frac{\xi(\mathbf{X}_{min},\mathcal{R}(\mathbf{X}_{min},\epsilon))}{2}.\label{angle2}
\end{align}
Hence, according to (\ref{pdelta2}) and (\ref{angle2}), for $\mathbf{X}_{t}\in \mathcal{R}_{1}$, with probability larger than $(1-\zeta)$ the drift $\Delta_{t,n}$ is close to $\bar{\Delta}_{\mathbf{X}_{t}}$ with a supremum angular deviation that is half of the supremum angular deviation of $\mathbf{X}$'s in $\mathcal{R}(\mathbf{X}_{min},\epsilon)$ from $\mathbf{X}_{min}$.

%We thus for arbitrarily small values for $\epsilon_{2}$ and $\zeta$ can find sufficiently large $K>0$ such that for $\mathbf{X}_{0}=k\mathbf{X_{inf}}$ with $k>K$ the following holds
%\begin{align}
%P\Big(\big\|(\mathbf{a}-\frac{1}{n}\sum_{i=1}^{n}\mathbf{D}_{i})
%-\varsigma\bar{\mathbf{D}}^{*}(\mathbf{X})\big\|
%<\epsilon_{1}|\mathbf{X_{0}}=k\mathbf{X}_{min}\Big)>1-\zeta.
%\end{align}

To continue, let the region $\mathcal{R}_{2}$ be defined as
\begin{align}
\mathcal{R}_{2}&=\{\mathbf{X}: \mathbf{X}-M\mathbf{X}_{min} \in \mathcal{R}(\mathbf{X}_{min},\epsilon)\},
\end{align}
for some $M\geq M_{\epsilon_{2}}$.
This region is a shifted version of $\mathcal{R}(\mathbf{X}_{min},\epsilon)$ with the origin shifted to $M\mathbf{X}_{min}$, and therefore, $\mathcal{R}_{2} \subset \mathcal{R}_{1}$. Fig.~\ref{fig:m_r} provides a graphical demonstration of regions $\mathcal{R}(\mathbf{X}_{min},\epsilon)$, $\mathcal{R}_{1}$, and $\mathcal{R}_{2}$. In the figure, the vector $\mathbf{X}_{min}$ is shown by a unit arrow-vector.
Now we are in a position to show that starting at $\mathbf{X}_{t}=M\mathbf{X}_{min}$, for some appropriately chosen $M$, with positive probability $\{\mathbf{X}_{t+i}, \ i\geq 0\}$ stays in $\mathcal{R}_{2}$ with ever growing norm.

\begin{figure}[t]
\centering
\includegraphics[width=3.2in]{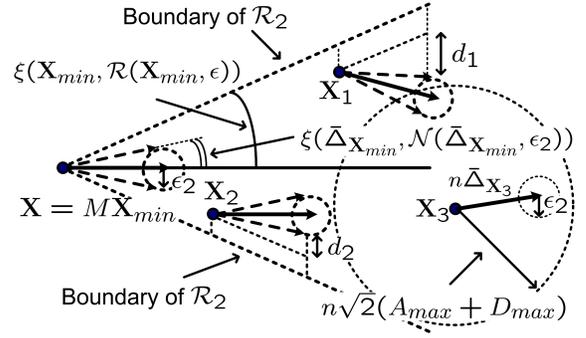}
\caption{Examples where $\mathbf{X}_{t+ni} \in \mathcal{R}_{2}$ explaining cases where $\mathcal{A}_{t+ni,n}=0$ as in the points $\mathbf{X}$, $\mathbf{X}_{1}$, and $\mathbf{X}_{2}$, and the cases where $\mathcal{A}_{t+ni,n}=1$ as in the point $\mathbf{X}_{3}$. In this figure, the region $\mathcal{R}_{2}$ is rotated clockwise.}\label{fig:detals}
 \vspace{-.2in}
\end{figure}

Consider the sequence $\{\mathbf{X}_{t+ni}\}_{i=0}^{\infty}$ with $\mathbf{X}_{t}=M\mathbf{X}_{min}$. Recall that $n$ is chosen sufficiently large according to the value of $\epsilon_{2}$.
Let $\mathcal{A}_{t+ni,n}$ be a r.v. defined by
\begin{align*}
\mathcal{A}_{t+ni,n}=\left \{
\begin{array}{ll}
 1 & \text{if} \ \|\Delta_{t+ni,n}
-\bar{\Delta}_{\mathbf{X}_{t+ni}}\|\leq \epsilon_{2}, \\
0  & \text{otherwise}
 \end{array} \right.
%\mathcal{A}_{\mathbf{X}_{t+ni},n}=\{\|\Delta_{\mathbf{X}_{t+ni}}
%-\bar{\Delta}_{\mathbf{X}_{t+ni}}\|>\epsilon_{2}|\mathbf{X}_{t+ni}\in
%\mathcal{R}_{2}\}.
\end{align*}
Provided that $\mathbf{X}_{t+ni}\in\mathcal{R}_{2} $, where  $\mathcal{R}_{2} \subset \mathcal{R}_{1}$, and
 assuming a small $\epsilon_{1}$ and a sufficiently large $M$, it is not hard to see that
  if $\mathcal{A}_{t+ni,n}=1$, then the following hold as a result
   of (\ref{deltabar}) and (\ref{angle2}). First, $\mathbf{X}_{t+n(i+1)}\in \mathcal{R}_{2}$. Second, the distance of the vector $\mathbf{X}_{t+n(i+1)}$ from the boundary of $\mathcal{R}_{2}$ becomes the distance of $\mathbf{X}_{t+ni}$ plus at least $n\delta_{\mathcal{A}}$. Third,
\begin{align*}
 \|\mathbf{X}_{t+n(i+1)}\|\geq \|\mathbf{X}_{t+ni}\|+n\delta_{\mathcal{A}},
 \end{align*}
 where $\delta_{\mathcal{A}}$ is an appropriately small positive constant.
  Fig.~\ref{fig:detals} shows the region
    $\mathcal{R}_{2}$ rotated clockwise, and provides examples for the case where $\mathcal{A}_{t+ni,n}=1$. Specifically, when $\mathbf{X}_{t+ni}$ equals one of the points $\mathbf{X}$, $\mathbf{X}_{1}$,
    and $\mathbf{X}_{2}$, the figure assumes the
    drift vector $\Delta_{t+ni,n}$ is within the $\epsilon_{2}$-neighborhood of
    $\bar{\Delta}_{\mathbf{X}_{t+ni}}$. For points $\mathbf{X}_{1}$ and
    $\mathbf{X}_{2}$, the figure also shows the increases in their distance from the boundary of $\mathcal{R}_{2}$, and denotes them by $d_{1}$ and $d_{2}$, respectively. These values, as mentioned above,
are lower-bounded by $n\delta_{\mathcal{A}}$ as a result of (\ref{deltabar}) and (\ref{angle2}). To see this note that, as shown in the
figure and suggested by (\ref{deltabar}), when $\mathbf{X}_{t+ni}$ deviation from $\mathbf{X}_{min}$, i.e.,
when it deviates from the central line in the figure, the vector
$\bar{\Delta}_{\mathbf{X}_{t+ni}}$ gets a component towards the central line.
This and the assumption that the angular deviations in the $\epsilon_{2}$-neighborhoods are less than half of the one defining region $\mathcal{R}_{2}$, as assumed in (\ref{angle2}), ensure that after $n$ steps the backlog vector remains in $\mathcal{R}_{2}$, and that the distance from the boundary of $\mathcal{R}_{2}$ increases when $\mathcal{A}_{t+ni,n}=1$.
 Using a similar argument, it is easy to see that when $\epsilon_{1}$ is small, an event of the type $\mathcal{A}_{t+ni,n}=1$ increases the norm of backlog vector more than $n\delta_{\mathcal{A}}$.
On the other hand, if $\mathcal{A}_{t+ni,n}=0$ with at most probability $\zeta$, both
  the distance of $\mathbf{X}_{t+n(i+1)}$ from $\mathcal{R}_{2}$ and $\|\mathbf{X}_{t+n(i+1)}\|$, compared to the distance of $\mathbf{X}_{t+ni}$ and $\|\mathbf{X}_{t+ni}\|$, respectively,
   decrease at most by $n\sqrt{2}(A_{max}+D_{max})$. In Fig.~\ref{fig:detals}, the point $\mathbf{X}_{3}$ is an example of this case, where the vector $\Delta_{t+ni,n}$ can be anywhere inside the outer circle, centered at $\mathbf{X}_{3}$, but outside the inner circle defining the $\epsilon_{2}$-neighborhood of the vector $\mathbf{X}_{3}+n\bar{\Delta}_{\mathbf{X}_{3}}$.

  In the rest of the proof, as the worst case, we assume that for $\mathbf{X}_{t+ni}\in\mathcal{R}_{2}$, $i\geq 0$, the event $\{\mathcal{A}_{t+ni,n}=1\}$ occurs with probability $(1-\zeta)$. Note that $\mathcal{R}_{2} \subset \mathcal{R}_{1}$, and when $\mathbf{X}_{t+ni}\in\mathcal{R}_{2} $, the inequality (\ref{pdelta2}) holds regardless of the past history $\mathcal{H}_{t+ni}$. Let the event that $\mathcal{A}_{t+ni,n}=1$ be a success. Based on the previous assumption, for $\mathbf{X}_{t+ni}\in\mathcal{R}_{2} $, this success event occurs with probability $(1-\zeta)$ regardless of the past.
Now consider the sequence $\{\mathbf{X}_{t+ni}\}$, $0\leq i\leq m-1$, and let $m_{(1-\zeta)}$ be the number successes of the type $\{\mathcal{A}_{t+ni,n}=1\}$ out of the $m$ associated trials. The above observations imply that if $\mathbf{X}_{t+ni} \in \mathcal{R}_{2}$, for $0 \leq i\leq m-1$, and if
\begin{align}
(m-m_{(1-\zeta)})\sqrt{2}(A_{max}+D_{max}) < m_{(1-\zeta)}\delta_{\mathcal{A}},\nonumber
\end{align}
then $\mathbf{X}_{t+nm}\in \mathcal{R}_{2}$, and
\begin{align}
\|\mathbf{X}_{t+nm}\|&\geq \|\mathbf{X}_{t}\|+m_{(1-\zeta)}n\delta_{\mathcal{A}}
\nonumber \\&-(m-m_{(1-\zeta)})n\sqrt{2}(A_{max}+D_{max}).\nonumber
\end{align}
Using the above, we see that a sufficient condition for the sequence $\{\mathbf{X}_{t+nm},  m\geq 1\}$ to stay within $\mathcal{R}_{2}$, and
\begin{align}
\|\mathbf{X}_{t+nm}\|&\geq \|\mathbf{X}_{t}\|+n\ m \ \epsilon_{3}(\delta_{\mathcal{A}}+\sqrt{2}(A_{max}+D_{max})),\label{norm:e}
\end{align}
for some $\epsilon_{3}$ with
\begin{align*}
\epsilon_{3}<\frac{\delta_{\mathcal{A}}}{\sqrt{2}(A_{max}+D_{max})+\delta_{\mathcal{A}}}
\end{align*}
is that for all $m\geq 1$,
\begin{align}
r_{m}&\triangleq 1-\frac{m_{(1-\zeta)}}{m}\nonumber \\ &<\frac{\delta_{\mathcal{A}}}{\sqrt{2}(A_{max}+D_{max})+\delta_{\mathcal{A}}}-\epsilon_{3}.\label{rm}
\end{align}
In what follows, we show that with positive probability the above inequality holds for all $m\geq 1$.

 Starting at $M\mathbf{X}_{min}$, let $\tau_{\mathcal{R}_{2}}$ be the first time that the ratio $r_{m}$ does not satisfy (\ref{rm}). Consider the sequence
 \begin{align}
 \{\mathcal{A}_{t+nm,n},\ 0\leq m\leq \tau_{\mathcal{R}_{2}}-2\}
 . \label{sequence}
 \end{align}
For $ 0\leq m\leq \tau_{\mathcal{R}_{2}}-1$, the discussion leading to (\ref{norm:e}) and (\ref{rm}) implies that  $\mathbf{X}_{t+nm}\in\mathcal{R}_{2}$. Furthermore, this discussion shows that the sequence can be considered as a truncated Bernoulli process with success probability $(1-\zeta)$. An intuitive yet important observation is that for an infinite sequence of Bernoulli trials $\{B_{i}, i\geq 0 \}$ with success probability $(1-\zeta)$, for any given $\epsilon_{4}>0$, with positive probability the ratio of failures \emph{never} reaches $\zeta+\epsilon_{4}$. This is the key to prove $\tau_{\mathcal{R}_{2}}=\infty$, or equivalently, (\ref{rm}) holds for all $m\geq 1$, with positive probability. Let the notation $r_{m}$ be re-used as the failure ratio for the infinite Bernoulli process, i.e.,
 \begin{align}
r_{m}=1-\frac{1}{m}\sum_{i=1}^{m}B_{i}.\nonumber
 \end{align}
 Using large deviation results \cite{jeff:03}, we have
\begin{align}
P(r_{m}-\zeta>\epsilon_{4})<\rho^{m},\label{larged}
\end{align}
where
\begin{align}
\rho=\inf_{s>0}M_{Z_{\zeta}}(s)\ <1,\nonumber
\end{align}
where $Z_{\zeta}=1-B_{1}-\zeta -\epsilon_{4}$, and $M_{Z_{\zeta}}(s)$ is the characteristic function of $Z_{\zeta}$. The above inequality indicates that with probability at least $(1-\rho^{m})$, the ratio of failures after $m$ trials is less than or equal to $\zeta+\epsilon_{4}$.

 To further study $r_{m}$, we consider the infinite Bernoulli process in a sequence of stages. In the first and second stages, we consider $m$ Bernoulli trials. However, after the second stage, for the $i_{th}$ stage, we consider the next subsequent $2^{i-2}m$ trials. Since trials are independent, with probability $(1-\zeta)^{m}$, we can have only successes for the first $m$ trials, and thus, the ratio $r_{m}$ never goes beyond zero, i.e.,
 \begin{align}
r_{j}=0, \ 1\leq j\leq m.\nonumber
 \end{align}
  For the second stage with the next $m$ trials, using (\ref{larged}), we see that with probability at least $(1-\zeta)^{m}(1-\rho^{m})$
\begin{align}
\max_{0\leq j \leq 2m}r_{j}\leq \frac{0+m(\zeta+\epsilon_{4})}{m+m(\zeta+\epsilon_{4})}<2(\zeta+\epsilon_{4}),\nonumber
\end{align}
where the first inequality refers to the worst case where in the second stage of $m$ trials, the failures happen in the beginning of the stage, i.e., when the $(m+1)_{\text{th}}$, $(m+2)_{\text{th}}$,..., and $(m+m(\zeta+\epsilon_{3}))_{\text{th}}$ trials are all failures. Inductively, considering the $(l+2)_{\text{th}}$ stage, we see that with probability at least $(1-\zeta)^{m}\Pi_{p=0}^{l}(1-\rho^{2^{p}m})$
\begin{align}
\max_{1\leq j \leq 2^{(l+1)m}}r_{j}\leq \frac{(2^{l+1}-1)m(\zeta+\epsilon_{4})}{2^{l}m+2^{l}m(\zeta+\epsilon_{4})}<2(\zeta+\epsilon_{4}),
\end{align}
where the numerator is the total number of failures up to the end of $(l+2)_{\text{th}}$ stage, and the denominator corresponds to the worst case where the failures in the $(l+2)_{\text{th}}$ stage all occur in the beginning of the stage. Therefore, with probability at least
\begin{align}
p_{\zeta}=(1-\zeta)^{m}\Pi_{p=0}^{\infty}(1-\rho^{2^{p}m})\nonumber
\end{align}
the ratio $r_{m}$, $m\geq 1$, always stays below $2(\zeta+\epsilon_{4})$. But
\begin{align}
p_{\zeta}>(1-\zeta)^{m}\Pi_{p=0}^{\infty}(1-(\rho^{m})^{p+1}).\nonumber
\end{align}
This and Lemma~\ref{lemma:l5} indicate that
\begin{align*}
p_{\zeta}>0.
\end{align*}

The above discussion implies that with a positive probability, not less than $p_{\zeta}$, the ratio $r_{m}$ associated with the sequence in (\ref{sequence}) stays below $2(\zeta+\epsilon_{4})$. Hence,
if $\zeta$ and $\epsilon_{4}$ are chosen such that
\begin{align}
\zeta+\epsilon_{4}<\frac{1}{2}\Big(\frac{\delta_{\mathcal{A}}}{\sqrt{2}(A_{max}+D_{max})+\delta_{\mathcal{A}}}-\epsilon_{3}\Big),\label{ineq:last}
\end{align}
then starting at $\mathbf{X}_{t}=M\mathbf{X}_{min}$, the inequality in (\ref{rm}) holds for all $m\geq 1$ with positive probability. Since this latter statement can be generalized to the case where $\mathbf{X}_{t}\in \mathcal{R}_{2}$, we have that
\begin{align}
P(\forall m\geq 0, \ \mathbf{X}_{t+nm}\in \mathcal{R}_{2} \ \text{and (\ref{norm:e}) holds} \ |\mathbf{X}_{t}\in \mathcal{R}_{2} )>0\label{pxr2}
\end{align}
if (\ref{ineq:last}) holds. But (\ref{ineq:last}) can be satisfied since the choice for a positive $\epsilon_{4}$ is arbitrary, and as mentioned in the discussion leading to (\ref{pdelta2}), $\zeta$ can be chosen arbitrarily small. Hence, for an appropriate choice of parameters, (\ref{pxr2}) holds, which suggests that with positive probability $\mathbf{X}_{t+nm}$ stays in $\mathcal{R}_{2}$, and its norm increases (at least) linearly with $m$. Since by C4 with positive probability $\mathbf{X}_{t}\in \mathcal{R}_{2}$ for some $t$, and arrivals and departures are bounded implying for $0\leq j\leq n$, $\|\mathbf{X}_{t+mn-j}\|\geq \|\mathbf{X}_{t+nm}\|-C$, for some $C>0$, we see that (\ref{pxr2}) indicates that with positive probability
 \begin{align*}
\lim_{i \to \infty } \|\mathbf{X}_{t+i}\|=\infty.
  \end{align*}
 This shows that when the input rate is given by (\ref{inrate}), and thus, when it is outside the region $\theta_{\infty}\Gamma$, for any bounded region $\mathcal{C}$, with positive probability the process $\mathbf{X}_{t+i}$ never returns to $\mathcal{C}$, and hence, $\mathds{E}[\tau_{\mathcal{C}}]=\infty$, implying network instability. This completes the proof of part~(c) of the theorem.

 \end{proof}

%%%%%%%%%%%%%%%%%%%%%%%%%%%%%%%%%%%%%%%%%%%%%%%%%%%%%%%%%% Lemma 1
\appendix[Lemmas]

\begin{Lemma}\label{prelemma}
For any $\epsilon>0$, regardless of the past history $\mathcal{H}_{t}$, there exists a sufficiently large $K_{\epsilon}$
such that for all $\mathbf{s} \in \mathcal{S}$ and $K>K_{\epsilon}$
\begin{align*}
\Big|\mathds{E}\big[(\tau_{K}+1 ) \pi(\mathbf{s})-\sum_{k=0}^{\tau_{K}}\mathbf{1}_{\mathbf{s}(t+k)=\mathbf{s}}\big|\mathcal{H}_{t}\big]\Big|
<\epsilon\mathds{E}\Big[(\tau_{K}+1 )|\mathcal{H}_{t}\Big].
\end{align*}

\begin{proof}
Since $\tau_{i+1}-\tau_{i}>2N_{c}$, it is easy to verify that
$\tau_{K} \to \infty, \ a.s.$, as $K \to \infty$. This almost surely convergence and the ergodicity of channel process, as stated in Section~\ref{sec:channelp}, imply that as $K \to \infty$
\begin{align}
\frac{1}{\tau_{K}+1}\sum_{k=0}^{\tau_{K}}\mathbf{1}_{\mathbf{s}(t+k)=\mathbf{s}} \to \pi(\mathbf{s}), \ a.s.
\end{align}
 Moreover, since the channel convergence in Section~\ref{sec:channelp} is uniform in the past history and $t$, and since the number of channel states is finite, we see that the above convergence is uniform in $t$, $\mathcal{H}_{t}$, and $\mathbf{s}$. Thus, for any $\epsilon^{'}>0$ and $\zeta>0$, we can find a sufficiently large $K_{\epsilon^{'},\zeta}$ independent of the past history $\mathcal{H}_{t}$ and $\mathbf{s}$ such that \cite{borovkov}
\begin{align}
P(\sup_{K>K_{\epsilon^{'},\delta}}|\pi(s)
-\frac{1}{\tau_{K}+1}\sum_{k=0}^{\tau_{K}}
\mathbf{1}_{\mathbf{s}(t+k)=\mathbf{s}}|>\epsilon^{'}|\mathcal{H}_{t})<\zeta.\label{prelemma:e1}
\end{align}
Given $\mathcal{H}_{t}$, let $\mathcal{A}_{K_{\epsilon^{'},\zeta},\epsilon^{'}}$ denote the set of all $\omega \in \Omega$ with the property that
 \begin{align}
\sup_{K>K_{\epsilon^{'},\delta}}|\pi(s)-\frac{1}
{\tau_{K}+1}\sum_{k=0}^{\tau_{K}}\mathbf{1}_{\mathbf{s}(t+k)=\mathbf{s}}|> \epsilon^{'}.\nonumber
 \end{align}
By (\ref{prelemma:e1}), we have that
\begin{align}
P(\mathcal{A}_{K_{\epsilon^{'},\zeta},\epsilon^{'}}|\mathcal{H}_{t})<\zeta. \label{p:delta}
\end{align}

Suppose $K>K_{\epsilon^{'},\zeta}$ and let
\begin{align}
\Delta=\mathds{E}\Big[(\tau_{K}+1 ) \pi(\mathbf{s})-\sum_{k=0}^{\tau_{K}}\mathbf{1}_{\mathbf{s}(t+k)=\mathbf{s}}\ \big|\mathcal{H}_{t}\Big].\nonumber
\end{align}
 Using conditional expectations and the definition of $\mathcal{A}_{K_{\epsilon^{'},\zeta},\epsilon^{'}} $, and considering the fact that
$0\leq \pi(\mathbf{s}) \leq 1$ and $\tau_{K}\geq 0$, we can show that
\begin{align}
\Delta
&\leq P\big(\omega \notin  \mathcal{A}_{K_{\epsilon^{'},\zeta},\epsilon^{'}}\big)\mathds{E}\big[\epsilon^{'}(\tau_{K}+1)\ |\omega \notin \mathcal{A}_{K_{\epsilon^{'},\zeta},\epsilon^{'}},\mathcal{H}_{t}\big]
\nonumber \\ &
 +P\big(\omega \in  \mathcal{A}_{K_{\epsilon^{'},\zeta},\epsilon^{'}}\big)\mathds{E}\big[(\tau_{K}+1)\ |\omega \in \mathcal{A}_{K_{\epsilon^{'},\zeta},\epsilon^{'}},\mathcal{H}_{t}\big]\label{pre1:p1}
\end{align}
Similarly, we obtain
\begin{align}
&\mathds{E}\big[(\tau_{K}+1)|\mathcal{H}_{t}\big]
\nonumber \\&= P\big(\omega \notin  \mathcal{A}_{K_{\epsilon^{'},\zeta},\epsilon^{'}}\big)\mathds{E}\big[(\tau_{K}+1)\ |\omega \notin \mathcal{A}_{K_{\epsilon^{'},\zeta},\epsilon^{'}},\mathcal{H}_{t}\big]
\nonumber \\ &
 +P\big(\omega \in  \mathcal{A}_{K_{\epsilon^{'},\zeta},\epsilon^{'}}\big)\mathds{E}\big[(\tau_{K}+1) \ |\omega \in \mathcal{A}_{K_{\epsilon^{'},\zeta},\epsilon^{'}},\mathcal{H}_{t}\big]. \nonumber
\end{align}
Since $\tau_{K}\geq 0$, the above implies that
\begin{align}
&P\big(\omega \notin  \mathcal{A}_{K_{\epsilon^{'},\zeta},\epsilon^{'}}\big)\mathds{E}\big[(\tau_{K}+1)\ |\omega \notin \mathcal{A}_{K_{\epsilon^{'},\zeta},\epsilon^{'}},\mathcal{H}_{t}\big]
\nonumber \\ &\leq
\mathds{E}\big[(\tau_{K}+1)|\mathcal{H}_{t}\big]\label{pre1:p6}
\end{align}
In addition, w.p.1, $\tau_{K}+1\leq (K+1)(1+L_{1})N_{c}$. It thus follows from (\ref{p:delta}), (\ref{pre1:p1}), and (\ref{pre1:p6}) that
\begin{align}
\Delta &\leq
\epsilon^{'}\mathds{E}\big[(\tau_{K}+1)|\mathcal{H}_{t}\big]+\zeta (K+1)(1+L_{1})N_{c}
\nonumber
\end{align}
Noting the fact that $\tau_{K}\geq 2KN_{c}$, we obtain
\begin{align}
\Delta &\leq \mathds{E}\big[(\tau_{K}+1)|\mathcal{H}_{t}\big]\Big(\epsilon^{'}+\zeta \frac{(K+1)(1+L_{1})N_{c}}{\mathds{E}\big[(\tau_{K}+1)|\mathcal{H}_{t}\big]}\Big)
\nonumber \\
&< \mathds{E}\big[(\tau_{K}+1)|\mathcal{H}_{t}\big]\Big(\epsilon^{'}+\zeta \frac{(K+1)(1+L_{1})N_{c}}{2KN_{c}+1}\Big)
 \nonumber \\
&=\epsilon \mathds{E}\big[(\tau_{K}+1)|\mathcal{H}_{t}\big], \nonumber
\end{align}
where
\begin{align*}
\epsilon=\epsilon^{'}+\zeta \frac{(K+1)(1+L_{1})N_{c}}{2KN_{c}+1}
\end{align*}
can be made arbitrarily small by choosing sufficiently small values for $\epsilon^{'}$ and $\zeta$. A similar discussion holds for $-\Delta$ with the same $\epsilon$, completing the proof.
\end{proof}
\end{Lemma}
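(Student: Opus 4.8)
The plan is to reduce the claimed inequality to the \emph{uniform} ergodic convergence of the channel process stated in~(\ref{channel}), read off along the random endpoints $\tau_K$, and to dispose of the exceptional event by a crude deterministic two-sided bound on $\tau_K$.

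I would first pin down the size of $\tau_K$. A scheduling round consists of a test interval of length $N_c$ followed by an update interval of length $N_3N_c$ with $1\le N_3\le L_1$, so consecutive round endpoints obey $2N_c\le\tau_{i+1}-\tau_i\le(1+L_1)N_c$; counting the (at most one partial plus $K$ full) rounds between $t$ and $t+\tau_K$ gives, almost surely,
\begin{align*}
2KN_c\le\tau_K+1\le(K+1)(1+L_1)N_c .
\end{align*}
In particular $\tau_K\to\infty$ a.s., so composing with the a.s.\ convergence of Section~\ref{sec:channelp} already gives $\frac{1}{\tau_K+1}\sum_{k=0}^{\tau_K}\mathbf{1}_{\mathbf{s}(t+k)=\mathbf{s}}\to\pi(\mathbf{s})$ a.s. The point, however, is the \emph{uniform} version of~(\ref{channel}): since its constant $K_{\epsilon',\zeta}$ is independent of $t$ and of the past, and $|\mathcal{S}|<\infty$, for any $\epsilon'>0,\ \zeta>0$ there is $K_0=K_{\epsilon',\zeta}$ such that, for every $\mathbf{s}$, the channel event
\begin{align*}
\mathcal{A}=\Big\{\sup_{n>K_0}\ \Big|\frac{1}{n}\sum_{i=0}^{n-1}\mathbf{1}_{\mathbf{s}(t+i)=\mathbf{s}}-\pi(\mathbf{s})\Big|>\epsilon'\Big\}
\end{align*}
satisfies $P(\mathcal{A}\mid\mathcal{H}_t)<\zeta$ (the channel evolving independently of the remaining randomness recorded in $\mathcal{H}_t$). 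Crucially, $\mathcal{A}$ depends only on the channel trajectory and not on the DCP-randomized value $\tau_K$, which is what makes conditioning on $\mathcal{H}_t$ harmless. Once $K$ is large enough that $2KN_c>K_0$, the bound above forces $\tau_K+1>K_0$ surely, so on $\mathcal{A}^c$ we automatically have $\big|\sum_{k=0}^{\tau_K}\mathbf{1}_{\mathbf{s}(t+k)=\mathbf{s}}-(\tau_K+1)\pi(\mathbf{s})\big|\le\epsilon'(\tau_K+1)$.

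Next I would set $\Delta=\mathds{E}[(\tau_K+1)\pi(\mathbf{s})-\sum_{k=0}^{\tau_K}\mathbf{1}_{\mathbf{s}(t+k)=\mathbf{s}}\mid\mathcal{H}_t]$ and split it over $\mathcal{A}^c$ and $\mathcal{A}$. On $\mathcal{A}^c$ the integrand is at most $\epsilon'(\tau_K+1)$; on $\mathcal{A}$, since $0\le\pi(\mathbf{s})\le1$ and the count lies in $[0,\tau_K+1]$, it is at most $\tau_K+1\le(K+1)(1+L_1)N_c$. Hence
\begin{align*}
\Delta\le\epsilon'\,\mathds{E}[\tau_K+1\mid\mathcal{H}_t]+\zeta(K+1)(1+L_1)N_c ,
\end{align*}
and dividing by $\mathds{E}[\tau_K+1\mid\mathcal{H}_t]\ge2KN_c+1$ yields
\begin{align*}
\frac{\Delta}{\mathds{E}[\tau_K+1\mid\mathcal{H}_t]}\le\epsilon'+\zeta\,\frac{(K+1)(1+L_1)N_c}{2KN_c+1} .
\end{align*}
Since the fraction is bounded by $1+L_1$ for all $K\ge1$, choosing $\epsilon'$ and $\zeta$ small makes the right-hand side, call it $\epsilon$, as small as prescribed uniformly in $K$, and then any $K_\epsilon$ with $2K_\epsilon N_c>K_{\epsilon',\zeta}$ works. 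Running the identical argument on $-\Delta$ (swap the roles of $(\tau_K+1)\pi(\mathbf{s})$ and the count) gives the matching lower bound, so $|\Delta|<\epsilon\,\mathds{E}[\tau_K+1\mid\mathcal{H}_t]$, as asserted.

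The main obstacle is exactly that $\tau_K$ is a random stopping time, driven jointly by the channel and by DCP's internal coin flips, so one cannot invoke the ordinary ergodic theorem at a fixed horizon or slide a conditional expectation past it. The remedy is the $\sup_{n>K_0}$ form of~(\ref{channel}): it controls \emph{all} sufficiently large deterministic horizons at once, with an exceptional probability uniform in the conditioning field, so it transfers verbatim to the horizon $\tau_K$ once the two-sided bound certifies $\tau_K>K_0$ surely---and it does so uniformly in $\mathcal{H}_t$ and $t$, which is precisely what the statement demands. Everything else is bookkeeping with $2KN_c\le\tau_K+1\le(K+1)(1+L_1)N_c$.
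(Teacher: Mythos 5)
Your proposal is correct and follows essentially the same route as the paper's proof: invoke the uniform ergodicity bound to define the exceptional event, split the conditional expectation over that event and its complement, bound $\tau_K+1$ deterministically by $(K+1)(1+L_1)N_c$ above and $2KN_c$ below, and obtain $\epsilon=\epsilon'+\zeta\frac{(K+1)(1+L_1)N_c}{2KN_c+1}$, with the symmetric argument for $-\Delta$. Your explicit justification that the $\sup_{n>K_0}$ form of (\ref{channel}) transfers to the random horizon $\tau_K$ once $2KN_c>K_0$ is a slightly more careful rendering of the step the paper states as (\ref{prelemma:e1}), but the argument is the same.
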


%%%%%%%%%%%%%%%%%%%%%%%%% Lemma 2
\begin{Lemma}\label{lemma:D*}
For any given $\epsilon>0$, there exists a sufficiently large constant $K_{\epsilon}>0$ such that for
all $K>K_{\epsilon}$, we can find a proper $M_{\epsilon,K}$ such that if $\|\mathbf{X}_{t}\|>M_{\epsilon,K}$, the
following holds
\begin{align*}
&\mathds{E}[\sum_{k=0}^{\tau_{K}}\mathbf{X}_{t}\mathbf{a}- \mathbf{X}_{t+k}\mathbf{D}^{*}_{t+k}|\mathcal{H}_{t}]
\nonumber\\& \leq \mathds{E}\Big[(\tau_{K}+1)\|\mathbf{X}_{t}\|\Big(\epsilon-\big(1-\max_{\mathbf{s}\in \mathcal{S}}\sum_{\mathbf{I} \in
\mathcal{I}}\beta_{\mathbf{s},\mathbf{I}}\big)\chi(\mathbf{X}_{t})\Big)
\big|\mathcal{H}_{t}\Big]
\end{align*}
\end{Lemma}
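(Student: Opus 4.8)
The plan is to bound the two pieces of the summand separately: an \emph{upper} bound on the $\mathcal{H}_{t}$-measurable quantity $\mathbf{X}_{t}\mathbf{a}$ and a \emph{lower} bound on $\mathds{E}[\sum_{k=0}^{\tau_{K}}\mathbf{X}_{t+k}\mathbf{D}^{*}_{t+k}\mid\mathcal{H}_{t}]$, then subtract. For the first, I would start from the decomposition (\ref{a:vector}), use linearity of the scalar product, non-negativity of the coefficients $\beta_{\mathbf{s},\mathbf{I}}$, the optimality $\mathbf{X}_{t}\mathbf{D}(\mathbf{s},\mathbf{I})\le\mathbf{X}_{t}\mathbf{D}^{*}(\mathbf{X}_{t},\mathbf{s})$ valid for every $\mathbf{I}\in\mathcal{I}$, and the non-negativity of $\mathbf{X}_{t}\mathbf{D}^{*}(\mathbf{X}_{t},\mathbf{s})$, to obtain
\[
\mathbf{X}_{t}\mathbf{a}=\sum_{\mathbf{s}}\pi(\mathbf{s})\sum_{\mathbf{I}}\beta_{\mathbf{s},\mathbf{I}}\,\mathbf{X}_{t}\mathbf{D}(\mathbf{s},\mathbf{I})\ \le\ \Big(\max_{\mathbf{s}\in\mathcal{S}}\sum_{\mathbf{I}\in\mathcal{I}}\beta_{\mathbf{s},\mathbf{I}}\Big)\sum_{\mathbf{s}}\pi(\mathbf{s})\,\mathbf{X}_{t}\mathbf{D}^{*}(\mathbf{X}_{t},\mathbf{s})\ =\ \Big(\max_{\mathbf{s}\in\mathcal{S}}\sum_{\mathbf{I}\in\mathcal{I}}\beta_{\mathbf{s},\mathbf{I}}\Big)\|\mathbf{X}_{t}\|\,\chi(\mathbf{X}_{t}),
\]
the last equality being the definition of $\chi$. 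Summing over $0\le k\le\tau_{K}$ and conditioning gives $\mathds{E}[\sum_{k=0}^{\tau_{K}}\mathbf{X}_{t}\mathbf{a}\mid\mathcal{H}_{t}]\le(\max_{\mathbf{s}}\sum_{\mathbf{I}}\beta_{\mathbf{s},\mathbf{I}})\|\mathbf{X}_{t}\|\chi(\mathbf{X}_{t})\,\mathds{E}[\tau_{K}+1\mid\mathcal{H}_{t}]$.

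For the lower bound the key observation is that, since $\mathbf{D}^{*}(\mathbf{X}_{t+k},\mathbf{s})$ maximizes the weight-$\mathbf{X}_{t+k}$ objective over all of $\mathcal{I}$ while $\mathbf{D}^{*}(\mathbf{X}_{t},\mathbf{s})$ is merely one feasible point of that maximization,
\[
\mathbf{X}_{t+k}\mathbf{D}^{*}(\mathbf{X}_{t+k},\mathbf{s})\ \ge\ \mathbf{X}_{t+k}\mathbf{D}^{*}(\mathbf{X}_{t},\mathbf{s})\ \ge\ \mathbf{X}_{t}\mathbf{D}^{*}(\mathbf{X}_{t},\mathbf{s})-\|\mathbf{X}_{t+k}-\mathbf{X}_{t}\|\,\sqrt{N}\,D_{max},
\]
where the second step is Cauchy--Schwarz with $\|\mathbf{D}^{*}(\mathbf{X}_{t},\mathbf{s})\|\le\sqrt{N}D_{max}$. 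Since arrivals and departures are bounded and $\tau_{K}+1\le(K+1)(1+L_{1})N_{c}$, we have $\|\mathbf{X}_{t+k}-\mathbf{X}_{t}\|\le C_{K}$ for a constant $C_{K}$ depending on $K$ but not on $\|\mathbf{X}_{t}\|$. Summing over $k$, regrouping the channel indicators by state, taking $\mathds{E}[\cdot\mid\mathcal{H}_{t}]$, applying Lemma~\ref{prelemma} with an auxiliary accuracy $\epsilon'$ (so that $\mathds{E}[\sum_{k}\mathbf{1}_{\mathbf{s}(t+k)=\mathbf{s}}\mid\mathcal{H}_{t}]\ge(\pi(\mathbf{s})-\epsilon')\mathds{E}[\tau_{K}+1\mid\mathcal{H}_{t}]$), and using the crude bound $\mathbf{X}_{t}\mathbf{D}^{*}(\mathbf{X}_{t},\mathbf{s})\le\sqrt{N}D_{max}\|\mathbf{X}_{t}\|$, I would get
\[
\mathds{E}\Big[\sum_{k=0}^{\tau_{K}}\mathbf{X}_{t+k}\mathbf{D}^{*}_{t+k}\ \Big|\ \mathcal{H}_{t}\Big]\ \ge\ \Big(\chi(\mathbf{X}_{t})-\epsilon'|\mathcal{S}|\sqrt{N}D_{max}\Big)\|\mathbf{X}_{t}\|\,\mathds{E}[\tau_{K}+1\mid\mathcal{H}_{t}]-C_{K}\sqrt{N}D_{max}\,\mathds{E}[\tau_{K}+1\mid\mathcal{H}_{t}].
\]

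Subtracting the two bounds and collecting the $\chi(\mathbf{X}_{t})$ terms leaves exactly $\mathds{E}[(\tau_{K}+1)\|\mathbf{X}_{t}\|(\epsilon-(1-\max_{\mathbf{s}}\sum_{\mathbf{I}}\beta_{\mathbf{s},\mathbf{I}})\chi(\mathbf{X}_{t}))\mid\mathcal{H}_{t}]$ once the residual $\epsilon'|\mathcal{S}|\sqrt{N}D_{max}+C_{K}\sqrt{N}D_{max}/\|\mathbf{X}_{t}\|$ is made $\le\epsilon$; this is arranged by picking, in this order, $\epsilon'$ with $\epsilon'|\mathcal{S}|\sqrt{N}D_{max}<\epsilon/2$, then $K_{\epsilon}$ as the threshold Lemma~\ref{prelemma} supplies for this $\epsilon'$, and finally $M_{\epsilon,K}:=2C_{K}\sqrt{N}D_{max}/\epsilon$, so that $\|\mathbf{X}_{t}\|>M_{\epsilon,K}$ forces $C_{K}\sqrt{N}D_{max}/\|\mathbf{X}_{t}\|<\epsilon/2$. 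The main obstacle --- and the reason the statement carries the nested ``for every $\epsilon$, there is $K_{\epsilon}$, and then $M_{\epsilon,K}$'' quantifier order --- is to keep every estimate uniform in the past history $\mathcal{H}_{t}$: this is what forces the use of the \emph{uniform} form of channel ergodicity (through Lemma~\ref{prelemma}) rather than pathwise convergence, and of the deterministic window bound $\tau_{K}+1\le(K+1)(1+L_{1})N_{c}$ to guarantee that $C_{K}$ is a genuine constant. The only delicate sign issue is the first inequality in the displayed chain for $\mathbf{X}_{t+k}\mathbf{D}^{*}(\mathbf{X}_{t+k},\mathbf{s})$, which I would write out explicitly as above so it is not confused with the (false) reverse comparison.
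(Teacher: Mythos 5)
Your proposal is correct and follows essentially the same route as the paper's proof: the decomposition (\ref{a:vector}) together with the optimality of $\mathbf{D}^{*}(\mathbf{X}_{t},\mathbf{s})$ to extract the factor $\big(1-\max_{\mathbf{s}}\sum_{\mathbf{I}}\beta_{\mathbf{s},\mathbf{I}}\big)\chi(\mathbf{X}_{t})$, Lemma~\ref{prelemma} for the state-occupancy averages, and the deterministic bound $\tau_{K}+1\leq(K+1)(1+L_{1})N_{c}$ to control the backlog change and fix the quantifier order $\epsilon\to K_{\epsilon}\to M_{\epsilon,K}$. The only cosmetic differences are that you bound $\mathbf{X}_{t}\mathbf{a}$ and the $\mathbf{D}^{*}$-sum separately (the paper keeps them together) and use Cauchy--Schwarz with $\|\mathbf{X}_{t+k}-\mathbf{X}_{t}\|\leq C_{K}$ where the paper uses the component-wise inequality $\mathbf{X}_{t+k}\geq\mathbf{X}_{t}-\sum_{i<k}\mathbf{D}_{t+i}$, yielding the same kind of $O(1/\|\mathbf{X}_{t}\|)$ residual.
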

\begin{proof} To prove the lemma, we first note that by the definition of $\mathbf{D}^{*}(\mathbf{X}_{t},\mathbf{s})$, and the assumption that departures are bounded by $D_{max}$, we have
\begin{align}
&\!\!\!\mathbf{X}_{t+k}\mathbf{D}^{*}_{t+k}= \max_{\mathbf{I}\in
\mathcal{I}} \mathbf{X}_{t+k}\mathbf{D}(\mathbf{s}_{t+k},\mathbf{I})
 \nonumber \\& \geq \max_{\mathbf{I}\in
\mathcal{I}} \mathbf{X}_{t} \mathbf{D}(\mathbf{s}_{t+k},\mathbf{I})
-\max_{\mathbf{I}\in
\mathcal{I}}\Big(\sum_{i=0}^{k-1}\mathbf{D}_{t+i}\mathbf{D}(\mathbf{s}_{t+k},\mathbf{I})\Big)
\nonumber \\
&\geq  \max_{I\in \mathcal{I}} \mathbf{X}_{t}
\mathbf{D}(\mathbf{s}_{t+k},\mathbf{I})-kND_{max}^{2}.
\label{lemma1:p:1}
\end{align}
Using (\ref{a:vector}), we also observe that
\begin{align}
&\!\!\!\mathbf{X}_{t}\mathbf{a}-\mathds{E}_{\mathbf{s}}[\mathbf{X}_{t}\mathbf{D}^{*}(\mathbf{X}_{t},\mathbf{s})]
\nonumber
\\&=
\mathbf{X}_{t}\sum_{\mathbf{s}\in\mathcal{S}}\pi(\mathbf{s})
\sum_{\mathbf{I}\in\mathcal{I}}\beta_{\mathbf{s},\mathbf{I}}\mathbf{D}
(\mathbf{s},\mathbf{I})-\sum_{\mathbf{s}\in\mathcal{S}}\pi(\mathbf{s})
\mathbf{X}_{t}\mathbf{D}^{*}(\mathbf{X}_{t},\mathbf{s}) \nonumber
\\&=
\sum_{\mathbf{s}\in\mathcal{S}}\pi(\mathbf{s})\bigg(\sum_{\mathbf{I}\in\mathcal{I}}
\beta_{\mathbf{s},\mathbf{I}}\Big(\mathbf{X}_{t} \mathbf{D}(\mathbf{s},\mathbf{I})
- \mathbf{X}_{t}\mathbf{D}^{*}(\mathbf{X}_{t},\mathbf{s})
\Big)
\nonumber\\&\qquad \qquad \quad  \qquad-\Big(\big(1-\sum_{\mathbf{I}\in\mathcal{I}}\beta_{\mathbf{s},\mathbf{I}}\big)
\mathbf{X}_{t}\mathbf{D}^{*}(\mathbf{X}_{t},\mathbf{s})\Big)
\bigg) .
\end{align}
Since by definition for all $\mathbf{I}\in\mathcal{I}$
\begin{align*}
\mathbf{X}_{t}\mathbf{D}^{*}(\mathbf{X}_{t},\mathbf{s})\geq \mathbf{X}_{t}\mathbf{D}(\mathbf{s},\mathbf{I}),
\end{align*}
we have
\begin{align}
&\mathbf{X}_{t}\mathbf{a}-\mathds{E}_{\mathbf{s}}[\mathbf{X}_{t}\mathbf{D}^{*}(\mathbf{X}_{t},\mathbf{s})]
 \nonumber \\&\qquad \leq
-\sum_{\mathbf{s}\in\mathcal{S}}\pi(\mathbf{s})\big(1-\sum_{\mathbf{I}\in\mathcal{I}}
\beta_{\mathbf{s},\mathbf{I}}\big)\mathbf{X}_{t}\mathbf{D}^{*}(\mathbf{X}_{t},\mathbf{s})
\nonumber \\ &\qquad\leq
-\big(1-\max_{\mathbf{s}\in \mathcal{S}}\sum_{\mathbf{I}\in\mathcal{I}}
\beta_{\mathbf{s},\mathbf{I}}\big) \sum_{\mathbf{s}\in\mathcal{S}}\pi(\mathbf{s})\mathbf{X}_{t}\mathbf{D}^{*}(\mathbf{X}_{t},\mathbf{s})
\nonumber \\ &\qquad =-\|\mathbf{X}_{t}\|\big(1-\max_{\mathbf{s}\in \mathcal{S}}\sum_{\mathbf{I}\in\mathcal{I}}
\beta_{\mathbf{s},\mathbf{I}}\big) \chi(\mathbf{X}_{t})
,\label{lemma1:p:2}
\end{align}
where the last equality follows from the definition of $\chi(\mathbf{X}_{t})$.

Back to the inequality in the lemma, using \eqref{lemma1:p:1}, we have
\begin{align*}
&\mathds{E}\big[\sum_{k=0}^{\tau_{K}}\mathbf{X}_{t}\mathbf{a}- \mathbf{X}_{t+k}\mathbf{D}^{*}_{t+k}|\mathcal{H}_{t}\big]\leq
\mathds{E}\big[(\tau_{K}+1)^{2}N D_{max}^{2}|\mathcal{H}_{t}\big]\nonumber\\
&\qquad+
\mathds{E}\Big[\sum_{k=0}^{\tau_{K}}\big(\mathbf{X}_{t}\mathbf{a}-\mathbf{X}_{t}\sum_{\mathbf{s}\in\mathcal{S}}
\mathbf{1}_{\mathbf{s}(t+k)=\mathbf{s}}\mathbf{D}^{*}(\mathbf{X}_{t},\mathbf{s})\big) \ \big|\mathcal{H}_{t}\Big]
\nonumber\\ &=\mathds{E}\big[(\tau_{K}+1)^{2}N D_{max}^{2}|\mathcal{H}_{t}\big]\nonumber\\
&\qquad+
\mathds{E}\Big[\sum_{k=0}^{\tau_{K}}\mathbf{X}_{t}\mathbf{a}-\mathbf{X}_{t}\sum_{\mathbf{s}\in\mathcal{S}}\mathbf{D}^{*}(\mathbf{X}_{t},\mathbf{s})
\sum_{k=0}^{\tau_{K}}\mathbf{1}_{\mathbf{s}(t+k)=\mathbf{s}}\ \big|\mathcal{H}_{t}\Big].
\end{align*}
Using Lemma~\ref{prelemma}, for $\epsilon_{1}>0$ and sufficiently large $K_{1}$, we have that for $K>K_{1}$
\begin{align}
&\mathds{E}\big[\sum_{k=0}^{\tau_{K}}\mathbf{X}_{t}\mathbf{a}- \mathbf{X}_{t+k}\mathbf{D}^{*}_{t+k}|\mathcal{H}_{t}\big]\nonumber \\&\leq
\mathds{E}\big[(\tau_{K}+1)^{2}N D_{max}^{2}|\mathcal{H}_{t}\big]+\mathds{E}\Big[(\tau_{K}+1)\mathbf{X}_{t}\mathbf{a}
\nonumber\\
&\qquad  -(\tau_{K}+1)\mathbf{X}_{t}\sum_{\mathbf{s}\in\mathcal{S}}
\mathbf{D}^{*}(\mathbf{X}_{t},\mathbf{s})
(\pi(\mathbf{s})-\epsilon_{1})\ \big|\mathcal{H}_{t}\Big]
\nonumber\\&=\mathds{E}\big[(\tau_{K}+1)^{2}N D_{max}^{2}|\mathcal{H}_{t}\big]\nonumber\\
&\qquad +
\epsilon_{1}\mathds{E}\big[(\tau_{K}+1)\|\mathbf{X}_{t}\||\mathcal{S}|\sqrt{N}D_{max}\big|\mathcal{H}_{t}\big]
\nonumber \\ &
 \qquad +\mathds{E}\Big[(\tau_{K}+1)\Big(\mathbf{X}_{t}\mathbf{a}-
 \mathds{E}_{\mathbf{s}}\big[\mathbf{X}_{t}\mathbf{D}^{*}(\mathbf{X}_{t},\mathbf{s})\big]\Big)\big|\mathcal{H}_{t}\Big].\label{lemma:p2}
\end{align}
Combining (\ref{lemma1:p:2}) and (\ref{lemma:p2}), we obtain the inequality in lemma with
\begin{align*}
\epsilon=\frac{(\tau_{K}+1)N D_{max}^{2}}{\|\mathbf{X}_{t}\|}+\epsilon_{1}|\mathcal{S}|\sqrt{N}D_{max}.
\end{align*}
The choice for a positive $\epsilon$ is arbitrary since one can first select $K_{\epsilon}\geq K_{1}$ such that for all $K>K_{\epsilon}$, $\epsilon_{1}$
is sufficiently small. After selecting $K$, because w.p.1 $\tau_{K}+1\leq (K+1)(1+L_{1})N_{c}$, one can chose $M_{\epsilon,K}$
such that for $\|\mathbf{X}_{t}\|>M_{\epsilon,K}$ the first term in $\epsilon$ is also sufficiently small, completing the proof.

\end{proof}

\begin{Lemma}\label{lemma:2}
For any given $\epsilon>0$, there exists a sufficiently large constant $K
_{\epsilon}>0$ such that for
all $K>K_{\epsilon}$, we can find a proper $M_{\epsilon,K}$ such that if $\|\mathbf{X}_{t}\|>M_{\epsilon,K}$, the
following holds
\begin{align}
&\mathds{E}\big[\sum_{i=0}^{\tau_{K}}\mathbf{X}_{t+i}\mathbf{D}^{*}_{t+i}|
\mathcal{H}_{t}\big]\nonumber \\&
\qquad \qquad \leq\mathds{E}\Big[(\tau_{K}+1)\|\mathbf{X}_{t}\|\big(\chi(\mathbf{X}_{t})+\epsilon\big)
\big|\mathcal{H}_{t}\Big]\nonumber.
\end{align}

\begin{proof}
Using the definition of $\mathbf{D}^{*}(\mathbf{X},\mathbf{s})$, for the LHS of the inequality in the lemma we can show that
\begin{align*}
&\text{LHS}=
\mathds{E}\Big[\sum_{i=0}^{\tau_{K}}\max_{\mathbf{I}\in\mathcal{I}}\Big(\big(\mathbf{X}_{t}
+\sum_{j=0}^{i-1}(\mathbf{A}_{t+j}
 -\mathbf{D}_{t+j}+\mathbf{U}_{t+j})\big)
\nonumber \\ & \qquad \qquad  \qquad \qquad \quad \qquad \qquad \qquad \qquad
 \mathbf{D}(\mathbf{s}_{t+i},\mathbf{I})\Big)\big|\mathcal{H}_{t}\Big].
\end{align*}
Since arrivals and departures are bounded by $A_{max} $ and $D_{max}$, respectively, we have that
\begin{align}
\text{LHS}&\leq
\mathds{E}\big[\sum_{i=0}^{\tau_{K}}\mathbf{X}_{t}
\mathbf{D}^{*}(\mathbf{X}_{t},\mathbf{s}_{t+i})\ |\mathcal{H}_{t}\big]
\nonumber
\\&+\mathds{E}\Big[\sum_{i=0}^{\tau_{K}}iNA_{max}D_{max}
+\sum_{i=0}^{\tau_{K}}i ND_{max}^{2}\
 \big|\mathcal{H}_{t}\Big].
\label{lemma2:lp}
\end{align}
Let $\Sigma$ be the first term of the RHS of the above inequality. We have
\begin{align*}
\Sigma
&=\mathds{E}\Big[ \sum_{i=0}^{\tau_{K}}\sum_{\mathbf{s}\in \mathcal{S}}\mathbf{X}_{t}
\mathbf{D}^{*}(\mathbf{X}_{t},\mathbf{s}) \mathbf{1}_{\mathbf{s}(t+i)=\mathbf{s}}\big| \mathcal{H}_{t} \Big]
\nonumber \\ &=
 \mathbf{X}_{t} \sum_{\mathbf{s}\in \mathcal{S}}\mathbf{D}^{*}(\mathbf{X}_{t},\mathbf{s})
 \mathds{E}\Big[\sum_{i=0}^{\tau_{K}}\mathbf{1}_{\mathbf{s}(t+i)=\mathbf{s}}
|\mathcal{H}_{t}\Big].
\end{align*}
Using Lemma~\ref{prelemma}, for any positive $\epsilon_{1}$, we can find a sufficiently large $K_{1}$ such that for $K>K_{1}$
\begin{align}
\Sigma&\leq \mathbf{X}_{t} \sum_{\mathbf{s}\in \mathcal{S}}\mathbf{D}^{*}(\mathbf{X}_{t},\mathbf{s})
\mathds{E} \Big[(\tau_{K}+1)\big(\pi(\mathbf{s})+\epsilon_{1}\big)\ \big|\mathcal{H}_{t}\Big]
\nonumber \\
&= \mathds{E}\Big[(\tau_{K}+1)  \sum_{\mathbf{s}\in \mathcal{S}}\pi(\mathbf{s})  \mathbf{X}_{t}\mathbf{D}^{*}(\mathbf{X}_{t},\mathbf{s})
\nonumber \\
 &\qquad \qquad \qquad  +\epsilon_{1}|\mathcal{S}|\sqrt{N}\|\mathbf{X}_{t}\|D_{max}(\tau_{K}+1)  \ \big| \mathcal{H}_{t} \Big]
\nonumber \\      &=
\mathds{E}\Big[(\tau_{K}+1)\|\mathbf{X}_{t}\|
\Big(\chi(\mathbf{X}_{t})+\epsilon_{1}\sqrt{N}|\mathcal{S}|D_{max}\Big)
\big|\mathcal{H}_{t}\Big],\label{sigma:pre}
\end{align}
where the last equality follows from the definition of $\chi(\mathbf{X}_{t})$.

 Considering inequalities (\ref{lemma2:lp}) and (\ref{sigma:pre}), we obtain
\begin{align}
\text{LHS}&
\leq
\mathds{E}\Big[(\tau_{K}+1)\|\mathbf{X}_{t}\| \big(
\chi(\mathbf{X}_{t})+\epsilon_{2}\big)\big|\mathcal{H}_{t}\Big],\label{lemma:ineq}
\end{align}
where
\begin{align}
\epsilon_{2}&=\epsilon_{1}|\mathcal{S}|\sqrt{N}D_{max}
 +
(\tau_{K}+1)N\frac{A_{max}D_{max}+D_{max}^{2}}
{\|\mathbf{X}_{t}\|}\nonumber.
\end{align}
To complete the proof, it remains to show that $\epsilon_{2}$ can be made arbitrarily small. Consider any positive $\epsilon$. We first choose $K_{\epsilon}$ such that for $K>K_{\epsilon}$ the value of $\epsilon_{1
}$ is sufficiently small to make the first term in $\epsilon_{2}$ less than $\frac{\epsilon}{2}$. Since $\tau_{k}+1\leq (K+1)(1+L_{1})N_{c}$, we see that for a given $K$ with $K>K_{\epsilon}$ if $\|\mathbf{X}_{t}\|>M_{\epsilon,K}$ for a sufficiently large $M_{\epsilon,K}$, then the second term in $\epsilon_{2}$ can also be less than $\frac{\epsilon}{2}$. Therefore, for any positive $\epsilon$, if $K>K_{\epsilon}$ and $\|\mathbf{X}_{t}\|>M_{\epsilon,K}$, for appropriate values of $K_{\epsilon}$ and $M_{\epsilon,K}$, then the inequity (\ref{lemma:ineq}) holds with $\epsilon_{2}<\epsilon$. But this means the inequality also holds for $\epsilon$, as required.
\end{proof}
\end{Lemma}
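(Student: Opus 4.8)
The plan is to reduce the bound entirely to the one-step queue recursion together with Lemma~\ref{prelemma}, which already packages the ergodic averaging of channel occupations over the random horizon $\tau_K+1$. First I would unroll the recursion as $\mathbf{X}_{t+i}=\mathbf{X}_t+\sum_{j=0}^{i-1}(\mathbf{A}_{t+j}-\mathbf{D}_{t+j}+\mathbf{U}_{t+j})$ and insert it into $\mathbf{X}_{t+i}\mathbf{D}^{*}_{t+i}=\max_{\mathbf{I}\in\mathcal{I}}\mathbf{X}_{t+i}\mathbf{D}(\mathbf{s}_{t+i},\mathbf{I})$. Since the maximum of a sum is at most the sum of the maxima, and arrivals are bounded by $A_{max}$ while departures are bounded by $D_{max}$, this yields the pointwise estimate $\mathbf{X}_{t+i}\mathbf{D}^{*}_{t+i}\le \mathbf{X}_t\mathbf{D}^{*}(\mathbf{X}_t,\mathbf{s}_{t+i})+i\,c$, with $c$ a constant depending only on $N$, $A_{max}$, $D_{max}$. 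Summing over $0\le i\le\tau_K$ splits the left-hand side of the lemma into a term $\sum_{i=0}^{\tau_K}\mathbf{X}_t\mathbf{D}^{*}(\mathbf{X}_t,\mathbf{s}_{t+i})$ in which $\mathbf{X}_t$ is frozen, plus a remainder that is $O\big((\tau_K+1)^2\big)$.

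Next I would rewrite the frozen term by grouping over channel states, $\sum_{i=0}^{\tau_K}\mathbf{X}_t\mathbf{D}^{*}(\mathbf{X}_t,\mathbf{s}_{t+i})=\sum_{\mathbf{s}\in\mathcal{S}}\mathbf{X}_t\mathbf{D}^{*}(\mathbf{X}_t,\mathbf{s})\sum_{i=0}^{\tau_K}\mathbf{1}_{\mathbf{s}(t+i)=\mathbf{s}}$, take $\mathds{E}[\,\cdot\mid\mathcal{H}_t]$ (both $\mathbf{X}_t$ and $\mathbf{D}^{*}(\mathbf{X}_t,\mathbf{s})$ being $\mathcal{H}_t$-measurable), and invoke Lemma~\ref{prelemma}: for $K$ above a threshold $K_{\epsilon_1}$, $\mathds{E}[\sum_{i=0}^{\tau_K}\mathbf{1}_{\mathbf{s}(t+i)=\mathbf{s}}\mid\mathcal{H}_t]\le\mathds{E}[(\tau_K+1)(\pi(\mathbf{s})+\epsilon_1)\mid\mathcal{H}_t]$. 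The $\pi(\mathbf{s})$-part reassembles, by the very definition of $\chi$, into $\mathds{E}[(\tau_K+1)\|\mathbf{X}_t\|\chi(\mathbf{X}_t)\mid\mathcal{H}_t]$, while the $\epsilon_1$-part is bounded crudely by $\epsilon_1|\mathcal{S}|\sqrt{N}D_{max}\,\mathds{E}[(\tau_K+1)\|\mathbf{X}_t\|\mid\mathcal{H}_t]$ using $\mathbf{X}_t\mathbf{D}^{*}(\mathbf{X}_t,\mathbf{s})\le\|\mathbf{X}_t\|\sqrt{N}D_{max}$.

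Collecting the contributions gives exactly the claimed inequality but with $\chi(\mathbf{X}_t)$ replaced by $\chi(\mathbf{X}_t)+\epsilon_2$, where $\epsilon_2=(\tau_K+1)\,c/\|\mathbf{X}_t\|+\epsilon_1|\mathcal{S}|\sqrt{N}D_{max}$. The closing step is to drive $\epsilon_2$ below the prescribed $\epsilon$: choose $K_\epsilon\ge K_{\epsilon_1}$ large enough that Lemma~\ref{prelemma} makes $\epsilon_1|\mathcal{S}|\sqrt{N}D_{max}<\epsilon/2$ for all $K>K_\epsilon$, and then, with $K$ now fixed so that $\tau_K+1$ is deterministically bounded by $(K+1)(1+L_1)N_c$, choose $M_{\epsilon,K}$ so that $\|\mathbf{X}_t\|>M_{\epsilon,K}$ forces the first summand below $\epsilon/2$ as well. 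This gives the required quantifier order: there exists $K_\epsilon$ such that for every $K>K_\epsilon$ there is an $M_{\epsilon,K}$ for which the bound holds whenever $\|\mathbf{X}_t\|>M_{\epsilon,K}$.

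I do not expect a serious obstacle: the argument is routine estimation once Lemma~\ref{prelemma} is available. The one point requiring care is that the number of summands $\tau_K+1$ is itself random (a stopping time for $\mathcal{H}$), so the $O\big((\tau_K+1)^2\big)$ remainder cannot simply be made negligible by dividing through by $\tau_K+1$. The remedy is precisely the deterministic ceiling $\tau_K+1\le(K+1)(1+L_1)N_c$: once $K$ is pinned down, the remainder is genuinely $O(1/\|\mathbf{X}_t\|)$ uniformly in the past history and in $t$, which is the same kind of uniformity that Lemma~\ref{prelemma} supplies for the $\epsilon_1$-term.
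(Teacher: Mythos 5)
Your proposal is correct and follows essentially the same route as the paper's proof: unroll the queue recursion to freeze $\mathbf{X}_{t}$ up to an $O\big((\tau_{K}+1)^{2}\big)$ remainder bounded via $A_{max}$, $D_{max}$, group the frozen term over channel states and apply Lemma~\ref{prelemma}, bound $\mathbf{X}_{t}\mathbf{D}^{*}(\mathbf{X}_{t},\mathbf{s})\leq\|\mathbf{X}_{t}\|\sqrt{N}D_{max}$, and close with the deterministic ceiling $\tau_{K}+1\leq(K+1)(1+L_{1})N_{c}$ to fix $K_{\epsilon}$ first and then $M_{\epsilon,K}$. The quantifier order and the uniformity-in-history observations match the paper's argument.
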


%%%%%%%%%%%%%%%%%%%%%%%%%%%%%%%%%%%%%%%%%%  Lemma 5

\begin{Lemma}\label{lemma:3}
Suppose $6\theta_{\varphi}<\alpha$, and let $\epsilon$ be a positive real number. For any given $\epsilon$, there exists a constant $K_{\epsilon}$ such that if
$K>K_{\epsilon}$, then for $\|\mathbf{X}_{t} \| >M_{\epsilon,K}$ the following holds
\begin{align}
&\mathds{E}\big[\sum_{i=\tau_{0}+1}^{\tau_{K}}\mathbf{X}_{t+i}\mathbf{D}_{t+i}|
\mathcal{H}_{t}\big] \nonumber \\ &\qquad \geq   \mathds{E}\Big[(\tau_{K}+1)\|\mathbf{X}_{t}\|
  \big(R_{\infty}(\tilde{\phi}(t)-\alpha-3\theta_{\varphi})-\epsilon\big)\big |\mathcal{H}_{t}\Big],\nonumber
\end{align}
where $M_{\epsilon,K}$ is a sufficiently large constant depending on $\epsilon$ and $K$, and $R_{\infty}$
is defined in (\ref{r:infty}).

\begin{proof}
The essence of the proof in this lemma is finding a lower-bound for the percentage of time that \emph{near optimal} values for $N_{1}$ are used by DCP. We prove that this percentage is close to $R_{\infty}$. First, we place a requirement on $\|\mathbf{X}_{t}\|$ for a given $K$. Later in the proof,
we find an appropriate lower-bound $K_{\epsilon}$ for $K$ according to the value of $\epsilon$. Note
that w.p.1, for any given $K$, $\tau_{K}\leq (K+1) (1+L_{1})N_{c}$.
Therefore, since departures and arrivals are bounded by $D_{max}$ and
$A_{max}$, respectively, we can easily see that for
 $0\leq i\leq \tau_{K}$, $
 \|\mathbf{X}_{t+i}-\mathbf{X}_{t}\| <C^{'}_{K}$, where $C^{'}_{K}$ is an appropriate constant depending on $K$.
Having this inequality, we can find an appropriate constant
$M^{'}_{K}$, depending on $K$, such that if
\begin{align}
\|\mathbf{X}_{t}\|>M^{'}_{K} \label{condition}
,
\end{align}
 then the following statements hold according to Property~\ref{p2} and Property~\ref{p3}, respectively, with
$\epsilon_{1}<\frac{1}{2}(\frac{\alpha}{6}-\theta_{\varphi})$.

 \begin{Statement} \label{st1}
 For $t \leq t_{1}\leq t+\tau_{K}$, $t \leq t_{2}\leq t+\tau_{K}$, and any
$N_{1}\in \mathcal{N}_{1}$,
\begin{eqnarray}
|\phi(\mathbf{X}_{t_{1}},N_{1})-\phi(\mathbf{X}_{t_{2}},N_{1})|<
\epsilon_{1}.
\label{ineq:assump:1}
\end{eqnarray}
\end{Statement}
\begin{Statement}\label{st2}
For any $\tau_{i}$, with $0\leq i \leq K$, and any
$N_{1}\in \mathcal{N}_{1}$, with probability $(1-\varrho_{\varphi})$, and regardless of $i$ and the past history at time $t+\tau_{i}+1$, $\mathcal{H}_{t+\tau_{i}+1}$, we have
\begin{align}
&|\varphi^{r}(t+\tau_{i}+1)-\nonumber \\ &\qquad \qquad \phi(t+\tau_{i}+1,N^{r}_{1}(t+\tau_{i}+1)|
<\theta_{\varphi}+\epsilon_{1}.\label{assum:phi:1}
\end{align}
Similarly, with probability $(1-\varrho_{\varphi})$, and regardless of $i$ and the past history at time $t+\tau_{i}+1+N_{c}$, $\mathcal{H}_{t+\tau_{i}+1+N_{c}}$, we have
\begin{align}
&|\varphi(t+\tau_{i}+1)-\nonumber \\& \qquad  \phi(t+\tau_{i}+1+N_{c},N_{1}(t+\tau_{i}+1))|
<\theta_{\varphi}+\epsilon_{1}.\label{assum:phi:2}
\end{align}
\end{Statement}
\begin{Remark}\label{remark0}
Property~\ref{p3} states inequalities in Statement~\ref{st2} may hold in general with \emph{different} probabilities all not less than $(1-\varrho_{\varphi})$. However, to consider the \emph{worst case} analysis, in Statement~\ref{st2}, we have assumed these inequalities, with the given conditions, hold with the \emph{same} probability $(1-\varrho_{\varphi})$ for all $i$, where $0\leq i \leq K$.
\end{Remark}
\begin{Remark}\label{remark1}
Consider the $i+1_{\text{th}}$ and the $j+1_{\text{th}}$ rounds, where $0\leq i,j \leq K$ and $i\neq j$.
Since inequalities (\ref{assum:phi:1}) and (\ref{assum:phi:2}) in Statement~\ref{st2} may hold in the $i+1_{\text{th}}$ round with probability $1-\varrho_{\varphi}$ regardless of $\mathcal{H}_{t+\tau_{i}+1}$ and $\mathcal{H}_{t+\tau_{i}+1+N_{c}}$, respectively, Statement~\ref{st2} implies that the event that (\ref{assum:phi:1}) or the one that (\ref{assum:phi:2}) holds in the $i+1_{\text{th}}$ round is independent of the inequality (\ref{assum:phi:1}) or (\ref{assum:phi:2}) holding in the $j+1_{\text{th}}$ round. In addition, the event that (\ref{assum:phi:1}) holds in the $i+1_{\text{th}}$ round is independent of (\ref{assum:phi:2}) holding in the same round.
\end{Remark}

Before going to the main part of the proof, we first derive two key inequalities. To obtain the first one, note that for any two time instants $t_{1}$ and $t_{2}$, with $t \leq t_{1}\leq t+\tau_{K}$ and $t \leq t_{2}\leq t+\tau_{K}$,
using (\ref{ineq:assump:1}), we have that
\begin{eqnarray}
|\phi(t_{1},\tilde{N}_{1}(\mathbf{X}_{t_{1}}))-\phi(t_{2},\tilde{N}_{1}(\mathbf{X}_{t_{1}}))|<\epsilon_{1},\label{inequal:3}
\end{eqnarray}
and
\begin{eqnarray*}
|\phi(t_{1},\tilde{N}_{1}(\mathbf{X}_{t_{2}}))-\phi(t_{2},\tilde{N}_{1}(\mathbf{X}_{t_{2}}))|<\epsilon_{1}.
\end{eqnarray*}
By the definition of $\tilde{N}_{1}(\mathbf{X})$ and the inequality in
(\ref{inequal:3}), we have
 \begin{align*}
&\!\!\!\!\!\!\!\!\!\!\!\!\!\!\!\!\!\!\!\!\!\!
\phi(t_{1},\tilde{N}_{1}(\mathbf{X}_{t_{1}}))-\phi(t_{2},\tilde{N}_{1}(\mathbf{X}_{t_{2}}))
\nonumber\\ \qquad \leq&
\phi(t_{1},\tilde{N}_{1}(\mathbf{X}_{t_{1}}))-\phi(t_{2},\tilde{N}_{1}(\mathbf{X}_{t_{1}}))<\epsilon_{1}
 \end{align*}
We can obtain the other direction of the inequality similarly.
Thus,
\begin{eqnarray}
|\phi(t_{1},\tilde{N}_{1}(\mathbf{X}_{t_{1}}))-\phi(t_{2},\tilde{N}_{1}(\mathbf{X}_{t_{2}}))|<\epsilon_{1}.\label{inequal:opt:phi}
\end{eqnarray}
This inequality shows that when backlog vector has a large absolute
value, the optimal $\phi$ does not vary significantly in a limited
time horizon. In particular, the variation approaches zero when
$\|\mathbf{X}_{t}\|$ approaches $\infty$.

To derive the second key inequality, first note that based on the definition of $\tau_{i}$ given in the proof of part(a) of the theorem,
the $i+1_{\text{th}}$ round after time $t$ begins at $t+\tau_{i}+1 $, and the time interval between $t+\tau_{0}+1$ and $t+\tau_{K}+1$
consists of $K$ scheduling rounds.
To simplify the notation, let $\tilde{N}_{1}$ be the optimal value of $N_{1}$ for the first round after time $t$, i.e.,
$\tilde{N_{1}}=\tilde{N_{1}}(\mathbf{X}_{t+\tau_{0}+1})$. In addition, let $N_{1}^{r}(j)$ be the candidate value for $N_{1}$ in the $j+1_{\text{th}}$ round, and let $N_{1}(j)$ be the value of $N_{1}$ used in the update interval of the $j+1_{\text{th}}$ round, i.e., $N_{1}^{r}(j)=N_{1}^{r}(t+\tau_{j}+1)$, and
$N_{1}(j)=N_{1}(t+\tau_{j}+1)$.

Now, consider the $i+1_{\text{th}}$ round, $i\geq 0$, and suppose the optimal $N_{1}$ is selected at this round,
i.e., $N_{1}^{r}(i)=\tilde{N}_{1}(\mathbf{X}_{t+\tau_{i}+1})$. Let
$\hat{N_{1}}=\tilde{N_{1}}(\mathbf{X}_{t+\tau_{i}+1})$.  Then the inequality in (\ref{assum:phi:1}) and the preceding inequality
imply that with probability $(1-\varrho_{\varphi})$
\begin{align}
&|\varphi^{r}(t+\tau_{i}+1)-\phi(t+\tau_{0}+1,\tilde{N_{1}})|
 <2\epsilon_{1}+\theta_{\varphi}.
\label{inequal:phi}\end{align}
 Let
 \begin{align}
\epsilon'=2\epsilon_{1}+\theta_{\varphi}.\label{def:eps'}
\end{align}
Based on the assumption $6\theta_{\varphi}<\alpha$ imposed by the Lemma and that $\epsilon_{1}<\frac{1}{2}(\frac{\alpha}{6}-\theta_{\varphi})$,
we have
\begin{eqnarray}
0<6\epsilon'<6\big((\frac{\alpha}{6}-\theta_{\varphi})+\theta_{\varphi}\big)=\alpha.
\label{ineq:assump:3}
\end{eqnarray}
The inequality (\ref{inequal:phi}) is the second key inequality required for the rest of the proof.

 We are now in a position to explain the essence of the proof, where we find a lower-bound for the fraction of time in the horizon of $K$ rounds in which near optimal values for $N_{1}$ are used. Towards this end, we first assume that the inequalities in (\ref{assum:phi:1}) and (\ref{assum:phi:2}) hold with probability one for
all $K$ scheduling rounds, thus assuming $\varrho_{\varphi}=0$ in Statement~\ref{st2}. We then extend our discussion to
 realistic cases where $\varrho_{\varphi}>0$.

 \emph{\textbf{Discussion assuming} $\varrho_{\varphi}=0$ \textbf{:}}
Suppose at the $i+1_{\text{th}}$ round, $i\geq1$, the optimal $N_{1}$ corresponding to $\mathbf{X}_{t+\tau_{i}+1}$ is selected, i.e., $N_{1}^{r}(i)=\hat{N}_{1}=\tilde{N}_{1}(\mathbf{X}_{t+\tau_{i}+1})$. Considering the scheduling policy, with respect to the update of $N_{1}$ in $i+1_{\text{th}}$ scheduling round,
there are two possible cases:\\

\textbf{Case~1:} In this case, we assume $\varphi^{r}(t+\tau_{i}+1)>\varphi(t+\tau_{i-1}+1)+\alpha$.
Thus, according to the update rule, $N_{1}$ gets updated at the $i+1_{\text{th}}$ round, and takes the value $N_{1}(i)=N_{1}^{r}(i)=\hat{N}_{1}$. However, it remains unchanged until the the $K+1_{\text{th}}$ round.
We can prove this statement by induction. To see this, assume that $N_{1}$ remains fixed after the $i+1_{\text{th}}$ but changes for the first time in the $j_{\text{th}}+1$ round, where $j>i$. Therefore, by the update rule, we must have
\begin{align}
\varphi^{r}(t+\tau_{j}+1)>\varphi(t+\tau_{j-1}+1)+\alpha.\label{assum:case1}
\end{align}
Since
\begin{align*}
&|\phi(t+\tau_{j}+1,N^{r}_{1}(j))-\phi(t+\tau_{0}+1,N^{r}_{1}(j))|<\epsilon_{1},
\end{align*}
and
\begin{align*}
&|\varphi^{r}(t+\tau_{j}+1)-\phi(t+\tau_{j}+1,N^{r}_{1}(j))|<\theta_{\varphi}+\epsilon_{1},
\end{align*}
which follow from (\ref{ineq:assump:1}) and (\ref{assum:phi:1}), respectively, and the assumption that $\varrho_{\varphi}=0$,
we have
\begin{align}
&\varphi^{r}(t+\tau_{j}+1)<\phi(t+\tau_{0}+1,N^{r}_{1}(j))+2\epsilon_{1}
+\theta_{\varphi}\nonumber
\\& \qquad \qquad \qquad \leq \phi(t+\tau_{0}+1,\tilde{N}_{1})+\epsilon', \label{ineq:case1}
\end{align}
where the last inequality follows from the definition of $\tilde{N}_{1}$.

Similarly, since by assumption $N(j-1)=\hat{N}_{1}$, we can use (\ref{ineq:assump:1}) and (\ref{assum:phi:2}) to show that
\begin{align}
|\varphi(t+\tau_{j-1}+1)-\phi(t+\tau_{i}+1,\hat{N_{1}})|<\epsilon'.\nonumber
\end{align}
Considering this inequality and (\ref{inequal:opt:phi}), we obtain
\begin{align}\varphi(t+\tau_{j-1}+1)>
\phi(t+\tau_{0}+1,\tilde{N_{1}})-\epsilon'-\epsilon_{1}.\label{ineq:case2}
\end{align}
Finally, considering (\ref{assum:case1}), (\ref{ineq:case1}), and (\ref{ineq:case2}), we obtain
\begin{align}
&\phi(t+\tau_{0}+1,\tilde{N_{1}})+\epsilon'> \nonumber\\
 & \qquad \qquad \qquad \qquad
\phi(t+\tau_{0}+1,\tilde{N_{1}})-\epsilon'-\epsilon_{1}+\alpha,\nonumber
\end{align}
which implies that $2\epsilon'+\epsilon_{1}>\alpha$. This is in
contradiction with (\ref{ineq:assump:3})
stating that $6\epsilon'<\alpha$. Therefore, $N_{1}(j)=\hat{N}_{1}$ for $i\leq j\leq K-1$, proving the claim.

A byproduct of the above discussion is that after the $i_{\text{th}}$ round, $\varphi(t+\tau_{j}+1)$ stays close to
$\phi(t+\tau_{0}+1,\tilde{N}_{1})$. More precisely, since $N_{1}(j)=\hat{N}_{1}$ for $i\leq j\leq K-1$, we have
\begin{align*}
&|\varphi(t+\tau_{j}+1)-\phi(t+\tau_{j}+1,\hat{N}_{1}))|<\theta_{\varphi}+\epsilon_{1}.
\end{align*}
Moreover, from (\ref{ineq:assump:1}) we have
\begin{align*}
&|\phi(t+\tau_{j}+1,\hat{N}_{1})-\phi(t+\tau_{i}+1,\hat{N}_{1})|<\epsilon_{1}.
\end{align*}
Using the last two inequalities and  (\ref{inequal:opt:phi}) , for $i\leq j\leq K-1$, we obtain
\begin{eqnarray}
|\varphi(t+\tau_{j}+1)-\phi(t+\tau_{0}+1,\tilde{N}_{1})|<\epsilon'+\epsilon_{1},
\label{inequal:ineff:1}
\end{eqnarray}
which shows how close is $\varphi(t+\tau_{j}+1)$ to $\phi(t+\tau_{0}+1,\tilde{N}_{1})$.

 \textbf{Case~2:} In this case, we assume $\varphi^{r}(t+\tau_{i}+1)\leq \varphi(t+\tau_{i-1}+1)+\alpha$.
Taking similar steps as in Case~1, we can show that
\begin{align}
\phi(t+\tau_{0}+1,\tilde{N_{1}})-\epsilon'\leq
\varphi^{r}(t+\tau_{i}+1) \nonumber
\end{align}
and
\begin{align}
\varphi(t+\tau_{i-1}+1) \leq\phi(t+\tau_{0}+1,N_{1}(i-1))+\epsilon',\nonumber
\end{align}
Hence, using the assumption, we obtain
\begin{align}
&\phi(t+\tau_{0}+1,\tilde{N_{1}})-2\epsilon'-\alpha\leq\phi(t+\tau_{0}+1,N_{1}(i-1))
\label{inequal:1}.
\end{align}
We next show that $N_{1}$ gets updated at most once in the rest of $K-(i+1)$ rounds.
Let the $j_{1}+1_{\text{th}}$ round, for $i<j_{1}\leq K-1$, be the first
round after the $i+1_{\text{th}}$ round that $N_{1}$ gets updated. Using similar arguments as the ones in Case~1, we have
\begin{align}
\varphi^{r}(t+\tau_{j_{1}}+1)<\phi(t+\tau_{0}+1,N_{1}^{r}(j_{1}))+\epsilon'.\nonumber
\end{align}
and
\begin{align}
\varphi(t+\tau_{j_{1}-1}+1)>\phi(t+\tau_{0}+1,N_{1}(i-1))-\epsilon',\nonumber
\end{align}
where in the above we have used the assumption that $N_{1}$ does not change before the $j_{1}+1$ round, and thus have set $N_{1}(j_{1}-1)=N_{1}(i-1)$. Since $N_{1}$ gets updated at the $j_{1}+1_{\text{th}}$, we have $N_{1}(j)=N_{1}^{r}(j)$. Using this, the update rule, and the last two inequalities, we have
\begin{align*}
\!\!\!\!\!\!\!\!\!\!\!\!\!\!\!\!\!&\phi(t+\tau_{0}+1,N_{1}(i-1))-\epsilon'+\alpha \nonumber \\
&\qquad \qquad \qquad \qquad
<\phi(t+\tau_{0}+1,N_{1}(j_{1}))+\epsilon' %\label{inequal:2}.
\end{align*}
This inequality and (\ref{inequal:1}) yield
\begin{eqnarray}
\phi(t+\tau_{0}+1,\tilde{N}_{1})-4\epsilon'<\phi(t+\tau_{0}+1,N_{1}(j_{1})).
\label{ineff:1}
\end{eqnarray}
Similarly, if there exists $j_{1}<j_{2}\leq K-1$, such that at $t+\tau_{j_{2}}+1$, $N_{1}$ becomes
updated for the second time, we can show that
\begin{align}
\!\!\!\!\!\!\!\!\!\!\!\!\!\!\!\!\!
&\phi(t+\tau_{0}+1,N_{1}(j_{1}))-\epsilon'+\alpha \nonumber
\\ &\qquad \qquad \qquad \qquad <\phi(t+\tau_{0}+1,N_{1}(j_{2}))+\epsilon'. \nonumber
\end{align}
In other words,
\begin{align*}
&\phi(t+\tau_{0}+1,N_{1}(j_{1}))+\alpha-2\epsilon'
\nonumber\\ & \qquad \qquad \qquad \qquad \quad<\phi(t+\tau_{0}+1,N_{1}(j_{2})).
%\label{ineff:2}
\end{align*}
Therefore, every time that $N_{1}$ becomes updated, the
algorithm finds a better estimate for
$\phi(t+\tau_{0}+1,\tilde{N}_{1})$. More specifically, after each update, the gap between $\phi(t+\tau_{0}+1,N_{1}(j_{k}))$ and $ \phi(t+\tau_{0}+1,\tilde{N}_{1})$ is decreased
by $(\alpha-2\epsilon')>\frac{2}{3}\alpha$. However, (\ref{inequal:1}) shows that the initial gap is $\alpha+2\epsilon'$, which is less than or equal to $\frac{4}{3}\alpha$. Therefore, $N_{1}$ can be updated at most once in the rest of $K-i-1$ scheduling rounds.

In this case, similar to what we observed in Case~1, $\varphi(t+\tau_{j}+1)$ stays close to $\phi(t+\tau_{0}+1,\tilde{N}_{1})$. To see this, consider a scheduling round, e.g., $j+1_{\text{th}}$ for $i\leq j \leq K-1$
round, where $N_{1}(i-1)$ is used. By (\ref{ineq:assump:1}) and
(\ref{assum:phi:2}), we have
\begin{align*}
&\!\!\!\!|\varphi(t+\tau_{j}+1)-\phi(t+\tau_{0}+1,N_{1}(i-1))|
<\epsilon'.
\end{align*}
Considering the above inequality and (\ref{inequal:1}), we obtain
\begin{align}
&\!\!|\varphi(t+\tau_{j}+1)-\phi(t+\tau_{0}+1,\tilde{N}_{1})|
<\alpha+3\epsilon' \label{inequal:ineff:2}.
\end{align}
In the same manner, if instead of $N_{1}(i-1)$ an updated version of $N_{1}$ is used in an scheduling round, we can use
the inequality in (\ref{ineff:1}) to show that the
above inequality still holds. Hence, the inequality in (\ref{inequal:ineff:2}) holds for all $j$ with $ i\leq j \leq K-1$ since, as proved earlier, $N_{1}$ becomes updated at most once.

 Combining the
inequality (\ref{inequal:ineff:1}) associated with Case~1 and the
inequality (\ref{inequal:ineff:2}) associated with Case~2, we see
that regardless of which case happens, the following holds for $i\leq j \leq K-1$:
\begin{eqnarray}
|\varphi(t+\tau_{j}+1)-\phi(t+\tau_{0}+1,\tilde{N}_{1})|<\gamma,
\label{ineff:3}
\end{eqnarray}
where
\begin{align}
\gamma=\alpha+3\epsilon'. \label{gamma}
\end{align}

Inspired by the above inequality, we now define a new random variable $R_{K}$ as
the percentage of time that ``near optimal'' solution is used in the time
horizon consisting of $K$ rounds. By near optimal in a scheduling round, e.g., the $j+1_{\text{th}}$ round, we mean a choice of $N_{1}$
that ensures $\varphi(t+\tau_{j}+1)$ is close to $\phi(t+\tau_{0}+1,\tilde{N}_{1})$ in the sense of (\ref{ineff:3}). Intuitively, a larger $R_{K}$ results in a larger scaling factor, and thus, a better throughput performance. In the
following, using the preceding discussions provided in Case~1 and Case~2, we find a lower bound for
$R_{K}$.

 As explained in Section~\ref{sec:dcp:sch}, in the beginning of each round, e.g., the $j+1_{\text{th}}$ round, the optimal $N_{1}$, corresponding to $\mathbf{X}_{t+\tau_{j}+1}$, is chosen independently with probability $\delta$. Therefore, we see that with probability
$(1-\delta)^{i-1}\delta$, \emph{after} the first round, the optimal solution is selected for the first time in the $i+1_{\text{th}}$
round, $i\geq 1$. Suppose this event happens at $i+1_{\text{th}}$ round, $i\geq 1$. If Case~1 happens, we can partition the time interval between $t+\tau_{0}+1$
and $t+\tau_{K}+1$ into three sets. The first set consists of all test intervals. The second set consists of the
update intervals before the $i+1_{th}$ round. Finally, the third set consists of the update intervals after the $i_{th}$ round. Considering these sets in sequence, we can express the total number of timeslots between
$t+\tau_{0}+1$ and $t+\tau_{K}+1$ by
 \begin{align}
KN_{c}&+\sum_{j=0}^{i-1}N_{c}N_{3}(j) \nonumber
\\
&+\sum_{j=1}^{K-i}N_{c}\min(\max(1,\frac{N_{3}(i-1)}{2})2^{j-1},L_{1}) \label{total:t},
 \end{align}
 where $N_{3}(j)=N_{3}(t+\tau_{j}+1)$. To obtain the above expression, we have used the fact that
 when Case~1 happens, according to the update rule, at the $i+1_{\text{th}}$ round $N_{3}(i)$ becomes half of the previous value
 for $N_{3}$, but keeps doubling for each following round.
Recalling that (\ref{ineff:3}) holds after the $i_{\text{th}}$ round, and $N_{3}(j)\leq L_{1}$, we can use (\ref{total:t})
to show that for $K>i+1$, w.p.1,
\begin{align}
R_{K}\geq
\frac{2+\sum_{j=1}^{K-i-1}\min(2^{j-1},L_{1})}
{K+iL_{1}
+2+\sum_{j=1}^{K-i-1}\min(2^{j-1},L_{1})}.\label{r_k:ideal}.
\end{align}

For a given fixed $i$, the above fraction approaches
$\frac{L_{1}}{1+L_{1}}$ as $K$ approaches $\infty$. Therefore, for any given positive $\epsilon_{2}$, we
can choose $K$ sufficiently large such that for all $i$ with $1\leq i \leq i_{max}$, the above
fraction is larger than $\frac{L_{1}}{1+L_{1}}-\epsilon_{2}$.
Applying a similar argument to the second case, we can find a
sufficiently large $K$ such that the fraction of time over which
the near optimal solution is used is larger than
$\frac{L_{1}}{1+L_{1}}-\epsilon_{2}$. In addition, we can select $i_{max}$ such that
for a given positive $\zeta_{1}$
\begin{eqnarray*}
\sum_{i=1}^{i_{max}}(1-\delta)^{i-1}\delta>1-\zeta_{1}.
\end{eqnarray*}
Hence, if $K$ is sufficiently large, with probability larger than $1-\zeta_{1}$, we have
\begin{eqnarray}
R_{K}\geq
(\frac{L_{1}}{1+L_{1}}-\epsilon_{2}). \label{lowerb:RK}
\end{eqnarray}

This is an interesting observation. Since the choices for $\epsilon_{2}$ and $\zeta_{1}$ are arbitrary, this observation implies that
in the limit of large backlog vectors, the policy keeps the network operating at near optimal points for at least $\frac{L_{1}}{1+L_{1}}$ fraction of time. Hence, in the limit, at most only the time for selecting new values for $N_{1}$ and observing their performance is \emph{wasted}, which constitues $\frac{1}{1+L_{1}}$ fraction of total time. Recall that near optimality is defined in (\ref{ineff:3}), and $\phi(t+\tau_{0}+1,\tilde{N}_{1})=\tilde{\phi}(\mathbf{X}_{t+\tau_{0}+1})$, we therefore, as a result of the preceding inequality, expect the limiting scaling factor of the capacity region
to be a function of $\tilde{\phi}(\mathbf{X})$, and be proportional to $\frac{L_{1}}{1+L_{1}}$.

 Note that to obtain the above results, in particular those mentioned in Case~1 and Case~2, we assume that the inequalities in (\ref{assum:phi:1}) and (\ref{assum:phi:2}) hold for all $K$ scheduling rounds after time $t$. Therefore, the above discussion for $R_{K}$ holds only for the limiting case of $\varrho_{\varphi} = 0$. In the following, we extend the preceding discussions for a realistic situation where $\varrho_{\varphi}>0$, and obtain a general lower bound for $R_{K}$.

\emph{\textbf{Discussion assuming} $\varrho_{\varphi}>0$ \textbf{:}}We start by assuming that
\begin{align}
\|\mathbf{X}_{t}\|>M''_{K}>M'_{K},\label{another:assup}
\end{align}
for a sufficiently large $M''_{K}$ such that for a given $C$, $C\gg1$,
Statement~\ref{st1} holds for all times $t_{1}$ and $t_{2}$ greater than $t-1$ and less than $t+\tau_{(K+2)C}+1$, and Statement~\ref{st2} holds for all $i$ with $0\leq i\leq (K+2)C$. We partition the time between $t+\tau_{0}+1$ and $t+\tau_{(K+2)C}+1$ into a set of \emph{periods}, where each period consists of several
scheduling rounds. For the simplicity of discussion, we assume that the first period always starts at $t+\tau_{0}+1$.

 Corresponding to each period, e.g. the $j_{\text{th}}$ period, we define a \emph{positive} r.v. $i_{\delta,j}$. This r.v. takes value $i$, $i>0$, if the following conditions are met. First, in the $i+1_{\text{th}}$ round of the $j_{\text{th}}$ period, for the first time in that period the optimal value for $N_{1}$ is selected. Second, the inequality (\ref{assum:phi:1}) holds for $\varphi^{r}$ at the $i+1_{\text{th}}$ round as well as (\ref{assum:phi:2}) for $\varphi$ at the $i_{\text{th}}$ round, both in the $j_{\text{th}}$ period. Third, $i$ equals $C-1$ if the last two conditions do not hold for any of the second to the $(C-2)_{\text{th}}$ rounds in the $j_{\text{th}}$ period. Recall that the optimal $N_{1}$ is chosen independently in each round with probability $\delta$. Thus, using Remark~\ref{remark1} with $K $ replaced with $(K+2)C$, we see that $i_{\delta,j}$ becomes a truncated geometric r.v. with success probability
 \begin{align}
\delta'=(1-\varrho_{\varphi})^{2}\delta,\label{d11}
 \end{align}
 and with the property that
 \begin{align}
P(i_{\delta,j}=C-1)=1-\sum_{i=1}^{C-2}\delta' (1-\delta')^{i-1}.\label{d12}
 \end{align}

 Similarly, corresponding to the $j_{\text{th}}$ period, we define a \emph{non-negative} r.v. denoted by $i_{\varphi,j}$ that is zero if $i_{\delta,j}=C-1$, and otherwise, is the number of consecutive rounds immediately following the $i_{\delta,j}$$_{\text{th}}$ round in the $j_{\text{th}}$ period for all of which the inequalities in (\ref{assum:phi:1}) and (\ref{assum:phi:2}) hold. Similar to $i_{\delta,j}$, we limit $i_{\varphi,j}$ to be upper-bounded by $C-1$. We do so by letting $i_{\varphi,j}=C-1$ if for all $C-1$ rounds after the $i_{\delta,j}$$_\text{th}$ round (\ref{assum:phi:1}) and (\ref{assum:phi:2}) hold. Using this definition of $i_{\varphi,j}$, and Remark~\ref{remark1} with $K $ replaced with $(K+2)C$, it is easy to see that
\begin{align}
& P(i_{\varphi,j}=0 | i_{\delta,j}= C-1)=1, \label{d21} \\ & P(i_{\varphi,j}=0 | i_{\delta,j}\neq C-1)=\varrho_{\varphi}\label{d22}
  \end{align}
  and
  \begin{align}
&  P(i_{\varphi,j}=k|i_{\delta,j} \neq C-1)
  \nonumber \\ &\qquad \qquad =(1-\varrho_{\varphi})^{2k-1}(1-(1-\varrho_{\varphi})^{2}), \ \ 1\leq k \leq C-2,\label{d23}
   \end{align}
   and by the boundedness of $i_{\varphi,j}$,
   \begin{align}
& P(i_{\varphi,j}=C-1|i_{\delta,j} \neq C-1)
 \nonumber \\ & \qquad \qquad =1-\sum_{k=0}^{C-2} P(i_{\varphi,j}=k|i_{\delta,j} \neq C-1)\nonumber
 \\&\qquad \qquad=(1-\varrho_{\varphi})^{2(C-1)-1} .\label{d24}
   \end{align}

To complete the characterization of periods, we define the last round in the $j_{\text{th}}$ period to be the one immediately following the $i_{\delta,j}+ i_{\varphi,j}$$_{\text{th}}$ round in the $j_{\text{th}}$ period. This indicates that the $j_{\text{th}}$ period consists of $i_{\delta,j}+ i_{\varphi,j}+1$ rounds, and thus by the definition of $i_{\delta,j}$ and $i_{\varphi,j}$, its length is always less than $2C$.

Having introduced periods, we now define the sequence $\{p_{j}\}_{j=0}^{\infty}$, with $p_{0}=0$, as a subset of indices such that $\tau_{p_{j}}$, $j\geq 1$, is the number of timeslots form time $t$ to the last timeslot in the $j_{\text{th}}$ period. By definition, therefore, the $j_{\text{th}}$ period, $j\geq 1$, starts at $t+\tau_{p_{j-1}}+1$ and ends at $t+\tau_{p_{j}}+1$. Let $i_{K}$ be number of periods that are completely contained in the $K$ rounds after time $t$, i.e.,
\begin{align}
i_{K}=\max \{j: p_{j}< K, j\geq 0\}.
\end{align}
By virtue of the definitions for a scheduling period, $i_{\delta,j}$, and $i_{\varphi,j}$, we can see that
for all rounds after $i_{\delta,j}$$_{\text{th}}$ and before the last round in the $j_{\text{th}}$ period, all conditions to apply the discussions in Case~1 and Case~2 are met. Hence, considering (\ref{ineff:3}), for $1\leq j \leq i_{K}+1$ and $ i_{\delta,j}< i \leq i_{\delta,j}+i_{\varphi,j}$ with $p_{j-1}+i-1<K$ we have that
\begin{align}
|\varphi(t+\tau_{p_{j-1}+i-1}+1)-\tilde{\phi}(t+\tau_{0}+1)|<\gamma.\label{ineff:3b}
\end{align}
Note that $t+\tau_{p_{j-1}+i-1}+1$ is the start point of the $i_{\text{th}}$ round in the $j_{\text{th}}$ period, and we have set condition $p_{j-1}+i-1<K$ to consider only the first $K$ rounds after time $t$.

We now focus on finding a lower bound for $R_{K}$. Towards that goal, we use r.v.'s $i_{\delta,j}$ and  $i_{\delta,j}$ to define a new sequence of $N_{3}$ denoted by $N^{'}_{3}$ according to the following:
\begin{align}
&N^{'}_{3}(k=p_{j-1}+i-1)=N^{'}_{3}(j,i)\nonumber
\\
&=\left\{\begin{array}{l l}
L_{1} & (1 \leq i \leq i_{\delta,j}) \ \vee
\\ & \qquad \quad \qquad (i=i_{j}+1) \\
%0 & (i=i_{\delta,j}+1) \wedge (i_{\varphi,j}=0) \\
1 & (i=i_{\delta,j}+1) \wedge (i_{\varphi,j}=1) \\
2 & (i=i_{\delta,j}+1) \wedge (i_{\varphi,j}>1) \\
\min(\frac{2^{i}}{2^{i_{\delta,j}+2}},L_{1})  & (i_{\delta,j}+2 \leq i \leq i_{j})  \wedge
     (i_{\varphi,j}>1)
\end{array}
\right.\nonumber
\end{align}
where
\begin{align}
i_{j}=i_{\delta,j}+i_{\varphi,j} \nonumber.
 \end{align}
 Note that a round after time $t$ can be specified uniquely either as the $k_{\text{th}}$ round
after time $t$, or as the $i_{\text{th}}$ round in the $j_{\text{th}}$ period. We thus in the above have defined $N^{'}_{3}$ as a function of
the round number $k$ after time $t$, and also as a function of the pair $(j,i)$. Similarly, $N_{3}$ can be considered as a function of
either $K$ or $(j,i)$. In addition, note that the above definition of $N^{'}_{3}$ is mainly motivated by the method used to obtain (\ref{r_k:ideal}).

To simplify the analysis, we slightly modify the definition of $R_{K}$ such that
\begin{align}
R_{K} &= \frac{\sum_{j=1}^{i_{K}}\sum_{i=i_{\delta,j}+1}^{i_{j}} N_{c}N_{3}(j,i)}{\tau_{K}-\tau_{0}} \nonumber.
\end{align}
Hence, $R_{K}$ concerns only the rounds that are within the first $i_{K}$ periods, and
for which (\ref{ineff:3b}) holds. Considering the above definition, we can use a simple inspection to show that the above choices for $N^{'}_{3}$ ensure that w.p.1
\begin{align}
R_{K} &\geq R^{'}_{K}  = \frac{\sum_{j=1}^{i_{K}} \lambda_{r}^{C}(j)  }{\sum_{j=1}^{i_{K}+1}\lambda_{t}^{C}(j)},\label{rkk}
\end{align}
where
\begin{align}
\lambda_{t}^{C}(j)&=   \sum_{i=1}^{i_{j}+1}\ (1+N^{'}_{3}(j,i)) \nonumber  \\&=i_{j}+1+(i_{\delta,j}+1)L_{1} \nonumber \\
&+\mathbf{1}_{(i_{\varphi,j}=1)}
+(2+\sum_{i=0}^{i_{\varphi,j}-2}\min(2^{i},L_{1}))\mathbf{1}_{(i_{\varphi,j}>1)},
\end{align}
and
\begin{align}
\lambda_{r}^{C}(j)&=\sum_{i=i_{\delta^{'},j}+1}^{i_{j}}\ N^{'}_{3}(j,i)
\nonumber\\&=\mathbf{1}_{(i_{\varphi,j}=1)}
+\big(2+\sum_{i=0}^{i_{\varphi,j}-2}\min(2^{i},L_{1})\big)\mathbf{1}_{(i_{\varphi,j}>1)}.
\end{align}
%and
%\begin{align}
% \lambda_{l}=\sum_{i=1}^{K-p_{i_{K}}} (1+N^{'}_{3}(i,i_{K}+1)).
%\end{align}
As expected, $\lambda_{r}^{C}(j)$ denotes the minimum contribution of the $j_{\text{th}}$ period to the ratio $R_{K}$. The term $\lambda_{t}^{C}(j)$ is the total
length of the $j_{\text{th}}$ period that could potentially minimize the ratio $R_{K}$.
In addition, note that the inequality (\ref{rkk}) in general holds even when Remark~\ref{remark0} does not hold, and thus, when the distribution of $i_{\delta,j}$ and $i_{\varphi,j}$ is not given by (\ref{d11})-(\ref{d24}). However, as stated in Remark~\ref{remark0}, we consider the worst case which enables us to find a lower-bound for $R_{K}$ that holds with high probability.
We next show that the random variable $R^{'}_{K}$ is a function of i.i.d pairs, and in fact, is the average accumulated reward for a renewal process.

First, note that by definition $i_{\delta,j}>0$, and hence, a scheduling period, which consists of $i_{\delta,j}+i_{\varphi,j}+1$ rounds, at least contains of two rounds. This implies that the $K$ rounds under consideration consitute at most $\lfloor\frac{K}{2}\rfloor$ complete periods.
Consequently, $R^{'}_{K}$ is a function of at most $K_{P}=\lfloor\frac{K}{2}\rfloor+1$ periods, and thus, is completely characterized by
\begin{align}
\Big\{(i_{\delta,j},i_{\varphi,j}), \ 1\leq j \leq K_{P}\Big\}.\label{set}
\end{align}
We know that by definition a period consists of at most $2C-1$ rounds. Therefore, considering Remark~\ref{remark1} with $K$ replaced with $K_{P}(2C-1)$, we see that the above set is consisted of i.i.d pairs, with distribution given by (\ref{d11})-(\ref{d24}), if Statement~\ref{st2} holds for all $i$ with $0\leq i\leq K_{P}(2C-1)$. Recall that we started by assuming $\|\mathbf{X}_{t}\|>M''_{K}$ such that Statement~\ref{st2} holds for $0\leq i\leq (K+2)C$. But this means that Statement~\ref{st2} holds for all $i$ with $0\leq i\leq K_{P}(2C-1)$ since $K_{P}(2C-1)<(K+2)C$. Therefore, we have that the pairs in (\ref{set}) are i.i.d.\footnote{Note that if $C=\infty$, $i_{\delta,j}$ or $i_{\varphi,j}$ may take any finite value. Hence, a proper definition of $i_{\delta,j}$ or $i_{\varphi,j}$ with distributions given by (\ref{d11})-(\ref{d24}) requires Statement~\ref{st2} hold for all $i\geq0$, which cannot be true by assuming $\|\mathbf{X}_{t}\|>M''_{K}$, for any finite value of $M''_{K}$.}

 Next, observe that since the pair $(\lambda_{t}^{C}(j),\lambda_{r}^{C}(j))$ depends only on $(i_{\delta,j},i_{\varphi,j})$, the sequence $\{(\lambda_{t}^{C}(j),\lambda_{r}^{C}(j)) : \ 1\leq j\leq K_{P} \}$ also consists of i.i.d. pairs. This sequence is defined for $1\leq j\leq K_{P}$, but can be defined for $j>K_{P}$ by defining the pair $(\lambda_{t}^{C}(j),\lambda_{r}^{C}(j))$, for $j>K_{P}$, as an i.i.d. version of $(\lambda_{t}^{C}(1),\lambda_{r}^{C}(1))$. The resulting expanded sequence
\begin{align*}
\{(\lambda_{t}^{C}(j),\lambda_{r}^{C}(j)) : \ j\geq 1 \}
\end{align*}
defines a reward renewal process. For this renewal process, $\lambda_{t}^{C}(j)$ is the length of the $j_{\text{th}}$ inter-renewal interval, $\lambda_{r}^{C}(j)$ is the accumulated reward collected at the end of $j_{\text{th}}$ renewal interval, and
$R^{'}_{K}$ is the average accumulated reward prior to end of $i_{K}+1_{\text{th}}$ inter-renewal interval.

 Consider the extended sequence, and let $R^{'}_{k}$, for any $k>0$, be defined similar to $R^{'}_{K}$. Applying the strong law for the renewal process, and noting that $i_{k} \to \infty$, a.s., as $k \to \infty$, we obtain
 \begin{align*}
R^{C}_{\infty} \triangleq \lim_{k\to \infty}R^{'}_{k}=\frac{\mathds{E}[\lambda_{r}^{C}(1)]}{\mathds{E}[\lambda_{t}^{C}(1)]}, \ \ \text{a.s.}
 \end{align*}
Hence, by the almost surely convergence, for any given $\epsilon_{R}>0$ and $\varrho_{R}>0$, there exists a sufficiently large $n^{C}_{\epsilon_{R},\varrho_{R}}$ such that \cite{borovkov}
\begin{align}
P(\sup_{k\geq n^{C}_{\epsilon_{R},\varrho_{R}}} | R^{'}_{k}-R^{C}_{\infty} |<\frac{\epsilon_{R}}{2})> (1-\varrho_{R}) .\label{ineq:Rr1}
\end{align}
But since $\lim_{C\to \infty} R^{C}_{\infty}=R_{\infty}$, we can chose a sufficiently large $C$ such that
\begin{align*}
|R^{C}_{\infty}-R_{\infty}|<\frac{\epsilon_{R}}{2}.
\end{align*}
Considering (\ref{ineq:Rr1}) for this value of $C$, we have that
\begin{align}
P(\sup_{k\geq n^{C}_{\epsilon_{R},\varrho_{R}}} | R^{'}_{k}-R_{\infty} |<\epsilon_{R})> (1-\varrho_{R}) .\label{ineq:Rr2}
\end{align}

The above inequality and (\ref{rkk}) imply that there exists a sufficiently large $K_{\epsilon_{R},\varrho_{R}}$ such that for $K>K_{\epsilon_{R},\varrho_{R}}$ and $\|\mathbf{X}_{t}\|>M''_{K}$
\begin{align}
P(  R_{K}>R_{\infty}-\epsilon_{R}| \mathcal{H}_{t+\tau_{0}+1})> (1-\varrho_{R}).\label{p:r:k}
\end{align}
Here, we have stated the probability conditioned on $\mathcal{H}_{t+\tau_{0}+1}$ since all of the previous discussions are valid regardless of  $\mathcal{H}_{t+\tau_{0}+1}$.
The above inequality states that with probability close to one, $R_{K}$ is close to $R_{\infty}$ in the sense that $R_{K}>R_{\infty}-\epsilon_{R}$. This
is a generalized version of the result obtained in (\ref{lowerb:RK}), as desired.

We are finally in a position to derive a lower bound for the LHS of the inequality in the lemma, denoted by $\Sigma$. First, note that
\begin{align}
\Sigma&=\mathds{E}\big[\sum_{i=\tau_{0}+1}^{\tau_{K}}\mathbf{X}_{t+i}\mathbf{D}_{t+i}|
\mathcal{H}_{t}\big]\nonumber \\&\geq \mathds{E}[\sum_{k=1}^{p_{i_{K}}}\sum_{i=1}^{\tau_{K}-\tau_{k-1}-N_{c}}\mathbf{X}_{t+\tau_{k-1}+
N_{c}+i}\mathbf{D}_{t+\tau_{k-1}+N_{c}+i}|\mathcal{H}_{t}\Big]\label{sigma:1},
\end{align}
where we have simply used the fact that the product $\mathbf{X}_{t+i}\mathbf{D}_{t+i}$ is positive, and neglected the contributions due to the test intervals, and also the ones due to the
rounds of the last partially covered period.

To simplify the notation, let $t_{j,i,l}$ denote the start of $l_{th}$ timeslot of the $i_{th}$ round in the $j_{th}$ period, i.e.,
\begin{align*}
t_{j,i,l}=t+\tau_{p_{j-1}+i-1}+l.
\end{align*}
In addition, let $\delta_{j,i}$ denote the length of the $i_{th}$ round in the $j_{th}$ period, i.e.,
 \begin{align}
\delta_{j,i}=\tau_{p_{j-1}+i}-\tau_{p_{j-1}+i-1}
=N_{c}(1+N_{3}(j,i)),\nonumber
 \end{align}
 where
 \begin{align*}
N_{3}(j,i)=N_{3}(t+\tau_{p_{j-1}+i-1}+1).
 \end{align*}

 Considering the partition generated by the scheduling periods, and the above definitions, we can use (\ref{sigma:1}) to show that
\begin{align}
\Sigma &\geq
\mathds{E}\Big[\sum_{j=1}^{i_{K}}\sum_{i=i_{\delta,j}+1}^{i_{j}}
\sum_{l=1}^{\delta_{j,i}-N_{c}}
 \mathbf{X}_{t_{j,i,N_{c}+l}}\mathbf{D}_{t_{j,i,N_{c}+l}}|\mathcal{H}_{t}\Big]
\nonumber\\ &\geq
\mathds{E}\Big[\sum_{j=1}^{i_{K}}\sum_{i=i_{\delta,j}+1}^{i_{j}}
\sum_{l=1}^{\delta_{j,i}-N_{c}}
\|\mathbf{X}_{t}\|
\nonumber \\&\qquad \frac{\|\mathbf{X}_{t_{j,i,N_{c}+1}}\|}{\|\mathbf{X}_{t}\|}(\tilde{\phi}
(t+\tau_{0}+1)-\gamma)|\mathcal{H}_{t}\Big]
\label{a:phi},
\end{align}
where the last inequality follows from (\ref{ineff:3b}).
Using (\ref{inequal:opt:phi}) and assuming
 \begin{align*}
 \frac{\|\mathbf{X}_{t_{j,i,N_{c}+1}}\|}{\|\mathbf{X}_{t}\|}>(1-\epsilon_{3}),
 \end{align*}
  we obtain
\begin{align}
\Sigma&\geq
\mathds{E}\Big[\sum_{j=1}^{i_{K}}\sum_{i=i_{\delta,j}+1}^{i_{j}}
\sum_{l=1}^{\delta_{j,i}-N_{C}}
\|\mathbf{X}_{t}\|
(1-\epsilon_{3})(\tilde{\phi}
(t)-\epsilon_{1}-\gamma)|\mathcal{H}_{t}\Big]
\nonumber \\&=\mathds{E}\Big[\|\mathbf{X}_{t}\|(\tilde{\phi}(t)-\epsilon_{1}-\gamma)
\nonumber \\&\qquad \qquad (1-\epsilon_{3})
\sum_{j=1}^{i_{K}}\sum_{i=i_{\delta,j}+1}^{i_{j}} N_{c}N_{3}(j,i)
|\mathcal{H}_{t}\Big].
\label{relative:abs}
\end{align}
But, by using (\ref{p:r:k}) and adopting a method similar to the one in Lemma~\ref{prelemma}, for $K>K_{\epsilon_{R},\varrho_{R}}$ and $\|\mathbf{X}_{t}\|>M''_{K}$ we can show that
\begin{align}
&\mathds{E}\Big[\sum_{j=1}^{i_{K}}\sum_{i=i_{\delta,j}+1}^{i_{j}} N_{c}N_{3}(j,i)|\mathcal{H}_{t+\tau_{0}+1}\Big]
\nonumber \\ &\qquad \ \geq (1-\epsilon_{4})(R_{\infty}-\epsilon_{R})\mathds{E}\Big[(\tau_{K}-\tau_{0})\big|\mathcal{H}_{t+\tau_{0}+1}\Big],
\label{ineq:RKP}
 \end{align}
 where $\epsilon_{4} \to 0$, as $K \to \infty$.

 Using (\ref{relative:abs}) and (\ref{ineq:RKP}), we obtain
\begin{align}
\Sigma&\geq
\mathds{E}\Big[(\tau_{K}-\tau_{0})\|\mathbf{X}_{t}\|
(\tilde{\phi}(t)-\epsilon_{1}-\gamma)
\nonumber \\&
\qquad \quad \qquad \qquad
(R_{\infty}-\epsilon_{R})
(1-\epsilon_{3})(1-\epsilon_{4})|\mathcal{H}_{t}\Big]
.\nonumber
\end{align}
If we assume
\begin{align}
\frac{\tau_{K}-\tau_{0}}{\tau_{K}+1}>1-\epsilon_{5},\nonumber
\end{align}
then using the above inequality and the definition of $\gamma$,
 given in (\ref{gamma}), we have that
\begin{align}
\Sigma \geq \mathds{E}\Big[(\tau_{K}+1)\|\mathbf{X}_{t}\|
 (R_{\infty}(\tilde{\phi}(t)-\alpha-3\theta_{\varphi})-\epsilon)\big |\mathcal{H}_{t}\Big]\nonumber,
\end{align}
where $\epsilon>0$, and can be made arbitrarily small by choosing sufficiently small values for $\epsilon_{1}$, $\epsilon_{3}$, $\epsilon_{4}$, $\epsilon_{5}$, and $\epsilon_{R}$. Note that since $\|\mathbf{X}_{t}-\mathbf{X}_{t+i}\|<C'_{K}$, $0\leq i\leq \tau_{K}$, as discussed in the beginning of the proof of the lemma, $\epsilon_{3}$ can be assumed arbitrarily small if $\|\mathbf{X}_{t}\|$ is sufficiently large. Moreover, since $\tau_{0}\leq (1+L_{1})N_{c}$ and $\tau_{K}+1\geq 2KN_{c}$, $\epsilon_{5}$ can be made arbitrarily small by assuming a sufficiently large $K$. Thus, by considering the discussions for $\epsilon_{R}$ and $\epsilon_{4}$, we see that we can make $\epsilon_{R}$, $\epsilon_{4}$, and $\epsilon_{5}$ all
sufficiently small by choosing $K>K_{\epsilon}$, for sufficiently large $K_{\epsilon}$.
 Having selected $K$, we can find a lower bound $M_{\epsilon,K}>M''_{K}>M'_{K}$ for $\|\mathbf{X}_{t}\|$ such that $\epsilon_{1}$ and $\epsilon_{3}$ are also sufficiently small. Hence, $\epsilon$ can be arbitrarily small, completing the proof of the lemma.
\end{proof}
\end{Lemma}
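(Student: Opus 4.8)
The plan is to lower-bound the fraction of time that DCP operates with a \emph{near-optimal} value of $N_1$ --- one for which the per-frame normalized backlog-rate product is within $O(\alpha+\theta_\varphi)$ of $\tilde\phi$ --- and to show this fraction is essentially $R_\infty$. First I would fix $K$ and require $\|\mathbf{X}_t\|$ large enough that, since arrivals and departures are bounded and $\tau_K\le(K+1)(1+L_1)N_c$, the backlog norm varies by at most a constant over the horizon; by Property~\ref{p2} and Property~\ref{p3} this forces the continuity estimate $|\phi(\mathbf{X}_{t_1},N_1)-\phi(\mathbf{X}_{t_2},N_1)|<\epsilon_1$ (hence also $|\tilde\phi(t_1)-\tilde\phi(t_2)|<\epsilon_1$ after a short argument through $\tilde N_1$) and the concentration estimates $|\varphi^r-\phi|,|\varphi-\phi|<\theta_\varphi+\epsilon_1$ to hold, each with probability $1-\varrho_\varphi$, throughout all $K$ rounds, with $\epsilon_1$ chosen below $\tfrac{1}{2}(\tfrac{\alpha}{6}-\theta_\varphi)$.

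Next I would analyze the update rule under the idealization $\varrho_\varphi=0$. Suppose the optimal $N_1$ is first selected (after round $1$) at round $i+1$; write $\epsilon'=2\epsilon_1+\theta_\varphi$, so $6\epsilon'<\alpha$. If at that round $\varphi^r>\varphi_{\mathrm{prev}}+\alpha$, then $N_1$ is set to the optimal value, and an induction shows it cannot change again, since otherwise the update inequality together with the continuity and concentration bounds would force $2\epsilon'+\epsilon_1>\alpha$, a contradiction; hence $N_3$ keeps doubling toward $L_1$. If instead $\varphi^r\le\varphi_{\mathrm{prev}}+\alpha$, the current $N_1$ is already within $\alpha+2\epsilon'$ of optimal, and since each further update shrinks the optimality gap by more than $\tfrac{2}{3}\alpha$ while the initial gap is at most $\tfrac{4}{3}\alpha$, $N_1$ changes at most once more. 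In both cases $|\varphi(t+\tau_j+1)-\tilde\phi(t+\tau_0+1)|<\gamma:=\alpha+3\epsilon'$ for all $j\ge i$. Weighting over the geometric time until the optimal $N_1$ is first chosen and over the two cases, and letting $K\to\infty$, this yields a near-optimal time fraction at least $\tfrac{L_1}{1+L_1}-\epsilon_2$ with probability $\ge1-\zeta_1$.

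To remove the idealization I would partition the $K$ rounds into \emph{periods}, the $j$th period running until shortly after the concentration inequalities first fail following the first round of that period at which the optimal $N_1$ was chosen. This introduces, per period, a truncated geometric variable $i_{\delta,j}$ with success probability $\delta'=(1-\varrho_\varphi)^2\delta$ and a truncated variable $i_{\varphi,j}$ with the distribution in (\ref{d22})--(\ref{d24}); by Remark~\ref{remark1} the pairs $(i_{\delta,j},i_{\varphi,j})$ are i.i.d.\ across periods once $\|\mathbf{X}_t\|$ is large enough (here the two-level bound $\|\mathbf{X}_t\|>M''_K>M'_K$ is needed, so that the statements hold over the finitely many periods that can occur). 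Replacing the true $N_3$ within a period by the deterministic worst-case function $N_3'$ of $(i_{\delta,j},i_{\varphi,j})$, I would bound $R_K\ge R_K'$, where $R_K'$ is exactly the reward-to-length ratio of a renewal--reward process with inter-renewal length $\lambda_t^C(j)=\sum_i(1+N_3'(j,i))$ and reward $\lambda_r^C(j)=\sum_{i>i_{\delta,j}}N_3'(j,i)$. The renewal--reward strong law gives $R_K'\to R_\infty^C=\mathds{E}[\lambda_r^C(1)]/\mathds{E}[\lambda_t^C(1)]$ a.s., and $R_\infty^C\to R_\infty$ as the truncation $C\to\infty$, so $R_K\ge R_\infty-\epsilon_R$ with probability $\ge1-\varrho_R$ for $K$ large. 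Finally I would assemble the estimate: discard the test intervals and the last incomplete period, use $\|\mathbf{X}_{t_{j,i,N_c+1}}\|\ge(1-\epsilon_3)\|\mathbf{X}_t\|$ and $\varphi\ge\tilde\phi-\gamma$ on the retained update frames, convert the high-probability bound on $R_K$ to an expectation bound by the argument of Lemma~\ref{prelemma}, and absorb $\epsilon_1$, $\epsilon_3$, $\epsilon_4$ (renewal error), $\epsilon_5$ (from $\tau_0/\tau_K$) and $\epsilon_R$ into the single $\epsilon$ by taking $K$ large first and then $\|\mathbf{X}_t\|$ large.

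The step I expect to be the main obstacle is the passage from $\varrho_\varphi=0$ to $\varrho_\varphi>0$: one must define the periods so that the governing pairs are genuinely i.i.d.\ (hence the pre-specified finite number of periods and the nested $M''_K$), must identify the correct worst-case $N_3$ trajectory so that $R_K'$ is a legitimate lower bound on $R_K$ even when the worst-case assumption of Remark~\ref{remark0} is dropped, and must verify that the renewal--reward limit matches $R_\infty$ as defined in (\ref{r:infty}). The delicate induction of Case~1/Case~2 --- that $N_1$ stabilizes at a near-optimal value and $\varphi$ thereafter stays within $\gamma$ of $\tilde\phi$ --- is the other place where the hypothesis $6\theta_\varphi<\alpha$ is genuinely used and must be handled carefully.
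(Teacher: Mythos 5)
Your proposal is correct and mirrors the paper's own proof essentially step for step: the same continuity/concentration setup with $\epsilon_{1}<\tfrac{1}{2}(\tfrac{\alpha}{6}-\theta_{\varphi})$, the same Case~1/Case~2 induction yielding $|\varphi-\tilde{\phi}|<\gamma=\alpha+3\epsilon'$, the same period construction with truncated $(i_{\delta,j},i_{\varphi,j})$ and worst-case $N'_{3}$, the same renewal--reward limit $R^{C}_{\infty}\to R_{\infty}$, and the same final assembly with the epsilons absorbed by taking $K$ large first and then $\|\mathbf{X}_{t}\|$ large. No substantive differences to report.
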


\begin{Lemma}\label{lemma:l5}
Let $0 \leq \delta < 1$. We have
\begin{align*}
\prod_{i=1}^{\infty} (1-\delta^{i}) > \exp\big(-(\frac{\delta}{(1-\delta)}+\frac{\delta^{2}}{(1-\delta)^{2}(1-\delta^{2})})\big)>0.
\end{align*}
\begin{proof}
First note that by Taylor's theorem, we have
\begin{align*}
\ln(1-\delta)\geq -\big(\delta+\frac{\delta^{2}}{2(1-\delta)^{2}}\big).
\end{align*}
Taking $\ln$ and then $\exp$ of the product term in the lemma, and using the above inequality, we can easily
show that
\begin{align*}
&\prod_{i=1}^{\infty} (1-\delta^{i}) \geq
\exp \big(-\sum_{i=1}^{\infty}\delta^{i}+\sum_{i=1}^{\infty}\frac{\delta^{2i}}{(1-\delta)^{2}} \big)
\\
\nonumber&=\exp\big(-(\frac{\delta}{(1-\delta)}+\frac{\delta^{2}}{(1-\delta)^{2}(1-\delta^{2})})\big)>0,\nonumber
\end{align*}
proving the lemma.
\end{proof}
\end{Lemma}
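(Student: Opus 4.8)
The plan is to pass to logarithms and reduce everything to a one-line estimate of $\ln(1-x)$ whose sum is a geometric series. For $0<\delta<1$ every factor $1-\delta^{i}$ lies in $(0,1)$, so $\prod_{i=1}^{\infty}(1-\delta^{i})=\exp\big(\sum_{i=1}^{\infty}\ln(1-\delta^{i})\big)$ once we know the series converges; it therefore suffices to lower-bound $\sum_{i=1}^{\infty}\ln(1-\delta^{i})$. (The case $\delta=0$ is degenerate: both sides equal $1$, so there the claim should be read with equality.)

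First I would record the elementary inequality
\[
\ln(1-x) > -x-\frac{x^{2}}{2(1-x)^{2}},\qquad 0<x<1,
\]
obtained from Taylor's theorem with Lagrange remainder, $\ln(1-x)=-x-\frac{x^{2}}{2(1-\xi)^{2}}$ for some $\xi\in(0,x)$, together with $0<1-x<1-\xi$; weakening the coefficient $\tfrac12$ to $1$ only makes the bound cruder and brings it into the exact form appearing in the statement.

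Next I would substitute $x=\delta^{i}$ and use the crude but $i$-independent bound $1-\delta^{i}\ge 1-\delta$ for $i\ge1$, so that $\ln(1-\delta^{i}) > -\delta^{i}-\frac{\delta^{2i}}{(1-\delta)^{2}}$. Summing over $i\ge1$ and evaluating $\sum_{i\ge1}\delta^{i}=\frac{\delta}{1-\delta}$ and $\sum_{i\ge1}\delta^{2i}=\frac{\delta^{2}}{1-\delta^{2}}$ gives, in particular, absolute convergence of $\sum_{i\ge1}\ln(1-\delta^{i})$ (so the infinite product is a well-defined positive number and the interchange of $\ln$ with the product is legitimate), and
\[
\sum_{i=1}^{\infty}\ln(1-\delta^{i}) > -\frac{\delta}{1-\delta}-\frac{\delta^{2}}{(1-\delta)^{2}(1-\delta^{2})}.
\]
Exponentiating yields the claimed bound, and its right-hand side is manifestly positive, being $\exp$ of a finite real number.

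There is no real obstacle here; the only things to watch are (i) choosing the remainder estimate for $\ln(1-x)$ cleverly enough that, after bounding $(1-\delta^{i})^{2}$ below by the $i$-free quantity $(1-\delta)^{2}$, the two resulting sums are plain geometric series, and (ii) keeping the inequality strict (the Lagrange remainder delivers ``$>$'', not merely ``$\ge$''), so that the strict inequality in the statement holds for all $0<\delta<1$.
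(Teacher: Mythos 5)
Your proof is correct and follows essentially the same route as the paper: the Taylor/Lagrange bound $\ln(1-x)\geq -x-\frac{x^{2}}{2(1-x)^{2}}$ applied at $x=\delta^{i}$, the $i$-independent bound $1-\delta^{i}\geq 1-\delta$, and summation of the two geometric series before exponentiating. Your added care about the strictness of the remainder estimate, the convergence of $\sum_{i}\ln(1-\delta^{i})$, and the degenerate case $\delta=0$ only tightens the write-up; no substantive difference from the paper's argument.
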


\bibliographystyle{IEEEtran}
\bibliography{IEEEabrv,citations}

% that's all folks
\end{document}